\DeclareSIUnit[]{\pu}{p.u.}
\DeclareSIUnit[]{\VA}{VA}
\DeclareSymbolFont{bbold}{U}{bbold}{m}{n}
\DeclareSymbolFontAlphabet{\mathbbold}{bbold}
\newcommand{\diag}[1]{\ensuremath{\mathrm{diag}(#1)}}
\newcommand{\range}[1]{\ensuremath{\mathrm{range}(#1)}}
\DeclarePairedDelimiterX\Set[2]{\lbrace}{\rbrace}%
{ #1 \,\delimsize| \,\mathopen{} #2 }
\DeclareMathOperator\sign{sign}
\newcommand{\real}[0]{\mathbb R}
\newcommand*\circled[1]{\tikz[baseline=(char.base)]{\node[shape=circle,draw,inner sep=0.05pt] (char) {#1};}}
\newtheoremstyle{bfnote}%
{}{}%
{\itshape}{}%
{\bfseries}{.}%
{ }%
{\thmname{#1}\thmnumber{ #2}\thmnote{ (#3)}}
\theoremstyle{bfnote}
\newtheorem{thm}{Theorem}
\newtheorem{cor}{Corollary}
\newtheorem{rem}{Remark}
\newtheorem{lem}{Lemma}
\newtheorem{ass}{Assumption}
\setlist[itemize]{leftmargin=*}
\begin{document}

\title{Stable Reinforcement Learning for Optimal Frequency Control: A Distributed Averaging-Based Integral Approach}

\author{Yan Jiang, Wenqi Cui, Baosen Zhang, and Jorge Cort\'es
        % <-this % stops a space
%\thanks{This paper was produced by the IEEE Publication Technology Group. They are in Piscataway, NJ.}% <-this % stops a space
%\thanks{Manuscript received April 19, 2021; revised August 16, 2021.}}
\thanks{Y. Jiang, W. Cui, and B. Zhang are with the Department
of Electrical and Computer Engineering, University of Washington, Seattle, WA 98195. Emails: {\tt {\{jiangyan,wenqicui,zhangbao\}@uw.edu}.} J. Cort\'es is with the Department of Mechanical and Aerospace Engineering, University of California, San Diego, CA 92093. Email: {\tt{cortes@ucsd.edu}.}}
\thanks{Y. Jiang, W. Cui, and B. Zhang are partially supported by NSF grants ECCS-1807142, ECCS-1942326 and ECCS-2153937. J. Cor\'tes is partially supported by NSF grant ECCS-1947050.}}

% The paper headers
% \markboth{Journal of \LaTeX\ Class Files,~Vol.~14, No.~8, August~2022}%
% {Shell \MakeLowercase{\textit{et al.}}: A Sample Article Using IEEEtran.cls for IEEE Journals}

%\IEEEpubid{0000--0000/00\$00.00~\copyright~2021 IEEE}
% Remember, if you use this you must call \IEEEpubidadjcol in the second
% column for its text to clear the IEEEpubid mark.

\maketitle

\begin{abstract}
%The stable operation of power systems relies on balanced power supply and demand across the network. Frequency deviations from the nominal value reflect power imbalances. Hence,
Frequency control plays a pivotal role in reliable power system operations. It is conventionally performed in a hierarchical way that first rapidly stabilizes the frequency deviations and then slowly recovers the nominal frequency. However, as the generation mix shifts from synchronous generators to renewable resources, power systems experience larger and faster frequency fluctuations due to the loss of inertia, which adversely impacts the frequency stability. This has motivated active research in algorithms that jointly address frequency degradation and economic efficiency in a fast timescale, among which the distributed averaging-based integral (DAI) control is a notable one that sets controllable power injections directly proportional to the integrals of
frequency deviation and economic inefficiency signals. Nevertheless, DAI do not typically consider the transient
performance of the system following power disturbances and has been restricted to quadratic operational cost functions. This manuscript aims to leverage nonlinear optimal controllers to simultaneously achieve optimal transient frequency control and find the most economic power dispatch for frequency restoration. To this end, we integrate reinforcement learning (RL) to the classic DAI, which results in RL-DAI. Specifically, we use RL to learn a neural network-based control policy mapping from the integral variables of DAI to the controllable power injections which provides optimal transient frequency control, while DAI inherently ensures the frequency restoration and optimal economic dispatch.
%Using a distributed averaging-based integral mechanism, the power injections are set directly proportional to the integrals of frequency deviation and economic inefficiency signals. 
Compared to existing methods, we provide provable guarantees on the stability of the learned controllers and extend allowable cost functions to a much larger class. Simulations on the $39$-bus New England system illustrate our results.
\end{abstract}

\begin{IEEEkeywords}
Frequency control, Lyapunov stability, reinforcement learning, steady-state and transient performance.
\end{IEEEkeywords}

\section{Introduction}
\IEEEPARstart{T}{he} key to the normal operation of a power system is the balance between electric power supply and demand over the network~\cite{Kirschen2019eco}. For instance, the main cause of the 2021 Texas power crisis is that the deficient supply of power due to frozen equipment could not meet the high demand for electricity in cold weather. A system frequency deviation from its nominal value is a reflection of a power imbalance~\cite{Eto2010lbnl}, which makes frequency control a vital task of grid operators. Traditionally, this task is performed in a hierarchical structure composed of three layers with timescale separation: primary---droop control ($\SI{<20}{\second}$), secondary---frequency restoration ($\SI{30}{\second}$--$\SI{10}{\minute}$), and tertiary---economic dispatch ($\SI{>15}{\minute}$)~\cite{Eto2010lbnl}. 

Nowadays, power systems are experiencing a change in the mix of generation, where conventional synchronous generators are gradually being replaced by renewable energy sources like solar and wind energy~\cite{benjamin2017}. It is anticipated that the renewable share of the electricity generation mix in the United States will double from $21\%$ in 2020 to $42\%$ in 2050~\cite{eia2021outlook}. Intermittent renewable sources are typically inverter-interfaced, which may adversely affect the robustness of the frequency dynamics due to the loss of inertia~\cite{nrel2013renew}. This exposes power systems to larger and faster frequency fluctuations than before, which has motivated
active research on flexible distributed frequency control schemes that can break the hierarchy by addressing simultaneously frequency degradation
%
%\marginJC{We've talked about 'frequency dynamics' and 'frequency stability'. Is that what you mean by 'frequency security'? I wouldn't introduce a new term}
%
and economic efficiency at a fast timescale.

A key challenge in frequency control is that the power imbalances across the network are not explicitly known. A number of studies have proposed distributed algorithms to overcome this challenge. The works in~\cite{Zhao2015acc, li2016tcns, mallada2017optimal, DORFLER2017Auto, SCHIFFER2017auto, weitenberg2018exponential, Wang2019tsg, you2021tac,TS-AC-CDP-AJS-JC:21-auto} focus on optimizing the steady-state frequency and economic performance using a principled design of fixed updating rules involving agent communication for control dynamics. The approaches mainly fall in two categories. The first category~\cite{li2016tcns, mallada2017optimal, Wang2019tsg, you2021tac,TS-AC-CDP-AJS-JC:21-auto} rests on a primal-dual interpretation of power system dynamics under a properly designed optimization problem. This approach, however, always requires the estimation of certain system parameters. The second category~\cite{Zhao2015acc, DORFLER2017Auto, SCHIFFER2017auto, weitenberg2018exponential} builds upon various consensus algorithms to converge to an equilibrium with nominal frequency and economic efficiency. A notable example in this category is the distributed averaging-based integral (DAI) mechanism~\cite{Zhao2015acc, Johannes2016ecc, SCHIFFER2017auto, weitenberg2018exponential}, where the controllable
power injections are directly proportional to the integrals of frequency deviation and economic inefficiency signals. A key caveat to this approach is that DAI control has been so far restricted to quadratic cost functions. 

The works above focus on the optimization of the steady-state performance and do not typically consider the transient performance along the system trajectories following power disturbances. In fact, the optimization of transient performance is a challenging problem due to the nonlinearity of power dynamics and the uncertainty in power disturbances. Reinforcement learning (RL)~\cite{Sutton2018RL} is a powerful tool for learning from interactions with uncertain environments and determining how to map situations to actions so that a desired performance is optimized. By virtue of the above feature, RL has emerged~\cite{Ernst2004tps,Chen_2022tsg}  as an effective instrument to address the optimal transient frequency control problem in nonlinear power systems under unknown power disturbances. Nevertheless, the Achilles' heel of standard RL algorithms is their lack of provable stability guarantees, which presents a significant barrier to its practical implementation for the operation of power systems. In fact, many works~\cite{Ye2011tps, Singh2017tie, Yan2020tps, Chen2021tii}
optimize the transient performance by learning control policies that exhibit good performance against data but without any provable guarantees on steady-state performance. The recent paper~\cite{cui2022tps} on RL for optimal primary frequency control proposes a way to address the stability issue by identifying a set of properties that make a control design stabilizing and then restricting the search space of neural network-based controllers. In this paper, we extend this idea to achieve provable guarantees on frequency restoration and economic efficiency at the steady-state. 
%
%\marginJC{Since we're doing the lit review here, I think this is the best place to introduce the idea of~\cite{cui2022tps} (instead of later). We can say something like "\cite{cui2022tps} proposes a way to address these shortcomings by identifying a set of properties that make a control design stabilizing and then restricting the search space of RL to such controllers. In this way, the obtained design enjoys provable guarantees on stability" or something like that. We can point out that here we build on this approach.}
%

With the aim of filling the gap between the optimization of steady-state and transient frequency control performance, we propose to unify
these two perspectives by integrating RL into DAI control. In our approach, we employ RL to seek optimal control policies in terms of the transient performance as a map from the integral variables of DAI to the controllable power injections, while DAI inherently ensures the optimal steady-state performance. This results in a nonlinear optimal frequency controller which we term \emph{RL-DAI control} that generalizes the standard DAI approach.
%
% \marginJC{Maybe we can come up with a more informative name? "Generalized" is too generic, i think. How about RL-DAI or something like that?}
%
Specifically, the generalization is manifold:
\begin{itemize}
\item Unlike the classic DAI control that only addresses quadratic operational cost functions,  RL-DAI control admits any strictly convex cost functions
whose gradients are identical up to heterogeneous scaling
factors;
\item Unlike RL methods that do not  provide stability guarantees, RL-DAI control encodes the stabilization requirement on the control policy as mild conditions on its continuity and monotonicity properties. Such conditions can be easily realized by an ingenious parameterization of the control policy as a monotonic neural network,
% to be
which is trained by a RL algorithm based on recurrent neural networks (RNNs);
%
%\marginJC{This last sentence is the one that I'd move to the discussion earlier.}
%
\item RL-DAI control jointly optimizes both steady-state and transient frequency control performance by leveraging the added degrees of freedom in tuning parameters that characterize the nonlinear control policy.
\end{itemize}

The rest of this manuscript is organized as follows. Section~\ref{Sec:modelling}
describes the power system model and formalizes the optimal frequency control problem. Section~\ref{sec:Generalized-DAI} describes the proposed generalized DAI control and shows how it guarantees the asymptotic convergence of the closed-loop system to the equilibrium that achieves the steady-state
performance objectives. Section~\ref{sec:RL} illustrates how to integrate RL with DAI control such that the transient performance can be optimized without jeopardizing stability. Section~\ref{sec:simu} validates our results
through detailed simulations. We gather our conclusions and ideas for future work in Section~\ref{sec:conclusion}.

\section{Modelling Approach and Problem Statement}\label{Sec:modelling}
In this section, we describe the power system model used in this paper for analysis and the frequency control problem we aim to address.

\subsection{Power System Model}
We consider\footnote{Throughout this manuscript, vectors are denoted in lower case bold and matrices are denoted in upper case bold, while scalars are unbolded, unless otherwise specified. Also, $\mathbbold{1}_n, \mathbbold{0}_n \in \real^n$ denote the vectors of all ones and all zeros, respectively.} a $n$-bus power system whose topology can be characterized by a weighted undirected connected graph $\left(\mathcal{V},\mathcal{E} \right)$, where buses indexed by $i,j \in \mathcal{N} :=\{1,\dots, n\} $ are linked through transmission lines denoted by unordered pairs $\{i,j\} \in \mathcal{E} \subset \Set*{\{i,j\}}{i,j\in\mathcal{N},i\not=j}$.

Within the set of buses $\mathcal{N}$, we consider two subsets: the buses with generators\footnote{The generator buses can have collocated loads}  $\mathcal{G}$ and buses that are purely loads $\mathcal{L}$. Generator buses are described with differential equations, while load buses are treated as frequency-sensitive loads governed by algebraic equations. 
% disjoint union of the set of generator buses $\mathcal{G}$ and the set of load buses $\mathcal{L}$, i.e., $\mathcal{N}=\mathcal{G}\uplus\mathcal{I}$.
%
% \marginJC{I'm not familiar with the notation $\uplus$. How is it different from just $\cup$?}
%
More precisely, given net power injections $\boldsymbol{p} := \left(p_{i}, i \in \mathcal{N} \right) \in \real^n$
%
%\marginJC{How about $\{p_{i}\}_{i \in \mathcal{N}} \subset \real^n$ instead?}
%
on each bus, the dynamics of the voltage angle $\theta_i$ (in $\SI{}{\radian/s}$) and the frequency deviation $\omega_i$ (in $\SI{}{\pu}$) from the nominal frequency $f_0$ ($\SI{50}{\hertz}$ or $\SI{60}{\hertz}$ depending on the particular system) evolve according to
% \begin{subequations}\label{eq:sys-dyn-theta}
% \begin{align}
%     \dot{\theta}_i =&\ 2 \pi f_0 \omega_i\,,\!\!\!\!&&\forall i\in\mathcal{N}\!\,,\\
%     m_i \dot{\omega}_i \!=\!& -
%     \alpha_i \omega_i\!-\!\! \sum_{j=1}^n\! v_iv_jB_{ij}\sin{\left(\theta_i\!-\!\theta_j\right)}\! +\! p_{i} \!+\! u_{i}\,,\!\!\!\!&&\forall i\in\mathcal{G}\,,\\
%     0 =& -
%     \alpha_i \omega_i-\!\! \sum_{j=1}^n\! v_iv_jB_{ij}\sin{\left(\theta_i\!-\!\theta_j\right)} \!+\! p_{i} \!+\! u_{i}\,,\!\!\!\!&&\forall i\in\mathcal{L}\,,\label{eq:bus-dyn}\end{align}
% \end{subequations}
\begin{subequations}\label{eq:sys-dyn-theta}
\begin{align}
    \dot{\theta}_i \!=&\!\ 2 \pi f_0 \omega_i\,,\!\!\!\!\!\!&&\!\forall i\!\in\!\mathcal{N}\!\!\,,\\
    m_i \dot{\omega}_i \!=\!& -\!
    \alpha_i \omega_i\!-\!\! \!\sum_{j=1}^n\! v_iv_jB_{ij}\!\sin{\left(\theta_i\!-\!\theta_j\right)}\! +\! p_{i} \!+\! u_{i}\,,\!\!\!\!\!\!&&\forall i\!\in\!\mathcal{G}\!\,,\\
    0 \!=&\! -\!
    \alpha_i \omega_i\!-\!\!\! \sum_{j=1}^n\! v_iv_jB_{ij}\!\sin{\left(\theta_i\!-\!\theta_j\right)} \!+\! p_{i} \!+\! u_{i}\,,\!\!\!\!\!\!&&\forall i\!\in\!\mathcal{L}\!\,,\end{align}
\end{subequations}
where the parameters are defined as: $m_i:=2H_i>0$ --- generator inertia constant (in $\SI{}{\second}$), $\alpha_i>0$ --- frequency sensitivity coefficient from generator or load (in $\SI{}{\pu}$) depending on whether $i \in \mathcal{G}$ or $i \in \mathcal{L}$, $v_i>0$ --- voltage magnitude (in $\SI{}{\pu}$), and $B_{ij}$ --- susceptance (in $\SI{}{\pu}$). The susceptances define coupling between buses in the system and satisfies $B_{ij}=B_{ji}>0$ if $\{i,j\} \in\mathcal{E}$ and $B_{ij}=B_{ji}=0$ if $\{i,j\} \not\in\mathcal{E}$. Finally, $u_{i}$ is a controllable power injection to be designed for secondary frequency control.

\begin{rem}[Model assumptions]\label{rem:model}
We make the following assumptions: 
\begin{itemize}
\item Lossless lines: $\forall \{i,j\}\in\mathcal{E}$, the line resistance is zero.
\item Constant voltage profile: $\forall i \in \mathcal{N}$, the bus voltage magnitude $v_i$ is constant. 
\item Decoupling: Reactive power flows do not affect bus voltage angles.
\item $\forall \{i,j\} \in\mathcal{E}$, the equilibrium bus voltage angle difference is within $\pm \pi/2$, i.e., $\forall \{i,j\} \in\mathcal{E}$, $|\theta_i^\ast-\theta_j^\ast|\in [0,\pi/2)$.
\end{itemize}
These assumptions are standard assumption made in the literature and are used in most works that study for frequency control in transmission networks (for more information, see~\cite{Zhao:2014bp, mallada2017optimal}).
\end{rem}

\subsection{Frequency Control}\label{ssec:goal}
The basic goal of frequency control is to keep the frequency of the power system close to its nominal value. We now illustrate this in more detail by following the same line of argument as in~\cite{Zhao2015acc, DORFLER2017Auto, weitenberg2018exponential, weitenberg2018robust}. 

Since the frequency dynamics depend only on the phase angle differences, for easy of analysis, we express the dynamics in center-of-inertia coordinates~\cite{weitenberg2018exponential, weitenberg2018robust} by making the following change of variables:
$$\delta_i:=\theta_i-\dfrac{1}{n}\sum_{j=1}^{n} \theta_j\,,\qquad\forall i\in\mathcal{N}\,.$$
Then, the dynamics in \eqref{eq:sys-dyn-theta} can be rewritten and stacked into a vector form as:
\begin{subequations}\label{eq:sys-dyn-vec}
\begin{align}
    \dot{\boldsymbol{\delta}} =&\ 2 \pi f_0\left(I_n-\frac{1}{n}\mathbbold{1}_n\mathbbold{1}_n^T\right) \boldsymbol{\omega}\,,\\
    \boldsymbol{M} \dot{\boldsymbol{\omega}}_\mathcal{G} =& -
    \boldsymbol{A}_\mathcal{G} \boldsymbol{\omega}_\mathcal{G}- \nabla_\mathcal{G} U (\boldsymbol{\delta}) + \boldsymbol{p}_\mathcal{G} + \boldsymbol{u}_\mathcal{G}\,,\\
    \mathbbold{0}_{|\mathcal{L}|} =& -
    \boldsymbol{A}_\mathcal{L} \boldsymbol{\omega}_\mathcal{L}- \nabla_\mathcal{L} U (\boldsymbol{\delta}) + \boldsymbol{p}_\mathcal{L} + \boldsymbol{u}_\mathcal{L}\,,\end{align}
\end{subequations}
with $$U(\boldsymbol{\delta}):=- \dfrac{1}{2}\sum_{i=1}^n \sum_{j=1}^n v_i v_j B_{ij}\cos{\left(\delta_i-\delta_j\right)}\,,$$
where $\boldsymbol{\delta}:=\left(\delta_i, i \in \mathcal{N} \right) \in \real^n$, $\boldsymbol{\omega}:=\left(\omega_i, i \in \mathcal{N} \right) \in \real^n$, $\boldsymbol{p}:=\left(p_i, i \in \mathcal{N} \right) \in \real^n$, $\boldsymbol{u} := \left(u_{i},i \in \mathcal{N} \right) \in \real^n$, $\boldsymbol{M} := \diag{m_i, i \in \mathcal{N}} \in \real^{n \times n}$, $\boldsymbol{A}_\mathcal{G} := \diag{\alpha_{i}, i \in \mathcal{G}} \in \real^{|\mathcal{G}| \times |\mathcal{G}|}$, $\boldsymbol{A}_\mathcal{L} := \diag{\alpha_{i}, i \in \mathcal{L}} \in \real^{|\mathcal{L}| \times |\mathcal{L}|}$. Here, a vector with a set as subscript denotes the subvector composed only of the elements from that set, e.g., $\boldsymbol{\omega}_\mathcal{G}:=\left(\omega_i, i \in \mathcal{G} \right) \in \real^{|\mathcal{G}|}$.\footnote{For a finite set $\mathcal{G}$, $|\mathcal{G}|$ denotes the cardinality of $\mathcal{G}$.}

The function $U(\boldsymbol{\delta})$ allows us to compactly summarize the power flow in the system, since the vector of power flows between the buses are given by $\nabla U$. The equilibria of~\eqref{eq:sys-dyn-vec} satisfy
\begin{subequations}\label{eq:sys-dyn-vec-ss}
\begin{align}
    \boldsymbol{\omega}^\ast =&\ \frac{1}{n}\mathbbold{1}_n\mathbbold{1}_n^T \boldsymbol{\omega}^\ast\,,\label{eq:omega-i-eq}\\
    \mathbbold{0}_{|\mathcal{G}|} =& -
    \boldsymbol{A}_\mathcal{G} \boldsymbol{\omega}_\mathcal{G}^\ast- \nabla_\mathcal{G} U (\boldsymbol{\delta}^\ast) + \boldsymbol{p}_\mathcal{G} + \boldsymbol{u}_\mathcal{G}^\ast\,,\label{eq:delta-i-eq-g}\\
    \mathbbold{0}_{|\mathcal{L}|} =& -
    \boldsymbol{A}_\mathcal{L} \boldsymbol{\omega}_\mathcal{L}^\ast- \nabla_\mathcal{L} U (\boldsymbol{\delta}^\ast) + \boldsymbol{p}_\mathcal{L} + \boldsymbol{u}_\mathcal{L}^\ast\,.\label{eq:delta-i-eq-l}\end{align}
\end{subequations}
Clearly, \eqref{eq:omega-i-eq} implies that there exists a scalar $\omega^\ast$ such that $\boldsymbol{\omega}^\ast=\mathbbold{1}_n\omega^\ast$. Then, \eqref{eq:delta-i-eq-g} and \eqref{eq:delta-i-eq-l} can be combined into 
\begin{align}\label{eq:equli-sync-omega}
 \mathbbold{0}_n =& -
    \boldsymbol{A} \mathbbold{1}_n\omega^\ast- \nabla U (\boldsymbol{\delta}^\ast) + \boldsymbol{p} + \boldsymbol{u}^\ast
    \end{align}
with $\boldsymbol{A} := \diag{\alpha_{i}, i \in \mathcal{N}} \in \real^{n \times n}$. Moreover, after premultiplying \eqref{eq:equli-sync-omega} by $\mathbbold{1}_n^T$, we can characterize $\omega^\ast$ as
\begin{align}\label{eq:omega-syn}
\omega^\ast = \dfrac{\sum_{i=1}^n p_i+\sum_{i=1}^n u_i^\ast}{\sum_{i=1}^n \alpha_i}\,,
\end{align}
where we have used the zero net power flow balance  $\mathbbold{1}_n^T\nabla U (\boldsymbol{\delta})=0$ of a lossless power system.

Observe from \eqref{eq:omega-syn} that, in the absence of frequency control, the system undergoing power disturbances $\boldsymbol{p}$ will synchronize to a nonzero frequency deviation, i.e., $\omega^{\ast} \neq 0$, since $\omega^{\ast} = 0$ only if $\sum_{i=1}^n p_i+\sum_{i=1}^n u_i^\ast=0$. Therefore, the main goal of frequency control is to regulate the frequency such that $\omega^{\ast} = 0$ by providing appropriate controllable power injections $\boldsymbol{u}$ to meet power disturbances $\boldsymbol{p}$.
Note that the existence of equilibrium points $(\boldsymbol{\delta}^\ast, \mathbbold{1}_n\omega^\ast)$ such that \eqref{eq:equli-sync-omega} holds for $\omega^{\ast} = 0$ is equivalent to the feasibility of the power flow equation
\begin{align}\label{eq:equli-sync-omega0}
 \nabla U (\boldsymbol{\delta}^\ast) =&  \ \boldsymbol{p} + \boldsymbol{u}^\ast\,,
    \end{align}
that is, there exist some $\boldsymbol{\delta}^*$ solving \eqref{eq:equli-sync-omega0}.  This is a standing assumption of this manuscript. In fact, if  \eqref{eq:equli-sync-omega0} is feasible, then there are many solutions it.  Consequently, we seek to select among the feasible solutions by optimizing certain performance metrics, which we describe next.

\subsection{Performance Assessment}\label{ssec:metric}
For the design of frequency control strategies, not only frequency performance but also economic factors must be taken into account. 
%Therefore, both of them should be considered when designing a control law for $\boldsymbol{u}$. 
Moreover, the secure and efficient operation of power systems relies on properly controlled frequency and cost in both slow and fast timescales. Thus, we now introduce the frequency and economic performance metrics used in this manuscript based on different timescales. 

\subsubsection{Steady-State Performance Metrics} Our control objective in the long run is to achieve the nominal frequency restoration, i.e., $\boldsymbol{\omega}^\ast=\mathbbold{0}_n$, as well as the lowest steady-state aggregate operational cost $C(\boldsymbol{u}^\ast):=\sum_{i=1}^n C_i(u_i^{\ast})$, where the cost function $C_i(u_i)$ quantifies either the generation cost on a generator bus or the user disutility on a load bus for contributing $u_i$. This results in the following constrained optimization problem called \emph{optimal steady-state economic dispatch problem}:
\begin{subequations}\label{eq:opt-ss}
	\begin{eqnarray}
		\label{eq:edp2.a}
		\min_{\boldsymbol{u}^\ast} &&      C(\boldsymbol{u}^\ast):=\sum_{i=1}^n C_i(u_i^{\ast})    \\
		\label{eq:edp2.b}
		\mathrm{s.t.} &&  
		\sum_{i=1}^n p_i+\sum_{i=1}^n u_i^\ast=0 \,,
	\end{eqnarray}	\label{eq:edp2}%
\end{subequations}
where \eqref{eq:edp2.b} is a necessary constraint on the steady-state controllable power injections $\boldsymbol{u}^\ast$ in order to achieve $\boldsymbol{\omega}^\ast=\mathbbold{0}_n$ as discussed in Section~\ref{ssec:goal}. We adopt the standard assumption that  the cost function $C_i(u_i)$ is strictly convex and continuously differentiable~\cite{DORFLER2017Auto, you2021tac}. Then, the optimization problem \eqref{eq:opt-ss} has a convex objective function and an affine equality constraint. Thus, by the Karush-Kuhn-Tucker conditions~\cite[Chapter 5.5.3]{Boyd2004convex}, $\boldsymbol{u}^\ast$ is the unique minimizer of \eqref{eq:opt-ss} if and only if it ensures \emph{identical marginal costs}~\cite{Zhao2015acc, Johannes2016ecc, DORFLER2017Auto, SCHIFFER2017auto, weitenberg2018robust}, i.e., 
\begin{align}\label{eq:id-marginal}
\nabla C_i (u_i^\ast) = \nabla C_j (u_j^\ast)\,,\qquad\forall i,j\in\mathcal{N}\,.
\end{align}

%where the instantenous cost function $C_i(u_i)$ quantifies either the generation cost on a generator bus or the user disutility on a load bus for its participation in frequency control.

%where the widely used assumption that the instantaneous cost function $C_i(u_i)$ is strictly convex and continuously differentiable~\cite{DORFLER2017Auto, you2021tac} with respect to $u_i$ is adopted.

\subsubsection{Transient Performance Metrics} Following sudden major power disturbances, the transient frequency dip can be large in the first few seconds, especially in low-inertia power systems. This may trigger undesired protection measures and even cause cascading failures. Thus, besides the steady-state performance, one should also pay attention to the transient frequency performance with moderate economic cost. With this aim, we define the following transient performance metrics evaluated along the trajectories of the system \eqref{eq:sys-dyn-theta}:
\begin{itemize}
\item \emph{Frequency Nadir} is the maximum frequency deviation from the nominal frequency on each bus during the transient response, i.e., 
\begin{equation}\label{eq:Nadir}
	\|\omega_i\|_\infty := \max_{t\geq0} |\omega_i(t)|\,.
\end{equation}
\item \emph{Economic cost} measures the average cost during time horizon $T$, i.e.
%either the generation cost on a generator bus or the user disutility on a load bus for its participation in frequency control during a time horizon $T$, i.e.,
\begin{equation*}
\bar{C}_{i,T} :=  \dfrac{1}{T}\int_{0}^T C_i(u_i(t)) \mathrm{d}t\,. \end{equation*}
\end{itemize}
Then the \emph{optimal transient frequency control problem} becomes:
\begin{subequations}\label{eq:opt-ts}
	\begin{eqnarray}
		\label{eq:opt-ts-obj}
		\min_{\boldsymbol{u}} &&      \sum_{i \in \mathcal{G}} \|\omega_i\|_\infty +\rho\sum_{i=1}^n\bar{C}_{i,T}   \\
		\label{eq:opt-ts-con}
		\mathrm{s.t.} &&  
		\boldsymbol{u}\ \mathrm{stabilize}\ \eqref{eq:sys-dyn-theta}\,,
	\end{eqnarray}
\end{subequations}
%
%\marginJC{A bit odd to write "(1) and u stabilizes (1)". Maybe just "u stabilizes (1)"?}
%
where $\rho>0$ is the coefficient for tradeoff between the frequency performance and the economic cost. Note that we impose the stability requirement on $\boldsymbol{u}$ as a hard constraint in the optimization problem \eqref{eq:opt-ts}, which will play a pivotal role in its design.

\subsection{Reinforcement Learning for Frequency Control}
Our goal is to design an optimal stabilizing controller that brings the system \eqref{eq:sys-dyn-theta} to an equilibrium that restores the nominal frequency, i.e., $\boldsymbol{\omega}^\ast=\mathbbold{0}_n$, and solves the optimal steady-state economic dispatch problem  \eqref{eq:opt-ss}, while solving the optimal transient frequency control problem \eqref{eq:opt-ts} at the same time. A good starting point to achieve our steady-state control goal is the well-known DAI control. However, it is not straightforward how to also optimize the transient performance. This is hard to be done purely by conventional optimization methods since power systems are nonlinear and power disturbances are unknown. Therefore, we would like to integrate RL into DAI to jointly optimize steady-state and transient performance. 
%
%\marginJC{I think there is a lot of stuff in this paragraph that requires context. For instance, we talk about the "well-known DAI", but we have not introduced it yet. I think we can move most of this discussion to the literature review of the intro, and instead here focus on stating our problem statement/approach to it with the mathematical elements known to the reader now, which are what we have introduced in Sections I.A-C.}
%
% Existing works on secondary frequency control typically focus on either optimizing the steady-state performance by a principled design of fixed updating rules for control dynamics~\cite{Zhao2015acc, DORFLER2017Auto, SCHIFFER2017auto, weitenberg2018exponential} or optimizing the transient performance by a pure learning of good control policies without provable steady-state performance. 
% Therefore, we would like to unify these two perspectives by integrating reinforcement learning (RL) into the well-known distributed averaging-based integral (DAI) control. Basically, RL is employed to seek optimal control policies in terms of the transient performance as a map from the integral variables of DAI to the controllable power injections $\boldsymbol{u}$, while DAI inherently ensures the optimal steady-state performance. Moreover, the closed-loop system stability is provably guaranteed by the special parameterization of $\boldsymbol{u}$.

\section{Generalized Distributed Averaging-Based Integral Control}\label{sec:Generalized-DAI}
We start this section by briefly reviewing distributed averaging-based integral (DAI) control and then generalizing it to account for nonlinear control laws. We show that the closed-loop system is stable as long as the nonlinearity satisfies conditions on continuity and monotonicity. Interestingly, the steady-state performance objectives are still achieved under our proposed generalized DAI even for nonquadratic cost functions provided that the gradients of the cost functions on individual buses satisfy a certain condition, which we describe below.

\subsection{Review and Generalization of DAI}
DAI control~\cite{Zhao2015acc, Johannes2016ecc, SCHIFFER2017auto, weitenberg2018exponential} is an established choice of frequency control strategy  to meet the steady-state performance objectives discussed in Section~\ref{ssec:metric}.  However, most works restrict the cost functions to be quadratic, i.e., 
\begin{align}\label{eq:cost-quad}
    C_i(u_i) =\dfrac{1}{2}c_i u_i^2\,,\qquad\forall i\in\mathcal{N}\,,
\end{align}
where $c_i>0$ is the cost coefficient. With the cost functions given by \eqref{eq:cost-quad}, the power injections $\boldsymbol{u}$ in the system \eqref{eq:sys-dyn-vec} under DAI are
\begin{subequations}\label{eq:DAI}
\begin{align}\label{eq:u(s)-linear}
    u_i(s_i) = k_i s_i\,,\qquad\forall i\in\mathcal{N}\,,
\end{align}
with 
\begin{align}\label{eq:DAI-s}
    \dot {s}_{i} =&-2 \pi f_0\omega_i -c_i \sum_{j=1}^n Q_{ij}\left(c_i u_i-c_j u_j\right)\,,
\end{align}
\end{subequations}
where $k_i>0$ is a tunable control gain and $Q_{ij}\geq0$ is the weight associated with an undirected connected communication graph $\left(\mathcal{V},\mathcal{E}_\mathrm{Q} \right)$ (not necessarily the same as the physical system) such that $Q_{ij}=Q_{ji}>0$ if buses $i$ and $j$ communicate, i.e., $\{i,j\} \in\mathcal{E}_\mathrm{Q} \subset \Set*{\{i,j\}}{i,j\in\mathcal{N},i\not=j}$, and $Q_{ij}=Q_{ji}=0$ otherwise, $\forall i,j\in\mathcal{N}$. 

The core idea behind DAI is that the control dynamics can settle down only if the nominal frequency restoration, i.e., $\omega_i=0$, $\forall i\in\mathcal{N}$, and the identical marginal costs, i.e., $\nabla C_i (u_i) =c_i u_i=c_j u_j= \nabla C_j (u_j)$, $\forall i,j\in\mathcal{N}$, are simultaneously achieved. Thus ensuring  $\boldsymbol{\omega}^\ast=\mathbbold{0}_n$ and $\boldsymbol{u}^\ast$ is the solution of the optimal steady-state economic dispatch problem \eqref{eq:opt-ss}.

However, DAI in \eqref{eq:DAI} does not address the transient behavior of the system. Therefore, trajectories that eventually reaches the equilibrium may have poor transient performances. Thus, it is desirable to modify it so that we have better flexibility in improving the transient performance. Inspired by the fact that the control policy $u_i(s_i)$ of DAI in \eqref{eq:u(s)-linear} is directly proportional to the integral variable $s_i$, we generalize DAI by allowing $u_i(s_i)$ to be any Lipschitz continuous and strictly increasing function of $s_i$, with $u_i(0)=0$. This extends $u_i(s_i)$ from a linear function to any nonlinear function satisfying these requirements. The added nonlinearity provides us with a good way to optimize the transient frequency performance while remaining optimal for the steady-state economic dispatch. 
% control problem \eqref{eq:opt-ts}.

\begin{ass}[Continuity and monotonicity of $\boldsymbol{u}$]\label{ass:nonlinear-u}
$\forall i\in\mathcal{N}$, the function $u_i(\cdot):\real\mapsto \real$ is Lipschitz continuous and strictly increasing with $u_i(0)=0$.
\end{ass}
In Section~\ref{sec:RL}, we will show how this assumption can be imposed on controllers parameterized by neural networks. Namely, we can explicitly design the weights and biases of a neural network such that the required structure on the controller is explicitly satisfied. 

Next, we also relax the restriction to quadratic cost functions by allowing DAI to handle cost functions that satisfy the following assumption.

\begin{ass}[Scaled cost gradient functions]\label{ass-cost-grad-scale} There exists a strictly convex and continuously differentiable function $C_\mathrm{o}(\cdot):\real\mapsto \real$ and positive scaling factors $\boldsymbol{\zeta}:=\left(\zeta_i, i \in \mathcal{N} \right) \in \real^n_{+}$ such that
\begin{align} \label{eqn:cost_condition}
\nabla C_i(u_i)=\nabla C_\mathrm{o}(\zeta_i u_i)\,,\qquad \forall i\in\mathcal{N}\,.\end{align}
\end{ass}

Assumption~\ref{ass-cost-grad-scale} admits any strictly convex cost functions whose gradients are identical up to heterogeneous scaling factors. To look in greater detail at how this generalizes beyond quadratic cost functions, we provide some examples of common strictly convex functions that satisfy Assumption~\ref{ass-cost-grad-scale}. If the cost functions on individual buses are:
\begin{itemize}
%     \item (linear-quadratic functions with identical linear coefficients)
%     \begin{align}
%         C_i(u_i)= \dfrac{1}{2}c_i u_i^2 + b u_i + a_i\ \text{with}\ c_i>0\,,
%     \end{align}
%     then
%     \begin{align}
% \nabla C_i(u_i)=c_i u_i+b\,,\qquad\forall i\in\mathcal{N}\,. \end{align}
% Thus, we can choose
% \begin{align}
% C_\mathrm{o}(\cdot)= \dfrac{1}{2}(\cdot)^2+b(\cdot)\,,   
% \end{align}
% which satisfies 
% \begin{align}
% \nabla C_i(u_i)=c_i u_i+b=\nabla C_\mathrm{o}(c_i u_i)\,,\qquad\forall i\in\mathcal{N}\,. \end{align}
\item \emph{power functions with positive even integer powers:}
    \begin{align}\label{eq:cost-power}
        C_i(u_i)= \dfrac{c_i}{r} u_i^r+b_i\,,
    \end{align}
where $c_i>0$ and $r$ is a positive even integer, then
    \begin{align}\label{eq:cost-power-grad}
\nabla C_i(u_i)=c_i u_i^{r-1}\,,\qquad\forall i\in\mathcal{N}\,. \end{align}
Thus, we can choose
\begin{align} \label{eq:zeta}
C_\mathrm{o}(\cdot)= \dfrac{1}{r}(\cdot)^r\qquad\ \text{and}\qquad\zeta_i=c_i^{\frac{1}{r-1}}\,,   
\end{align}
which satisfies the condition in~\eqref{eqn:cost_condition}.
% $\forall i\in\mathcal{N}$, 
% \begin{align*}
% \nabla C_i(u_i)\!=\!c_i u_i^{r-1}\!=\!\!\left(\!c_i^{\frac{1}{r-1}} u_i\!\right)^{r-1}\!\!\!=\!\left(\zeta_i u_i\right)^{r-1}\!=\!\nabla C_\mathrm{o}(\zeta_i u_i)\,. \end{align*}
\item \emph{arbitrary strictly convex and continuously differentiable functions that are identical up to constant terms:}
    \begin{align*}
        C_i(u_i)= C_\mathrm{o}(u_i)+b_i\,,
    \end{align*}
and its straightforward to check \eqref{eqn:cost_condition} is satisfied. 
% \begin{align*}
% \nabla C_i(u_i)=\nabla C_\mathrm{o}(u_i)=\nabla C_\mathrm{o}(\zeta_i u_i)\ \text{with}\ \zeta_i=1\,,\ \forall i\in\mathcal{N}\,. \end{align*}
\end{itemize}
Clearly, the quadratic cost functions in \eqref{eq:cost-quad} belong to the first scenario since \eqref{eq:cost-power} reduces exactly to \eqref{eq:cost-quad} if $r=2$ and $b_i=0$, which validates our generalization through Assumption~\ref{ass-cost-grad-scale}. 

Now, under Assumptions~\ref{ass:nonlinear-u} and~\ref{ass-cost-grad-scale}, we are ready to propose the \emph{generalized DAI} as follows: 
\begin{align}\label{eq:DAI-s-general}
    \dot {s}_{i} \!=\!-2 \pi f_0\omega_i \!-\!\zeta_i\!\! \sum_{j=1}^n\! Q_{ij}\!\left(\nabla C_i(u_i(s_i))\!-\!\nabla C_j(u_j(s_i))\right)\!\!\,.
\end{align}
Although not obvious at first sight, the generalized DAI in \eqref{eq:DAI-s-general} preserves the ability of the classic DAI to restore the nominal frequency and solves the optimal steady-state economic dispatch problem \eqref{eq:opt-ss} for cost functions satisfying Assumption~\ref{ass-cost-grad-scale}. In addition, the more general nonlinear form of $\boldsymbol{u}$ introduced in Assumption~\ref{ass:nonlinear-u} provides us with more degrees of freedom to better deal with the optimal transient frequency control problem \eqref{eq:opt-ts}. All of these statements would become clear as the analysis unfolds.

After combining \eqref{eq:sys-dyn-vec} and \eqref{eq:DAI-s-general}, we can write the overall closed-loop system dynamics under our proposed generalized DAI compactly as  
% \begin{subequations}\label{eq:sys-dyn-distribute}
% \begin{align*} 
%     \dot{\theta}_i =&\ 2 \pi f_0 \omega_i\,,\\
%     m_i \dot{\omega}_i =& -
%     \alpha_i \omega_i- \sum_{j=1}^n v_iv_jB_{ij}\sin{\left(\theta_i-\theta_j\right)} + p_{i} + u_{i}(s_i)\,,\\
%     \dot {s}_{i} =&\ -2 \pi f_0\omega_i -c_i \sum_{j=1}^n Q_{ij}\left(c_i u_i-c_j u_j\right)\,, \label{eq:bus-dyn}
% \end{align*}
% \end{subequations}
\begin{subequations}\label{eq:sys-dyn-vec-distribute}
\begin{align}
    \dot{\boldsymbol{\delta}} =&\ 2 \pi f_0\left(I_n-\frac{1}{n}\mathbbold{1}_n\mathbbold{1}_n^T\right) \boldsymbol{\omega}\,,\label{eq:sys-dyn-vec-distribute-delta}\\
    \boldsymbol{M} \dot{\boldsymbol{\omega}}_\mathcal{G} =& -
    \boldsymbol{A}_\mathcal{G} \boldsymbol{\omega}_\mathcal{G}- \nabla_\mathcal{G} U (\boldsymbol{\delta}) + \boldsymbol{p}_\mathcal{G} + \boldsymbol{u}_\mathcal{G}(\boldsymbol{s})\,,\\
    \mathbbold{0}_{|\mathcal{L}|} =& -
    \boldsymbol{A}_\mathcal{L} \boldsymbol{\omega}_\mathcal{L}- \nabla_\mathcal{L} U (\boldsymbol{\delta}) + \boldsymbol{p}_\mathcal{L} + \boldsymbol{u}_\mathcal{L}(\boldsymbol{s})\,,\label{eq:sys-dyn-vec-distribute-omegaL}\\
     \dot {\boldsymbol{s}} =& -2 \pi f_0 \boldsymbol{\omega} - \boldsymbol{Z}\boldsymbol{L}_\mathrm{Q}\nabla C(\boldsymbol{u}(\boldsymbol{s}))\,,\label{eq:bus-dyn}
     \end{align}
\end{subequations}
where $\boldsymbol{s}:=\left(s_i, i \in \mathcal{N} \right) \in \real^n$, $\boldsymbol{Z} := \diag{\zeta_i, i \in \mathcal{N}} \in \real^{n \times n}$, $\boldsymbol{L}_\mathrm{Q}:=\left[ L_{\mathrm{Q},ij}\right]\in \real^{n \times n}$ is the Laplacian matrix associated with the communication graph whose $ij$th element is
\begin{align*}
    \forall i,j \in \mathcal{N},\qquad L_{\mathrm{Q},ij}:=
    \begin{dcases}
    -Q_{ij} & \text{if}\ i\neq j\\
    \sum_{j^\prime=1,j^\prime\neq i}^n Q_{ij^\prime}& \text{if}\ i=j
    \end{dcases}\ \;,
\end{align*} 
and $\boldsymbol{u}(\boldsymbol{s}) := \left(u_{i}(s_i),i \in \mathcal{N} \right) \in \real^n$ with each $u_i(s_i)$ satisfying Assumption~\ref{ass:nonlinear-u}.

We conclude this subsection by stating a useful property of the inverse functions of the cost gradient functions on individual buses, which is a consequence of Assumption~\ref{ass-cost-grad-scale}. We make use of it later in 
the stability analysis of the closed-loop system~\eqref{eq:sys-dyn-vec-distribute}.

\begin{lem}[Identical scaled controllable power injections]\label{lem:inverse-id} If Assumption~\ref{ass-cost-grad-scale} holds, then  
\begin{align}\label{eq:inverse-id}
\zeta_i \nabla C_i^{-1}(\cdot)=\nabla C_\mathrm{o}^{-1}(\cdot)\,,\qquad \forall i\in\mathcal{N}\,,\end{align}
where $\nabla C_i^{-1}(\cdot)$ and $\nabla C_\mathrm{o}^{-1}(\cdot)$ denote the inverse functions of $\nabla C_i(\cdot)$ and $\nabla C_\mathrm{o}(\cdot)$, respectively.
\end{lem}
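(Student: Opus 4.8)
The plan is to exploit the composition identity~\eqref{eqn:cost_condition} together with the invertibility of the cost gradients that is guaranteed by Assumption~\ref{ass-cost-grad-scale}. First I would record the preliminary facts: since each $C_i$ and $C_\mathrm{o}$ is strictly convex and continuously differentiable, the maps $\nabla C_i:\real\to\real$ and $\nabla C_\mathrm{o}:\real\to\real$ are continuous and strictly increasing, hence injective, so each admits a (strictly increasing) inverse defined on its range. Moreover, because $\zeta_i>0$, the scaling $u_i\mapsto\zeta_i u_i$ is a bijection of $\real$ onto itself, so~\eqref{eqn:cost_condition} gives $\range(\nabla C_i)=\range(\nabla C_\mathrm{o})$; this ensures $\nabla C_i^{-1}(\cdot)$ and $\nabla C_\mathrm{o}^{-1}(\cdot)$ have a common domain, which is what makes the claimed identity~\eqref{eq:inverse-id} meaningful.

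Next I would carry out the pointwise argument. Fix an arbitrary $u_i\in\real$ and set $y:=\nabla C_i(u_i)$. By~\eqref{eqn:cost_condition}, we also have $y=\nabla C_\mathrm{o}(\zeta_i u_i)$. Applying $\nabla C_i^{-1}$ to the first equation yields $u_i=\nabla C_i^{-1}(y)$, while applying $\nabla C_\mathrm{o}^{-1}$ to the second yields $\zeta_i u_i=\nabla C_\mathrm{o}^{-1}(y)$. Eliminating $u_i$ between these two relations gives $\zeta_i\nabla C_i^{-1}(y)=\nabla C_\mathrm{o}^{-1}(y)$. As $u_i$ ranges over all of $\real$, $y$ ranges over all of $\range(\nabla C_i)=\range(\nabla C_\mathrm{o})$, so this establishes~\eqref{eq:inverse-id} on the full common domain, completing the proof.

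I do not anticipate a genuine obstacle in this lemma; the argument is elementary once the monotonicity/invertibility of the gradients is in place. The only point that requires a little care is the bookkeeping of domains and ranges, so that each inverse is applied exactly where it is defined; this is precisely handled by the observation that $\zeta_i>0$ forces $\range(\nabla C_i)$ and $\range(\nabla C_\mathrm{o})$ to coincide.
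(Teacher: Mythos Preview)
Your argument is correct and essentially identical to the paper's: both fix a point, invoke the composition identity from Assumption~\ref{ass-cost-grad-scale}, and apply the two inverses to read off $\zeta_i\nabla C_i^{-1}(y)=\nabla C_\mathrm{o}^{-1}(y)$ (the paper starts from $y$ and sets $\tilde u_i=\nabla C_i^{-1}(y)$, while you start from $u_i$ and set $y=\nabla C_i(u_i)$, but the substance is the same). If anything, your discussion of domains and ranges is more careful than the paper's, which simply asserts the identity for all $y\in\real$ without addressing where the inverses are defined.
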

\begin{proof}
See the Appendix \ref{app:lem1-pf}.
\end{proof}

Provided that the cost functions on individual buses satisfy Assumption~\ref{ass-cost-grad-scale}, Lemma~\ref{lem:inverse-id} indicates that, for any given marginal price, each generator or load contributes the same amount of controllable power injection up to a scaling factor. 

\begin{rem}[Existence and monotonicity of $\nabla C_i^{-1}(\cdot)$ and $\nabla C_\mathrm{o}^{-1}(\cdot)$]\label{rem:inverse} Note that the existence of $\nabla C_i^{-1}(\cdot)$ is guaranteed by the strict convexity of $C_i(\cdot)$, which implies that $\nabla C_i(\cdot)$ is a strictly increasing function~\cite{rockafellar1998}. By a standard result in analysis, any strictly increasing function has an inverse that is also strictly increasing~\cite{binmore1977mathanalysis}. 
% precisely, by~\cite[Theorem~2.14]{rockafellar1998}, the strict convexity of $C_i(\cdot)$ implies that $\nabla C_i(\cdot)$ is a strictly increasing function. According to~\cite[Chapter~12.9]{binmore1977mathanalysis}, this further implies that $\nabla C_i(\cdot)$ has an inverse function $\nabla C_i^{-1}(\cdot)$ which is also strictly increasing. The same argument holds for $\nabla C_\mathrm{o}^{-1}(\cdot)$.
\end{rem}

\subsection{Closed-Loop Equilibrium Analysis}
The first result we show is that the steady-state performance objectives are still achieved under the generalized DAI. This is captured by the following theorem. 

\begin{thm}[Closed-loop equilibrium]\label{thm:equilibrium}
Suppose Assumptions~\ref{ass:nonlinear-u} and~\ref{ass-cost-grad-scale} hold. Assume $\forall \{i,j\} \in\mathcal{E}$, $|\delta_i^\ast-\delta_j^\ast|\in [0,\pi/2)$.\footnote{This is the counterpart of the model assumption, $\forall \{i,j\} \in\mathcal{E}$, $|\theta_i^\ast-\theta_j^\ast|\in [0,\pi/2)$ in Remark~\ref{rem:model}.} Then the equilibrium $(\boldsymbol{\delta}^\ast, \boldsymbol{\omega}^\ast, \boldsymbol{s}^\ast)$ of the closed-loop system \eqref{eq:sys-dyn-vec-distribute} is a unique point satisfying
\begin{subequations}\label{eq:equ-DAI} 
\begin{align}
\boldsymbol{\omega}^\ast=&\ \mathbbold{0}_n\,,\\ \nabla U (\boldsymbol{\delta}^\ast) =&\ \boldsymbol{p} + \boldsymbol{u}(\boldsymbol{s}^\ast)\,,\label{eq:equ-DAI-delta}\\ \boldsymbol{u}(\boldsymbol{s}^\ast) =&\ \nabla C_\mathrm{o}^{-1}(\gamma)\boldsymbol{Z}^{-1}\mathbbold{1}_n\,, \label{eq:equ-DAI-u}   \end{align}
\end{subequations}
with $\gamma$ uniquely determined by
\begin{align}\label{eq:gamma}
\nabla C_\mathrm{o}^{-1}(\gamma)=-\dfrac{\sum_{i=1}^n p_i}{\sum_{i=1}^n \zeta_i^{-1}}\,.    
\end{align}
% \begin{align}\label{eq:gamma}
% \sum_{i=1}^n\nabla C_i^{-1}(\gamma)=-\sum_{i=1}^n p_i\,.    
% \end{align}
\end{thm}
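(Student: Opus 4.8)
The plan is to obtain the stated characterization by zeroing out all the time derivatives in the closed-loop system~\eqref{eq:sys-dyn-vec-distribute} and then pinning down the equilibrium components one at a time, exploiting the Laplacian structure of $\boldsymbol{L}_\mathrm{Q}$, the strict monotonicity in Assumptions~\ref{ass:nonlinear-u} and~\ref{ass-cost-grad-scale}, and Lemma~\ref{lem:inverse-id}.

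First I would set $\dot{\boldsymbol{\delta}}=\mathbbold{0}_n$ in~\eqref{eq:sys-dyn-vec-distribute-delta}: since $I_n-\tfrac1n\mathbbold{1}_n\mathbbold{1}_n^T$ has kernel $\mathrm{span}\{\mathbbold{1}_n\}$, this forces $\boldsymbol{\omega}^\ast=\mathbbold{1}_n\omega^\ast$ for some scalar $\omega^\ast$. Next, setting $\dot{\boldsymbol{s}}=\mathbbold{0}_n$ in~\eqref{eq:bus-dyn} and premultiplying by $\mathbbold{1}_n^T\boldsymbol{Z}^{-1}$ annihilates the Laplacian term because $\mathbbold{1}_n^T\boldsymbol{L}_\mathrm{Q}=\mathbbold{0}_n^T$, leaving $\mathbbold{1}_n^T\boldsymbol{Z}^{-1}\boldsymbol{\omega}^\ast=0$, i.e. $\omega^\ast\sum_{i=1}^n\zeta_i^{-1}=0$; since every $\zeta_i>0$, we conclude $\boldsymbol{\omega}^\ast=\mathbbold{0}_n$. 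Substituting $\boldsymbol{\omega}^\ast=\mathbbold{0}_n$ back into~\eqref{eq:bus-dyn} and using invertibility of $\boldsymbol{Z}$ gives $\boldsymbol{L}_\mathrm{Q}\nabla C(\boldsymbol{u}(\boldsymbol{s}^\ast))=\mathbbold{0}_n$; connectedness of the communication graph makes $\ker\boldsymbol{L}_\mathrm{Q}=\mathrm{span}\{\mathbbold{1}_n\}$, so $\nabla C_i(u_i(s_i^\ast))=\gamma$ for all $i$ and a common scalar $\gamma$. Inverting (the inverses exist and are strictly increasing by Remark~\ref{rem:inverse}) and invoking Lemma~\ref{lem:inverse-id} yields $u_i(s_i^\ast)=\nabla C_i^{-1}(\gamma)=\zeta_i^{-1}\nabla C_\mathrm{o}^{-1}(\gamma)$, which stacked is exactly~\eqref{eq:equ-DAI-u}. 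Finally, setting the right-hand side of the generator differential equation to zero, stacking it with the load algebraic equation, and using $\boldsymbol{\omega}^\ast=\mathbbold{0}_n$ gives~\eqref{eq:equ-DAI-delta}; premultiplying~\eqref{eq:equ-DAI-delta} by $\mathbbold{1}_n^T$ and using $\mathbbold{1}_n^T\nabla U(\boldsymbol{\delta})=0$ together with~\eqref{eq:equ-DAI-u} produces $0=\sum_{i=1}^n p_i+\nabla C_\mathrm{o}^{-1}(\gamma)\sum_{i=1}^n\zeta_i^{-1}$, i.e.~\eqref{eq:gamma}.

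For the uniqueness claim, strict convexity of $C_\mathrm{o}$ makes $\nabla C_\mathrm{o}^{-1}$ strictly increasing, hence injective, so~\eqref{eq:gamma} determines $\gamma$ uniquely, and therefore $\boldsymbol{u}(\boldsymbol{s}^\ast)$ is uniquely determined; uniqueness of each $s_i^\ast$ then follows from the strict monotonicity of $u_i(\cdot)$ in Assumption~\ref{ass:nonlinear-u}, existence of $s_i^\ast$ being subsumed in the existence of the equilibrium, which is guaranteed by the standing feasibility assumption on~\eqref{eq:equli-sync-omega0}. The main obstacle is the uniqueness of $\boldsymbol{\delta}^\ast$: equation~\eqref{eq:equ-DAI-delta} has a now-fixed right-hand side, but the power-flow map $\nabla U$ is not globally injective. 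I would resolve this using the security assumption $|\delta_i^\ast-\delta_j^\ast|\in[0,\pi/2)$ together with the center-of-inertia constraint $\mathbbold{1}_n^T\boldsymbol{\delta}^\ast=0$ (which holds by the very definition of the $\delta_i$): restricted to this region the power-flow map is injective, a standard structural fact for lossless networks (see, e.g.,~\cite{mallada2017optimal}), which pins $\boldsymbol{\delta}^\ast$ down uniquely and completes the proof.
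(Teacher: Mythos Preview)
Your proposal is correct and follows essentially the same route as the paper: zero out the dynamics, use the kernel of $I_n-\tfrac{1}{n}\mathbbold{1}_n\mathbbold{1}_n^T$ to get $\boldsymbol{\omega}^\ast=\mathbbold{1}_n\omega^\ast$, premultiply the $\dot{\boldsymbol{s}}=\mathbbold{0}_n$ equation by $\mathbbold{1}_n^T\boldsymbol{Z}^{-1}$ to force $\omega^\ast=0$, read off $\nabla C(\boldsymbol{u}(\boldsymbol{s}^\ast))=\gamma\mathbbold{1}_n$ from $\ker\boldsymbol{L}_\mathrm{Q}$ and convert to~\eqref{eq:equ-DAI-u} via Lemma~\ref{lem:inverse-id}, then obtain~\eqref{eq:equ-DAI-delta} and~\eqref{eq:gamma} exactly as you describe. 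The only cosmetic differences are that the paper phrases the uniqueness of $\gamma$ and of $\boldsymbol{s}^\ast$ as short contradiction arguments (multiplying the difference by $(\gamma-\tilde\gamma)$, respectively $(\boldsymbol{s}^\ast-\boldsymbol{s}^\star)^T$) rather than invoking injectivity directly, and it cites~\cite[Lemma~1]{weitenberg2017exponentialarxiv} rather than~\cite{mallada2017optimal} for the uniqueness of $\boldsymbol{\delta}^\ast$ under the angle-security constraint.
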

\begin{proof}
The equilibrium analysis is similar to the one presented in Section~\ref{ssec:goal}, except that the effect of the integral variables $\boldsymbol{s}$ introduced by the generalized DAI should be considered. Hence, in steady-state, \eqref{eq:sys-dyn-vec-distribute} yields
\begin{subequations}
\begin{align}\label{eq:equli-dai}
 \mathbbold{0}_n =& -
    \boldsymbol{A} \mathbbold{1}_n\omega^\ast- \nabla U (\boldsymbol{\delta}^\ast) + \boldsymbol{p} + \boldsymbol{u}(\boldsymbol{s}^\ast)\,,\\
    2 \pi f_0 \mathbbold{1}_n\omega^\ast =& -\boldsymbol{Z}\boldsymbol{L}_\mathrm{Q}\nabla C(\boldsymbol{u}(\boldsymbol{s}^\ast))\,,\label{eq:equli-dai-s}
    \end{align}
\end{subequations}
where we have used the equilibria characterizations $\boldsymbol{\omega}^\ast=\mathbbold{1}_n\omega^\ast$ and \eqref{eq:equli-sync-omega}.

We then proceed to investigate the equilibrium by following a similar argument as in~\cite[Lemma~4.2]{Johannes2016ecc}. Premultiplying \eqref{eq:equli-dai-s} by $\mathbbold{1}_n^T\boldsymbol{Z}^{-1}$ yields 
\begin{align}
2 \pi f_0 \mathbbold{1}_n^T\boldsymbol{Z}^{-1}\mathbbold{1}_n\omega^\ast =& - \mathbbold{1}_n^T\boldsymbol{L}_\mathrm{Q}\nabla C(\boldsymbol{u}(\boldsymbol{s}^\ast)) = 0\,, \label{eq:omegastar-eq}
\end{align}
where the second equality is due to the property of the Laplacian matrix~\cite{FB-LNS} that $\mathbbold{1}_n^T\boldsymbol{L}_\mathrm{Q}=\mathbbold{0}_n^T$. Note that $\mathbbold{1}_n^T\boldsymbol{Z}^{-1}\mathbbold{1}_n>0$ since $\boldsymbol{Z}^{-1}\succ 0$ by construction. Thus, \eqref{eq:omegastar-eq} implies that $\omega^\ast=0$. Then, \eqref{eq:equli-dai-s} becomes $\boldsymbol{Z}\boldsymbol{L}_\mathrm{Q}\nabla C(\boldsymbol{u}(\boldsymbol{s}^\ast))=\mathbbold{0}_n$, which indicates that $\nabla C(\boldsymbol{u}(\boldsymbol{s}^\ast))\in\range{\mathbbold{1}_n}$, i.e., $\nabla C(\boldsymbol{u}(\boldsymbol{s}^\ast)) = \gamma\mathbbold{1}_n$ for some constant $\gamma$. Thus, $\boldsymbol{u}(\boldsymbol{s}^\ast)=\left(\nabla C_i^{-1}(\gamma),i \in \mathcal{N} \right)$. Note that $\boldsymbol{Z}\boldsymbol{u}(\boldsymbol{s}^\ast)= \left(\zeta_i\nabla C_i^{-1}(\gamma),i \in \mathcal{N} \right)=\nabla C_\mathrm{o}^{-1}(\gamma)\mathbbold{1}_n$ by Lemma~\ref{lem:inverse-id}, which further implies that $\boldsymbol{u}(\boldsymbol{s}^\ast)=\nabla C_\mathrm{o}^{-1}(\gamma)\boldsymbol{Z}^{-1}\mathbbold{1}_n$.

Now, applying $\omega^\ast=0$ and $\boldsymbol{u}(\boldsymbol{s}^\ast) = \nabla C_\mathrm{o}^{-1}(\gamma)\boldsymbol{Z}^{-1}\mathbbold{1}_n$ to \eqref{eq:equli-dai} yields
\begin{align}\label{eq:equli-dai-gamma}
 \nabla U (\boldsymbol{\delta}^\ast) = \boldsymbol{p} + \nabla C_\mathrm{o}^{-1}(\gamma)\boldsymbol{Z}^{-1}\mathbbold{1}_n\,.
    \end{align}
Then, premultiplying \eqref{eq:equli-dai-gamma} by $\mathbbold{1}_n^T$ yields the equation that determines $\gamma$ as shown in \eqref{eq:gamma},
where $\mathbbold{1}_n^T\nabla U (\boldsymbol{\delta})=0$ is used again. We can show that the solution $\gamma$ to \eqref{eq:gamma} is unique by way of contradiction. Suppose that both $\gamma$ and $\tilde{\gamma}$ satisfy \eqref{eq:gamma}, where $\gamma\neq\tilde{\gamma}$. Then,
\begin{align}\label{eq:gamma-diff}
\nabla C_\mathrm{o}^{-1}(\gamma)-\nabla C_\mathrm{o}^{-1}(\tilde{\gamma})=&\left(-\dfrac{\sum_{i=1}^n p_i}{\sum_{i=1}^n \zeta_i^{-1}}\right)-\left(-\dfrac{\sum_{i=1}^n p_i}{\sum_{i=1}^n \zeta_i^{-1}}\right)\nonumber\\=&\ 0\,.  \end{align}
Multiplying \eqref{eq:gamma-diff} by $\left(\gamma-\tilde{\gamma}\right)$ yields
\begin{align}\label{eq:gamma-diff-prod}
0=& \left(\gamma-\tilde{\gamma}\right)\left(\nabla C_\mathrm{o}^{-1}(\gamma)-\nabla C_\mathrm{o}^{-1}(\tilde{\gamma})\right)>0\,,
\end{align}
where the inequality is due to the fact that $\nabla C_\mathrm{o}^{-1}(\cdot)$ is strictly increasing as discussed in Remark~\ref{rem:inverse}. Clearly, \eqref{eq:gamma-diff-prod} is a contradiction. Thus, $\gamma$ is unique.

It remains to show that the equilibrium point is unique. We focus on the uniqueness of $\boldsymbol{s}^\ast$. By way of contradiction, suppose that both $\boldsymbol{s}^\ast$ and $\boldsymbol{s}^\star$ satisfy \eqref{eq:equ-DAI-u}, where $\boldsymbol{s}^\ast\neq\boldsymbol{s}^\star$. Then,
\begin{align}\label{eq:u-diff}
\boldsymbol{u}(\boldsymbol{s}^\ast) - \boldsymbol{u}(\boldsymbol{s}^\star)=& \nabla C_\mathrm{o}^{-1}(\gamma)\boldsymbol{Z}^{-1}\mathbbold{1}_n-\nabla C_\mathrm{o}^{-1}(\gamma)\boldsymbol{Z}^{-1}\mathbbold{1}_n\nonumber\\=&\ \mathbbold{0}_n \,. 
\end{align}
% \begin{align}\label{eq:u-diff}
% \boldsymbol{u}(\boldsymbol{s}^\ast) - \boldsymbol{u}(\boldsymbol{s}^\star)=& \left(\nabla C_i^{-1}(\gamma),i \in \mathcal{N} \right)-\left(\nabla C_i^{-1}(\gamma),i \in \mathcal{N} \right)\nonumber\\=&\ \mathbbold{0}_n \,. 
% \end{align}
Premultiplying \eqref{eq:u-diff} by $\left(\boldsymbol{s}^\ast-\boldsymbol{s}^\star\right)^T$ yields
\begin{align}\label{eq:u-diff-prod}
0=&\left(\boldsymbol{s}^\ast-\boldsymbol{s}^\star\right)^T\left(\boldsymbol{u}(\boldsymbol{s}^\ast) - \boldsymbol{u}(\boldsymbol{s}^\star)\right)\nonumber\\=&\sum_{i=1}^n \left(s_i^\ast-s_i^\star\right)\left(u_i(s_i^\ast)-u_i(s_i^\star)\right)>0 \,,  
\end{align}
where the inequality results from our requirement that $u_i(s_i)$ is strictly increasing with respect to $s_i$. Clearly, \eqref{eq:u-diff-prod} is a contradiction. Thus, $\boldsymbol{s}^\ast$ is unique. By~\cite[Lemma~1]{weitenberg2017exponentialarxiv}, the solution $\boldsymbol{\delta}^\ast$ to \eqref{eq:equli-dai-gamma} is unique. This concludes the proof of the uniqueness of the equilibrium.
\end{proof}

Theorem~\ref{thm:equilibrium} verifies that the generalized DAI preserves the steady-state performance of the normal DAI. It forces the closed-loop system \eqref{eq:sys-dyn-vec-distribute} to settle down at a unique equilibrium point where the frequency is nominal, i.e., $\boldsymbol{\omega}^\ast=\mathbbold{0}_n$, and the controllable power injections $\boldsymbol{u}^\ast$ meet the identical marginal cost requirement \eqref{eq:id-marginal} since, $\forall i\in\mathcal{N}$, 
\begin{align*}
\nabla C_i (u_i^\ast) =&\nabla C_i (u_i(s_i^\ast))\stackrel{\text{\eqref{eq:equ-DAI-u}}}{=}\nabla C_i (\nabla C_\mathrm{o}^{-1}(\gamma)\zeta_i^{-1})\\\stackrel{\text{\eqref{eq:inverse-id}}}{=}&\nabla C_i (\zeta_i\nabla C_i^{-1}(\gamma)\zeta_i^{-1})=\nabla C_i (\nabla C_i^{-1}(\gamma)) =\gamma \,.   
\end{align*}
Thus, the objectives of nominal steady-state frequency and optimal steady-state economic dispatch for a broader range of cost functions are both achieved. 

\subsection{Lyapunov Stability Analysis}
Having characterized the equilibrium point and confirmed the steady-state performance of the closed-loop system \eqref{eq:sys-dyn-vec-distribute} under the generalized DAI, we are now ready to investigate the system stability by performing Lyapunov stability analysis. More precisely, the stability under the generalized DAI can be certified by finding a well-defined ``weak'' Lyapunov function
%
%\marginJC{Should we refer to it as "weak Lyapunov" since evolution is only nonincreasing (where "strong Lyapunov" will be for strictly decreasing)}
%
that is nonincreasing along the trajectories of the closed-loop system~\eqref{eq:sys-dyn-vec-distribute}. The main result of this whole subsection is presented below, whose proof is enabled by a sequence of smaller results that we discuss next.

\begin{thm}[Asymptotic stability]\label{thm:as-stable}
Under Assumptions~\ref{ass:nonlinear-u} and~\ref{ass-cost-grad-scale}, any trajectory of \eqref{eq:sys-dyn-vec-distribute} that starts from the neighborhood $\Set*{(\boldsymbol{\delta}, \boldsymbol{\omega}, \boldsymbol{s})\!\in \!\real^n \!\!\times\! \real^{n}\!\!\times\! \real^n}{|\delta_i-\delta_j|\in [0,\pi/2), \forall \{i,j\}\!\in \!\mathcal{E}}$
of the equilibrium characterized by \eqref{eq:equ-DAI} with $|\delta_i^\ast-\delta_j^\ast|\in [0,\pi/2)$, $\forall \{i,j\} \in\mathcal{E}$, converges asymptotically to the equilibrium.
\end{thm}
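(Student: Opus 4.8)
The plan is to construct an energy-type ``weak'' Lyapunov function $V$ for the closed-loop system~\eqref{eq:sys-dyn-vec-distribute} that is nonincreasing along trajectories, and then conclude asymptotic convergence by LaSalle's invariance principle. First I would eliminate the algebraic constraint~\eqref{eq:sys-dyn-vec-distribute-omegaL}: since $\boldsymbol{A}_\mathcal{L}\succ0$, it is uniquely solved by $\boldsymbol{\omega}_\mathcal{L}=\boldsymbol{A}_\mathcal{L}^{-1}\bigl(-\nabla_\mathcal{L}U(\boldsymbol{\delta})+\boldsymbol{p}_\mathcal{L}+\boldsymbol{u}_\mathcal{L}(\boldsymbol{s})\bigr)$, which is continuous in $(\boldsymbol{\delta},\boldsymbol{s})$ under Assumption~\ref{ass:nonlinear-u}; substituting it back yields a well-posed ODE in $(\boldsymbol{\delta},\boldsymbol{\omega}_\mathcal{G},\boldsymbol{s})$ on the region of the statement, on which the whole analysis is carried out (still writing $\boldsymbol{\omega}$ for brevity). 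For the Lyapunov function I would take $V=V_1+V_2+V_3$ with
\begin{align*}
V_1&=\tfrac12\boldsymbol{\omega}_\mathcal{G}^{T}\boldsymbol{M}\boldsymbol{\omega}_\mathcal{G}\,,\quad V_2=\tfrac{1}{2\pi f_0}\bigl(U(\boldsymbol{\delta})-U(\boldsymbol{\delta}^\ast)-\nabla U(\boldsymbol{\delta}^\ast)^{T}(\boldsymbol{\delta}-\boldsymbol{\delta}^\ast)\bigr)\,,\\
V_3&=\tfrac{1}{2\pi f_0}\sum_{i=1}^n\int_{s_i^\ast}^{s_i}\bigl(u_i(\sigma)-u_i(s_i^\ast)\bigr)\,\mathrm{d}\sigma\,,
\end{align*}
i.e., the kinetic energy, the Bregman divergence of the potential $U$, and a ``controller energy'' tailored to the integral states $\boldsymbol{s}$.

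Each of $V_1,V_2,V_3$ is nonnegative on the region of the statement and vanishes only at the equilibrium~\eqref{eq:equ-DAI}: $V_1$ manifestly; $V_3$ because $u_i$ is strictly increasing (Assumption~\ref{ass:nonlinear-u}), so its integrand is positive for $\sigma>s_i^\ast$ and negative for $\sigma<s_i^\ast$; and $V_2$ because $\nabla^2U$ is a weighted graph Laplacian with nonnegative edge weights $v_iv_jB_{ij}\cos(\delta_i-\delta_j)$ on the convex angle security region, hence positive semidefinite there with kernel $\mathrm{span}\{\mathbbold{1}_n\}$, while the center-of-inertia coordinates satisfy $\mathbbold{1}_n^{T}\boldsymbol{\delta}=\mathbbold{1}_n^{T}\boldsymbol{\delta}^\ast=0$ (uniqueness of $\boldsymbol{\delta}^\ast$ in the region follows from~\cite[Lemma~1]{weitenberg2017exponentialarxiv}). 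Since $V_1$ and $V_3$ are radially unbounded in $\boldsymbol{\omega}_\mathcal{G}$ and $\boldsymbol{s}$, for small enough $c>0$ the sublevel set $\Omega_c:=\{V\le c\}$ (the connected component containing the equilibrium) is compact, contained in the region of the statement, and positively invariant by the forthcoming inequality $\dot V\le0$.

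To compute $\dot V$, I would add $\boldsymbol{\omega}_\mathcal{L}^{T}$ times the (identically zero) right-hand side of~\eqref{eq:sys-dyn-vec-distribute-omegaL} to $\dot V_1$ and use the equilibrium identity $\boldsymbol{p}=\nabla U(\boldsymbol{\delta}^\ast)-\boldsymbol{u}(\boldsymbol{s}^\ast)$ from~\eqref{eq:equ-DAI-delta}, obtaining $\dot V_1=-\boldsymbol{\omega}^{T}\boldsymbol{A}\boldsymbol{\omega}-\boldsymbol{\omega}^{T}(\nabla U(\boldsymbol{\delta})-\nabla U(\boldsymbol{\delta}^\ast))+\boldsymbol{\omega}^{T}(\boldsymbol{u}(\boldsymbol{s})-\boldsymbol{u}(\boldsymbol{s}^\ast))$. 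Since $\mathbbold{1}_n^{T}\nabla U\equiv0$, the projection in~\eqref{eq:sys-dyn-vec-distribute-delta} acts trivially on $\nabla U(\boldsymbol{\delta})-\nabla U(\boldsymbol{\delta}^\ast)$, giving $\dot V_2=(\nabla U(\boldsymbol{\delta})-\nabla U(\boldsymbol{\delta}^\ast))^{T}\boldsymbol{\omega}$, which cancels the middle term of $\dot V_1$; and the integrand of $V_3$ was chosen so that $\dot V_3=-\boldsymbol{\omega}^{T}(\boldsymbol{u}(\boldsymbol{s})-\boldsymbol{u}(\boldsymbol{s}^\ast))-\tfrac{1}{2\pi f_0}(\boldsymbol{u}(\boldsymbol{s})-\boldsymbol{u}(\boldsymbol{s}^\ast))^{T}\boldsymbol{Z}\boldsymbol{L}_\mathrm{Q}\nabla C(\boldsymbol{u}(\boldsymbol{s}))$, which cancels the last term of $\dot V_1$. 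Hence $\dot V=-\boldsymbol{\omega}^{T}\boldsymbol{A}\boldsymbol{\omega}-\tfrac{1}{2\pi f_0}(\boldsymbol{u}(\boldsymbol{s})-\boldsymbol{u}(\boldsymbol{s}^\ast))^{T}\boldsymbol{Z}\boldsymbol{L}_\mathrm{Q}\nabla C(\boldsymbol{u}(\boldsymbol{s}))$, and the crux is to sign the communication term. This is exactly where Assumption~\ref{ass-cost-grad-scale} enters: with $g_i:=\nabla C_i(u_i(s_i))$, Lemma~\ref{lem:inverse-id} gives $\zeta_i u_i(s_i)=\zeta_i\nabla C_i^{-1}(g_i)=\nabla C_\mathrm{o}^{-1}(g_i)$ and $\zeta_i u_i(s_i^\ast)=\nabla C_\mathrm{o}^{-1}(\gamma)$ (since $\nabla C(\boldsymbol{u}(\boldsymbol{s}^\ast))=\gamma\mathbbold{1}_n$), so using $\mathbbold{1}_n^{T}\boldsymbol{L}_\mathrm{Q}=\mathbbold{0}_n^{T}$ and the Laplacian quadratic-form identity,
\begin{align*}
(\boldsymbol{u}(\boldsymbol{s})-\boldsymbol{u}(\boldsymbol{s}^\ast))^{T}\boldsymbol{Z}\boldsymbol{L}_\mathrm{Q}\nabla C(\boldsymbol{u}(\boldsymbol{s}))=\sum_{\{i,j\}\in\mathcal{E}_\mathrm{Q}}Q_{ij}\bigl(\nabla C_\mathrm{o}^{-1}(g_i)-\nabla C_\mathrm{o}^{-1}(g_j)\bigr)(g_i-g_j)\ge0\,,
\end{align*}
the last inequality by $Q_{ij}\ge0$ and the monotonicity of $\nabla C_\mathrm{o}^{-1}$ (Remark~\ref{rem:inverse}). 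Therefore $\dot V\le-\boldsymbol{\omega}^{T}\boldsymbol{A}\boldsymbol{\omega}\le0$ on $\Omega_c$.

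Finally I would invoke LaSalle on the compact, positively invariant set $\Omega_c$: every trajectory approaches the largest invariant set in $\{\dot V=0\}$. Since $\boldsymbol{A}\succ0$ and the communication term above is nonnegative, $\dot V=0$ forces $\boldsymbol{\omega}=\mathbbold{0}_n$ and equal marginal costs across the (connected) communication graph; on an invariant subset $\boldsymbol{\omega}\equiv\mathbbold{0}_n$, so $\dot{\boldsymbol{\delta}}\equiv\mathbbold{0}_n$, $\dot{\boldsymbol{\omega}}_\mathcal{G}\equiv\mathbbold{0}_n$ forces $\nabla U(\boldsymbol{\delta})=\boldsymbol{p}+\boldsymbol{u}(\boldsymbol{s})$ (together with the algebraic equation), and constancy of $\nabla C(\boldsymbol{u}(\boldsymbol{s}))$ across the communication graph gives $\boldsymbol{L}_\mathrm{Q}\nabla C(\boldsymbol{u}(\boldsymbol{s}))=\mathbbold{0}_n$ and hence $\dot{\boldsymbol{s}}\equiv\mathbbold{0}_n$; thus this set consists of equilibria of~\eqref{eq:sys-dyn-vec-distribute}, and by Theorem~\ref{thm:equilibrium} the only one in $\Omega_c$ is $(\boldsymbol{\delta}^\ast,\mathbbold{0}_n,\boldsymbol{s}^\ast)$, so every trajectory starting in $\Omega_c$ converges asymptotically to it. I expect the sign analysis of the communication/dissipation term — which crucially leverages Assumption~\ref{ass-cost-grad-scale} via Lemma~\ref{lem:inverse-id} to express $\boldsymbol{Z}(\boldsymbol{u}(\boldsymbol{s})-\boldsymbol{u}(\boldsymbol{s}^\ast))$ as the componentwise image of the marginal costs under the single monotone map $\nabla C_\mathrm{o}^{-1}$ — to be the main obstacle; secondary technical points are verifying positive definiteness of $V_2$ on the security region and the construction of the compact invariant sublevel set, and (more minorly) the differential-algebraic-to-ODE reduction.
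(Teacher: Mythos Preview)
Your proposal is correct and takes essentially the same approach as the paper: the same Bregman-type Lyapunov function (your $V$ equals the paper's $W$ divided by $2\pi f_0$), the same DAE-to-ODE reduction via~\eqref{eq:sys-dyn-vec-distribute-omegaL}, the same use of Assumption~\ref{ass-cost-grad-scale}/Lemma~\ref{lem:inverse-id} to sign the communication cross term, and the same LaSalle conclusion. The only cosmetic differences are that you write $\dot V$ compactly as $-\boldsymbol{\omega}^{T}\boldsymbol{A}\boldsymbol{\omega}$ plus the cross term (the paper separates out the load part as an explicit $\boldsymbol{A}_\mathcal{L}^{-1}$-weighted quadratic, which is the same thing after substituting $\boldsymbol{\omega}_\mathcal{L}$) and you phrase the edgewise sign argument directly through the single monotone map $\nabla C_\mathrm{o}^{-1}$ rather than via the paper's sign-parity claim~\eqref{eq:sign-same}.
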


%
%\marginJC{Starting from here, we have a whole sequence of steps that eventually lead to the conclusion in Theorem 2. How about having the theorem statement first, right here, and then saying that the proof of it is enabled by a sequence of smaller results that we discuss next. The subsection could conclude with the proof of the theorem 2. This has the advantage that the reader gets the most important result first, and can safely skip stuff on a first read if he/she's more interested in what comes later.}
%

%
%\marginJC{I think I'd say this differently. I'd start by saying  that (18c) fully determines $\boldsymbol{\omega}_\mathcal{L}$ as a function of $\boldsymbol{\delta}$ and $\boldsymbol{s}$ (and write here the explicit formula). This means that we only need to consider the evolution of $(\boldsymbol{\delta}, \boldsymbol{\omega}_\mathcal{G}, \boldsymbol{s})$.}
%

First note that the algebraic equation \eqref{eq:sys-dyn-vec-distribute-omegaL} fully determines $\boldsymbol{\omega}_\mathcal{L}$ as a function of $\boldsymbol{\delta}$ and $\boldsymbol{s}$, i.e.,
$$ \boldsymbol{\omega}_\mathcal{L} = \boldsymbol{A}_\mathcal{L}^{-1}\left(- \nabla_\mathcal{L} U (\boldsymbol{\delta}) + \boldsymbol{p}_\mathcal{L} + \boldsymbol{u}_\mathcal{L}(\boldsymbol{s})\right)\,,$$
which means that we only need to explicitly consider the evolution of $(\boldsymbol{\delta}, \boldsymbol{\omega}_\mathcal{G}, \boldsymbol{s})$. Thus,
in what follows, we only focus on the state component $(\boldsymbol{\delta}, \boldsymbol{\omega}_\mathcal{G}, \boldsymbol{s})$ for a neighborhood $\mathcal{D}:=\! \Set*{(\boldsymbol{\delta}, \boldsymbol{\omega}_\mathcal{G}, \boldsymbol{s})\!\in \!\real^n \!\!\times\! \real^{|\mathcal{G}|}\!\!\times\! \real^n}{|\delta_i-\delta_j|\in [0,\pi/2), \forall \{i,j\}\!\in \!\mathcal{E}}$ of the equilibrium $(\boldsymbol{\delta}^\ast, \mathbbold{0}_{|\mathcal{G}|}, \boldsymbol{s}^\ast)$.
%We do not 
%explicitly consider $\boldsymbol{\omega}_\mathcal{L}$ because
%it is determined by $\boldsymbol{\delta}$ and $\boldsymbol{s}$ through the algebraic equation \eqref{eq:sys-dyn-vec-distribute-omegaL}. 
The following result formalizes this thought by showing that the distance from the whole state $(\boldsymbol{\delta}, \boldsymbol{\omega}, \boldsymbol{s})$ to $(\boldsymbol{\delta}^\ast, \mathbbold{0}_n, \boldsymbol{s}^\ast)$ is lower and upper bounded by the distance from the partial state $(\boldsymbol{\delta}, \boldsymbol{\omega}_\mathcal{G}, \boldsymbol{s})$ to $(\boldsymbol{\delta}^\ast, \mathbbold{0}_{|\mathcal{G}|}, \boldsymbol{s}^\ast)$.

\begin{lem}[Bounds on whole state distance]\label{lem:bound-state}Let $\|\cdot\|_2$ denote the Euclidean norm. Under Assumptions~\ref{ass:nonlinear-u} and~\ref{ass-cost-grad-scale}, there exists some constant $\nu>0$ such that, $\forall(\boldsymbol{\delta}, \boldsymbol{\omega}_\mathcal{G},\boldsymbol{s})\in \mathcal{D}$,
 %Define $\|(\boldsymbol{\delta}-\boldsymbol{\delta}^\ast, \boldsymbol{\omega}, \boldsymbol{s}-\boldsymbol{s}^\ast)\|^2:=\|\boldsymbol{\delta}-\boldsymbol{\delta}^\ast\|_2^2+\|\boldsymbol{\omega}\|_2^2+\|\boldsymbol{s}-\boldsymbol{s}^\ast\|_2^2$ and $\|(\boldsymbol{\delta}-\boldsymbol{\delta}^\ast, \boldsymbol{\omega}_\mathcal{G}, \boldsymbol{s}-\boldsymbol{s}^\ast)\|^2:=\|\boldsymbol{\delta}-\boldsymbol{\delta}^\ast\|_2^2+\|\boldsymbol{\omega}_\mathcal{G}\|_2^2+\|\boldsymbol{s}-\boldsymbol{s}^\ast\|_2^2$, where $\|\cdot\|_2^2$ denotes the squared Euclidean norm. Then
% \begin{align*}
% \|(\boldsymbol{\delta}-\boldsymbol{\delta}^\ast, \boldsymbol{\omega}_\mathcal{G}, \boldsymbol{s}-\boldsymbol{s}^\ast)\|^2\leq& \|(\boldsymbol{\delta}-\boldsymbol{\delta}^\ast, \boldsymbol{\omega}, \boldsymbol{s}-\boldsymbol{s}^\ast)\|^2\\\leq& \nu\|(\boldsymbol{\delta}-\boldsymbol{\delta}^\ast, \boldsymbol{\omega}_\mathcal{G}, \boldsymbol{s}-\boldsymbol{s}^\ast)\|^2    
% \end{align*}
\begin{align*}
&\ \|\boldsymbol{\delta}-\boldsymbol{\delta}^\ast\|_2^2+\|\boldsymbol{\omega}_\mathcal{G}\|_2^2+\|\boldsymbol{s}-\boldsymbol{s}^\ast\|_2^2\\\leq&\ \|\boldsymbol{\delta}-\boldsymbol{\delta}^\ast\|_2^2+\|\boldsymbol{\omega}\|_2^2+\|\boldsymbol{s}-\boldsymbol{s}^\ast\|_2^2\\\leq&\ \nu\left(\|\boldsymbol{\delta}-\boldsymbol{\delta}^\ast\|_2^2+\|\boldsymbol{\omega}_\mathcal{G}\|_2^2+\|\boldsymbol{s}-\boldsymbol{s}^\ast\|_2^2\right)\,,   
\end{align*}
where $(\boldsymbol{\delta}^\ast, \mathbbold{0}_n, \boldsymbol{s}^\ast)$ is the equilibrium of the closed-loop system \eqref{eq:sys-dyn-vec-distribute} characterized by \eqref{eq:equ-DAI} with $|\delta_i^\ast-\delta_j^\ast|\in [0,\pi/2)$, $\forall \{i,j\} \in\mathcal{E}$.
\end{lem}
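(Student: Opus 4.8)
The plan is to prove the two inequalities separately, noting that the lower bound is essentially trivial and the upper bound is where the real work lies. For the lower bound, observe that $\|\boldsymbol{\omega}\|_2^2 = \|\boldsymbol{\omega}_\mathcal{G}\|_2^2 + \|\boldsymbol{\omega}_\mathcal{L}\|_2^2 \geq \|\boldsymbol{\omega}_\mathcal{G}\|_2^2$ since the remaining term is nonnegative; adding the common terms $\|\boldsymbol{\delta}-\boldsymbol{\delta}^\ast\|_2^2 + \|\boldsymbol{s}-\boldsymbol{s}^\ast\|_2^2$ to both sides gives the first inequality immediately. No assumptions are needed for this half.

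For the upper bound, the key observation is that $\boldsymbol{\omega}_\mathcal{L}$ is not a free state: it is pinned to $(\boldsymbol{\delta}, \boldsymbol{s})$ via the algebraic constraint $\boldsymbol{\omega}_\mathcal{L} = \boldsymbol{A}_\mathcal{L}^{-1}\left(-\nabla_\mathcal{L} U(\boldsymbol{\delta}) + \boldsymbol{p}_\mathcal{L} + \boldsymbol{u}_\mathcal{L}(\boldsymbol{s})\right)$, and at the equilibrium $\boldsymbol{\omega}_\mathcal{L}^\ast = \mathbbold{0}_{|\mathcal{L}|}$ (consistent with the same algebraic equation evaluated at $(\boldsymbol{\delta}^\ast,\boldsymbol{s}^\ast)$). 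Therefore I would write $\boldsymbol{\omega}_\mathcal{L} = \boldsymbol{\omega}_\mathcal{L} - \boldsymbol{\omega}_\mathcal{L}^\ast = \boldsymbol{A}_\mathcal{L}^{-1}\left(-(\nabla_\mathcal{L} U(\boldsymbol{\delta}) - \nabla_\mathcal{L} U(\boldsymbol{\delta}^\ast)) + (\boldsymbol{u}_\mathcal{L}(\boldsymbol{s}) - \boldsymbol{u}_\mathcal{L}(\boldsymbol{s}^\ast))\right)$ and bound its norm. Since $U$ is smooth (a finite sum of cosines of angle differences) and the state is confined to the bounded-in-angle-difference region $\mathcal{D}$, the map $\boldsymbol{\delta} \mapsto \nabla U(\boldsymbol{\delta})$ has a bounded Jacobian on $\mathcal{D}$, hence is Lipschitz there with some constant $\ell_U$; likewise each $u_i$ is Lipschitz by Assumption~\ref{ass:nonlinear-u} with constant $\ell_u$, so $\boldsymbol{u}_\mathcal{L}(\cdot)$ is Lipschitz in $\boldsymbol{s}$. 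Combining these and using $\|\boldsymbol{A}_\mathcal{L}^{-1}\|_2 \leq 1/\min_{i\in\mathcal{L}}\alpha_i$ together with the triangle inequality and $(a+b)^2 \leq 2a^2 + 2b^2$, I obtain $\|\boldsymbol{\omega}_\mathcal{L}\|_2^2 \leq \kappa\left(\|\boldsymbol{\delta}-\boldsymbol{\delta}^\ast\|_2^2 + \|\boldsymbol{s}-\boldsymbol{s}^\ast\|_2^2\right)$ for some constant $\kappa > 0$ depending on $\ell_U$, $\ell_u$, and $\min_{i\in\mathcal{L}}\alpha_i$.

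Having this, the upper bound follows by setting $\nu := 1 + \kappa$:
\begin{align*}
\|\boldsymbol{\delta}-\boldsymbol{\delta}^\ast\|_2^2 + \|\boldsymbol{\omega}\|_2^2 + \|\boldsymbol{s}-\boldsymbol{s}^\ast\|_2^2 &= \|\boldsymbol{\delta}-\boldsymbol{\delta}^\ast\|_2^2 + \|\boldsymbol{\omega}_\mathcal{G}\|_2^2 + \|\boldsymbol{\omega}_\mathcal{L}\|_2^2 + \|\boldsymbol{s}-\boldsymbol{s}^\ast\|_2^2 \\
&\leq (1+\kappa)\left(\|\boldsymbol{\delta}-\boldsymbol{\delta}^\ast\|_2^2 + \|\boldsymbol{\omega}_\mathcal{G}\|_2^2 + \|\boldsymbol{s}-\boldsymbol{s}^\ast\|_2^2\right).
\end{align*}

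The main obstacle is justifying that $\nabla U$ is globally Lipschitz on $\mathcal{D}$ with a clean constant; this requires checking that the Hessian of $U$ is bounded on the region where all $|\delta_i - \delta_j|$ with $\{i,j\}\in\mathcal{E}$ lie in $[0,\pi/2)$, which holds because the entries of $\nabla^2 U$ are sums of terms $v_iv_jB_{ij}\cos(\delta_i-\delta_j)$ and $v_iv_jB_{ij}\sin(\delta_i-\delta_j)$, all bounded in magnitude by $\sum_{i,j} v_iv_jB_{ij}$. A minor subtlety is that the constant $\nu$ must be uniform over $\mathcal{D}$ and independent of the particular point; since all bounding constants ($\ell_U$, $\ell_u$, $1/\min_i\alpha_i$) are fixed data of the problem, this is automatic. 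Everything else is routine application of the triangle inequality and norm submultiplicativity.
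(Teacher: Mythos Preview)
Your proposal is correct and follows essentially the same approach as the paper: the lower bound is immediate from $\|\boldsymbol{\omega}\|_2^2 = \|\boldsymbol{\omega}_\mathcal{G}\|_2^2 + \|\boldsymbol{\omega}_\mathcal{L}\|_2^2 \geq \|\boldsymbol{\omega}_\mathcal{G}\|_2^2$, and the upper bound comes from expressing $\boldsymbol{\omega}_\mathcal{L}$ through the algebraic constraint~\eqref{eq:sys-dyn-vec-distribute-omegaL}, subtracting its equilibrium value, and applying Lipschitz bounds on $\nabla_\mathcal{L} U$ and on $\boldsymbol{u}_\mathcal{L}$ (from Assumption~\ref{ass:nonlinear-u}). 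The paper cites an external lemma for the $\nabla U$ Lipschitz step where you justify it directly via the bounded Hessian; the resulting constant $\nu = 1+\kappa$ matches the paper's $\nu = 1 + \tilde{\nu}^2(\eta+1)^2$ up to how the intermediate constants are packaged.
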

\begin{proof} 
See Appendix \ref{app:lem2-pf}.
\end{proof}

Lemma~\ref{lem:bound-state} indicates that it suffices to investigate the evolution of $(\boldsymbol{\delta}, \boldsymbol{\omega}_\mathcal{G}, \boldsymbol{s})$.
%since the whole state $(\boldsymbol{\delta}, \boldsymbol{\omega}, \boldsymbol{s})$ can blow up only if the partial state $(\boldsymbol{\delta}, \boldsymbol{\omega}_\mathcal{G}, \boldsymbol{s})$ does so. 
To do so, we need to first find a well-defined Lyapunov function $W(\boldsymbol{\delta}, \boldsymbol{\omega}_\mathcal{G}, \boldsymbol{s})$ such that $W(\boldsymbol{\delta}^\ast, \mathbbold{0}_{|\mathcal{G}|}, \boldsymbol{s}^\ast)=0$ and $W(\boldsymbol{\delta}, \boldsymbol{\omega}_\mathcal{G},\boldsymbol{s})>0$, $\forall(\boldsymbol{\delta}, \boldsymbol{\omega}_\mathcal{G},\boldsymbol{s})\in \mathcal{D}\setminus{(\boldsymbol{\delta}^\ast, \mathbbold{0}_{|\mathcal{G}|}, \boldsymbol{s}^\ast)}$, and then verify $\dot{W}(\boldsymbol{\delta}, \boldsymbol{\omega}_\mathcal{G}, \boldsymbol{s})\leq0$, $\forall(\boldsymbol{\delta}, \boldsymbol{\omega}_\mathcal{G},\boldsymbol{s})\in \mathcal{D}$.

Following the same method in \cite{DORFLER2017Auto}, we begin our construction of a Lyapunov function by defining the following integral function:
\begin{align}\label{eq:L(s)}
L(\boldsymbol{s}):=\sum_{i=1}^n \int_0^{s_i}u_i(\xi)\ \mathrm{d}\xi\,,    
\end{align}
which clearly satisfies $\nabla L (\boldsymbol{s})=\boldsymbol{u}(\boldsymbol{s})$. Another useful property related to $L(\boldsymbol{s})$ is given in the next lemma. 
% that it is strictly convex since $u_i(s_i)$ is strictly increasing with respect to $s_i$, $\forall i\in\mathcal{N}$.
\begin{lem}[Strict convexity of $L(\boldsymbol{s})$]\label{lem:strict-conv-L}
If Assumption~\ref{ass:nonlinear-u} holds, then the function $L(\boldsymbol{s})$ defined in \eqref{eq:L(s)} is strictly convex.
\end{lem}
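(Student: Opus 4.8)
The plan is to prove strict convexity of $L(\boldsymbol{s})$ by exploiting its separable structure: since $L(\boldsymbol{s})=\sum_{i=1}^n L_i(s_i)$ with $L_i(s_i):=\int_0^{s_i}u_i(\xi)\,\mathrm{d}\xi$, it suffices to show each scalar function $L_i$ is strictly convex on $\real$, because a finite sum of strictly convex functions is strictly convex. So I would reduce the problem to a one-dimensional statement: if $u_i(\cdot)$ is Lipschitz continuous and strictly increasing (Assumption~\ref{ass:nonlinear-u}), then $L_i(s)=\int_0^{s}u_i(\xi)\,\mathrm{d}\xi$ is strictly convex.

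For the scalar claim, I would use the characterization of convexity via monotonicity of the derivative rather than second derivatives, since Lipschitz continuity only guarantees $u_i$ is absolutely continuous (differentiable almost everywhere), not $C^1$. By the fundamental theorem of calculus, $L_i$ is continuously differentiable with $L_i'(s)=u_i(s)$, and $u_i$ is by hypothesis strictly increasing. A standard result in convex analysis states that a differentiable function on an interval is strictly convex if and only if its derivative is strictly increasing; applying this to $L_i$ with derivative $u_i$ gives strict convexity of $L_i$ immediately. Alternatively, and perhaps more self-containedly, I would verify the defining inequality directly: for $s<t$ and $\lambda\in(0,1)$, writing $r=\lambda s+(1-\lambda)t$, one has
\begin{align*}
\lambda L_i(s)+(1-\lambda)L_i(t)-L_i(r)=\lambda\!\int_r^s\! u_i(\xi)\,\mathrm{d}\xi \,+\,(1-\lambda)\!\int_r^t\! u_i(\xi)\,\mathrm{d}\xi,
\end{align*}
and since $u_i$ is strictly increasing, $\int_r^s u_i(\xi)\,\mathrm{d}\xi > u_i(r)(s-r)$ (note $s<r$, so this is a strict inequality upon flipping) and $\int_r^t u_i(\xi)\,\mathrm{d}\xi > u_i(r)(t-r)$; combining with $\lambda(s-r)+(1-\lambda)(t-r)=0$ yields that the expression is strictly positive. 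Then summing over $i$ gives strict convexity of $L$.

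The main technical subtlety — and the only place one must be careful — is that Assumption~\ref{ass:nonlinear-u} does not assume $u_i$ is differentiable, so one cannot simply write $\nabla^2 L_i = u_i' > 0$. The cleanest route is therefore the derivative-monotonicity characterization (or the direct integral estimate above), both of which need only that $u_i$ is continuous and strictly increasing, which is exactly what is assumed. I expect this step to be routine once stated carefully; the rest (separability, sum of strictly convex functions is strictly convex) is immediate.
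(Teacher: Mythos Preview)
Your proposal is correct and uses essentially the same idea as the paper: both arguments rest on the fact that a differentiable function with strictly monotone gradient is strictly convex, which here reduces to $(u_i(x_i)-u_i(y_i))(x_i-y_i)>0$ for $x_i\neq y_i$ by Assumption~\ref{ass:nonlinear-u}. The only cosmetic difference is that the paper applies the monotone-gradient criterion directly to $L$ in vector form (showing $(\nabla L(\boldsymbol{x})-\nabla L(\boldsymbol{y}))^T(\boldsymbol{x}-\boldsymbol{y})>0$ and citing~\cite{rockafellar1998}), whereas you first exploit separability and apply the scalar version coordinate-wise; just be careful that ``sum of strictly convex functions is strictly convex'' needs the separable-structure caveat, since each $L_i$ viewed on $\real^n$ is merely convex.
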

\begin{proof} 
See the Appendix \ref{app:lem3-pf}.
\end{proof}

Having characterized the properties of $L(\boldsymbol{s})$, we consider the following Lyapunov function candidate:
\begin{align}\label{eq: Lyapunov}
W(\boldsymbol{\delta}, \boldsymbol{\omega}_\mathcal{G}, \boldsymbol{s}):=&\ \pi f_0 \boldsymbol{\omega}_\mathcal{G}^T\boldsymbol{M}\boldsymbol{\omega}_\mathcal{G} \\&+ U(\boldsymbol{\delta})-U(\boldsymbol{\delta^\ast})-\nabla U (\boldsymbol{\delta^\ast})^T\left(\boldsymbol{\delta}-\boldsymbol{\delta^\ast}\right)\nonumber\\&+ L(\boldsymbol{s})-L(\boldsymbol{s^\ast})-\nabla L (\boldsymbol{s^\ast})^T\left(\boldsymbol{s}-\boldsymbol{s^\ast}\right)\,,\nonumber    \end{align}
where $(\boldsymbol{\delta}^\ast, \mathbbold{0}_{|\mathcal{G}|}, \boldsymbol{s}^\ast)$ corresponds to the unique equilibrium point of the closed-loop system \eqref{eq:sys-dyn-vec-distribute} satisfying \eqref{eq:equ-DAI}. The next result shows that this is  a well-defined candidate Lyapunov function on~$\mathcal{D}$.

\begin{lem}[Well-defined Lyapunov function]\label{lem:W}
Let Assumptions~\ref{ass:nonlinear-u} and~\ref{ass-cost-grad-scale} hold. The function $W(\boldsymbol{\delta}, \boldsymbol{\omega}_\mathcal{G}, \boldsymbol{s})$ defined in \eqref{eq: Lyapunov} satisfies
$W(\boldsymbol{\delta}^\ast, \mathbbold{0}_{|\mathcal{G}|}, \boldsymbol{s}^\ast)=0$ and $W(\boldsymbol{\delta}, \boldsymbol{\omega}_\mathcal{G}, \boldsymbol{s})>0, \forall(\boldsymbol{\delta}, \boldsymbol{\omega}_\mathcal{G}, \boldsymbol{s})\in \mathcal{D}\setminus{(\boldsymbol{\delta}^\ast, \mathbbold{0}_{|\mathcal{G}|}, \boldsymbol{s}^\ast)}$.
\end{lem}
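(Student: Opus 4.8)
The plan is to exploit the additive structure of the candidate $W$ in \eqref{eq: Lyapunov}: it is the sum of a quadratic kinetic-energy term in $\boldsymbol{\omega}_\mathcal{G}$, the Bregman divergence of $U$ evaluated between $\boldsymbol{\delta}$ and $\boldsymbol{\delta}^\ast$, and the Bregman divergence of $L$ evaluated between $\boldsymbol{s}$ and $\boldsymbol{s}^\ast$. I would show that each of these three summands is nonnegative on $\mathcal{D}$ and vanishes precisely when the corresponding state component equals its equilibrium value; then $W(\boldsymbol{\delta}^\ast,\mathbbold{0}_{|\mathcal{G}|},\boldsymbol{s}^\ast)=0$ follows because all three summands vanish there, and $W>0$ elsewhere on $\mathcal{D}$ because at least one summand is strictly positive while the other two stay nonnegative.

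The first two summands are routine. Since $f_0>0$ and $\boldsymbol{M}=\diag{m_i}$ with $m_i=2H_i>0$, we have $\pi f_0\,\boldsymbol{\omega}_\mathcal{G}^T\boldsymbol{M}\boldsymbol{\omega}_\mathcal{G}\ge 0$ with equality iff $\boldsymbol{\omega}_\mathcal{G}=\mathbbold{0}_{|\mathcal{G}|}$. By Lemma~\ref{lem:strict-conv-L}, $L$ in \eqref{eq:L(s)} is strictly convex on $\real^n$ (its gradient $\boldsymbol{u}(\boldsymbol{s})$ is coordinate-wise strictly increasing by Assumption~\ref{ass:nonlinear-u}), hence the term $L(\boldsymbol{s})-L(\boldsymbol{s}^\ast)-\nabla L(\boldsymbol{s}^\ast)^T(\boldsymbol{s}-\boldsymbol{s}^\ast)$ is a Bregman divergence of a strictly convex function, so it is $\ge 0$ and vanishes iff $\boldsymbol{s}=\boldsymbol{s}^\ast$.

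The main work is the $U$-term. I would compute the Hessian
\[
\nabla^2 U(\boldsymbol{\delta})=\sum_{\{i,j\}\in\mathcal{E}} v_i v_j B_{ij}\cos(\delta_i-\delta_j)\,(\boldsymbol{e}_i-\boldsymbol{e}_j)(\boldsymbol{e}_i-\boldsymbol{e}_j)^T,
\]
with $\boldsymbol{e}_i$ the $i$th standard basis vector, which is a weighted graph Laplacian whose edge weights $v_iv_jB_{ij}\cos(\delta_i-\delta_j)$ are strictly positive throughout $\mathcal{D}$ because $|\delta_i-\delta_j|<\pi/2$ there and $B_{ij},v_i,v_j>0$. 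By connectivity of $(\mathcal{V},\mathcal{E})$ this Laplacian is positive semidefinite with kernel exactly $\range{\mathbbold{1}_n}$ at every point of $\mathcal{D}$; since the $\boldsymbol{\delta}$-section of $\mathcal{D}$ is convex (an intersection of the slabs $|\delta_i-\delta_j|<\pi/2$) and contains $\boldsymbol{\delta}^\ast$ by hypothesis, $U$ is convex there and the term $U(\boldsymbol{\delta})-U(\boldsymbol{\delta}^\ast)-\nabla U(\boldsymbol{\delta}^\ast)^T(\boldsymbol{\delta}-\boldsymbol{\delta}^\ast)\ge 0$. For strictness I would use the integral form
\[
U(\boldsymbol{\delta})-U(\boldsymbol{\delta}^\ast)-\nabla U(\boldsymbol{\delta}^\ast)^T(\boldsymbol{\delta}-\boldsymbol{\delta}^\ast)=\int_0^1(1-t)(\boldsymbol{\delta}-\boldsymbol{\delta}^\ast)^T\nabla^2 U\!\left(\boldsymbol{\delta}^\ast+t(\boldsymbol{\delta}-\boldsymbol{\delta}^\ast)\right)(\boldsymbol{\delta}-\boldsymbol{\delta}^\ast)\,\mathrm{d}t,
\]
which vanishes only if $\boldsymbol{\delta}-\boldsymbol{\delta}^\ast\in\range{\mathbbold{1}_n}$; restricting to the hyperplane $\mathbbold{1}_n^T\boldsymbol{\delta}=\mathbbold{1}_n^T\boldsymbol{\delta}^\ast$, which is invariant for the $\boldsymbol{\delta}$-dynamics \eqref{eq:sys-dyn-vec-distribute-delta} and on which $\boldsymbol{\delta}^\ast$ lies, this forces $\boldsymbol{\delta}=\boldsymbol{\delta}^\ast$. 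This is the standard strict-convexity-modulo-$\mathbbold{1}_n$ argument used in, e.g., \cite{DORFLER2017Auto, weitenberg2018exponential}. Combining the three parts then gives the claim. The main obstacle is precisely this $U$-term: one must (i) check that the operating region is convex so that the Bregman divergence is nonnegative and the connecting segment stays where all cosines are positive, and (ii) correctly handle the $\mathbbold{1}_n$-direction degeneracy of $\nabla^2 U$ by working on the physically invariant hyperplane $\mathbbold{1}_n^T\boldsymbol{\delta}=0$ rather than on the full $\real^n$ written in the definition of $\mathcal{D}$.
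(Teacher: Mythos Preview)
Your proposal is correct and follows essentially the same term-by-term decomposition as the paper's proof: the kinetic term via $\boldsymbol{M}\succ 0$, the $L$-Bregman term via Lemma~\ref{lem:strict-conv-L}, and the $U$-Bregman term via strict convexity of $U$ on the angle region modulo $\range{\mathbbold{1}_n}$. The only difference is that the paper dispatches the $U$-term by citing \cite[Lemma~4]{weitenberg2018exponential} for the quadratic lower bound $U(\boldsymbol{\delta})-U(\boldsymbol{\delta}^\ast)-\nabla U(\boldsymbol{\delta}^\ast)^T(\boldsymbol{\delta}-\boldsymbol{\delta}^\ast)\ge\beta\|\boldsymbol{\delta}-\boldsymbol{\delta}^\ast\|_2^2$, whereas you spell out the underlying Hessian/Laplacian argument and the $\mathbbold{1}_n$-kernel handling explicitly; this is the same content at a different level of detail, not a different route.
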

\begin{proof}
See the Appendix \ref{app:lem4-pf}.
\end{proof}

%Lemma~\ref{lem:W} verifies that the Lyapunov function $W(\boldsymbol{\delta}, \boldsymbol{\omega}_\mathcal{G}, \boldsymbol{s})$ in~\eqref{eq: Lyapunov} is well-defined. 
Next, we examine the system stability through the derivative of $W(\boldsymbol{\delta}, \boldsymbol{\omega}_\mathcal{G}, \boldsymbol{s})$ along the trajectories of the closed-loop system \eqref{eq:sys-dyn-vec-distribute}, whose expression is provided by the next result.

\begin{lem}[Directional derivative of Lyapunov function]\label{lem:Wdot}
Let Assumptions~\ref{ass:nonlinear-u} and~\ref{ass-cost-grad-scale} hold. Then the derivative of the Lyapunov function $W(\boldsymbol{\delta}, \boldsymbol{\omega}_\mathcal{G}, \boldsymbol{s})$ defined in \eqref{eq: Lyapunov} along the trajectories of the closed-loop system \eqref{eq:sys-dyn-vec-distribute} is given by
\begin{align}\label{eq:Wdot-distribute}
    &\dot{W}(\boldsymbol{\delta}, \boldsymbol{\omega}_\mathcal{G}, \boldsymbol{s}) = \\
    &-2\pi f_0 \boldsymbol{\omega}_\mathcal{G}^T\boldsymbol{A}_\mathcal{G} \boldsymbol{\omega}_\mathcal{G}- \boldsymbol{u}(\boldsymbol{s})^T \boldsymbol{Z}\boldsymbol{L}_\mathrm{Q}\nabla C(\boldsymbol{u}(\boldsymbol{s}))\nonumber\\
    &-2\pi f_0 \left(\nabla_\mathcal{L} U (\boldsymbol{\delta}^\ast)-\nabla_\mathcal{L} U (\boldsymbol{\delta})+\boldsymbol{u}_\mathcal{L}(\boldsymbol{s})-\boldsymbol{u}_\mathcal{L}(\boldsymbol{s}^\ast)\right)^T\boldsymbol{A}_\mathcal{L}^{-1}\nonumber\\&\quad\cdot\left(\nabla_\mathcal{L} U (\boldsymbol{\delta}^\ast)- \nabla_\mathcal{L} U (\boldsymbol{\delta})  + \boldsymbol{u}_\mathcal{L}(\boldsymbol{s})\!-\boldsymbol{u}_\mathcal{L}(\boldsymbol{s}^\ast)\right)\,.\nonumber
\end{align}
\end{lem}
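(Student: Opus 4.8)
The plan is to obtain $\dot W$ by the chain rule and then collapse the result using the equilibrium characterization~\eqref{eq:equ-DAI} together with two standard facts, $\mathbbold{1}_n^T\nabla U(\cdot)=0$ and $\mathbbold{1}_n^T\boldsymbol{L}_\mathrm{Q}=\mathbbold{0}_n^T$. First I would note that $W$ is $C^1$ on $\mathcal{D}$: the quadratic term and $U$ are smooth, and $L(\boldsymbol{s})$ is $C^1$ with $\nabla L(\boldsymbol{s})=\boldsymbol{u}(\boldsymbol{s})$ because each $u_i$ is continuous by Assumption~\ref{ass:nonlinear-u}; moreover, after eliminating the algebraic variable $\boldsymbol{\omega}_\mathcal{L}$ through~\eqref{eq:sys-dyn-vec-distribute-omegaL}, the closed-loop vector field is continuous, so its trajectories are $C^1$ and the chain rule legitimately yields $\dot W=\nabla_{\boldsymbol{\omega}_\mathcal{G}}W\cdot\dot{\boldsymbol{\omega}}_\mathcal{G}+\nabla_{\boldsymbol{\delta}}W\cdot\dot{\boldsymbol{\delta}}+\nabla_{\boldsymbol{s}}W\cdot\dot{\boldsymbol{s}}$ along them. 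I would then record the three gradients
\[ \nabla_{\boldsymbol{\omega}_\mathcal{G}}W=2\pi f_0\boldsymbol{M}\boldsymbol{\omega}_\mathcal{G},\qquad \nabla_{\boldsymbol{\delta}}W=\nabla U(\boldsymbol{\delta})-\nabla U(\boldsymbol{\delta}^\ast),\qquad \nabla_{\boldsymbol{s}}W=\boldsymbol{u}(\boldsymbol{s})-\boldsymbol{u}(\boldsymbol{s}^\ast), \]
and substitute~\eqref{eq:sys-dyn-vec-distribute-delta}, the generator swing equation, and~\eqref{eq:bus-dyn}, keeping $\boldsymbol{\omega}=(\boldsymbol{\omega}_\mathcal{G},\boldsymbol{\omega}_\mathcal{L})$ with $\boldsymbol{\omega}_\mathcal{L}$ the algebraic function fixed by~\eqref{eq:sys-dyn-vec-distribute-omegaL}.

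Next I would simplify term by term. In the $\dot{\boldsymbol{\delta}}$ contribution the projector $I_n-\tfrac1n\mathbbold{1}_n\mathbbold{1}_n^T$ drops out since $(\nabla U(\boldsymbol{\delta})-\nabla U(\boldsymbol{\delta}^\ast))^T\mathbbold{1}_n=0$, leaving $2\pi f_0(\nabla U(\boldsymbol{\delta})-\nabla U(\boldsymbol{\delta}^\ast))^T\boldsymbol{\omega}$. Collecting all $2\pi f_0(\cdot)^T\boldsymbol{\omega}$ terms and splitting into $\mathcal{G}$- and $\mathcal{L}$-blocks, I would use~\eqref{eq:equ-DAI-delta} as $\boldsymbol{p}=\nabla U(\boldsymbol{\delta}^\ast)-\boldsymbol{u}(\boldsymbol{s}^\ast)$: the coefficient of $\boldsymbol{\omega}_\mathcal{G}$ becomes $\boldsymbol{p}_\mathcal{G}-\nabla_\mathcal{G}U(\boldsymbol{\delta}^\ast)+\boldsymbol{u}_\mathcal{G}(\boldsymbol{s}^\ast)=\mathbbold{0}_{|\mathcal{G}|}$, so that block vanishes, while the coefficient of $\boldsymbol{\omega}_\mathcal{L}$ equals $\nabla_\mathcal{L}U(\boldsymbol{\delta})-\boldsymbol{p}_\mathcal{L}-\boldsymbol{u}_\mathcal{L}(\boldsymbol{s})=-\boldsymbol{A}_\mathcal{L}\boldsymbol{\omega}_\mathcal{L}$ by~\eqref{eq:sys-dyn-vec-distribute-omegaL}. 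Writing $\boldsymbol{q}:=\boldsymbol{A}_\mathcal{L}\boldsymbol{\omega}_\mathcal{L}=\nabla_\mathcal{L}U(\boldsymbol{\delta}^\ast)-\nabla_\mathcal{L}U(\boldsymbol{\delta})+\boldsymbol{u}_\mathcal{L}(\boldsymbol{s})-\boldsymbol{u}_\mathcal{L}(\boldsymbol{s}^\ast)$ and $\boldsymbol{\omega}_\mathcal{L}=\boldsymbol{A}_\mathcal{L}^{-1}\boldsymbol{q}$, this block equals $-2\pi f_0\,\boldsymbol{\omega}_\mathcal{L}^T\boldsymbol{A}_\mathcal{L}\boldsymbol{\omega}_\mathcal{L}=-2\pi f_0\,\boldsymbol{q}^T\boldsymbol{A}_\mathcal{L}^{-1}\boldsymbol{q}$, which is the third term of~\eqref{eq:Wdot-distribute}. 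The dissipation term $-2\pi f_0\,\boldsymbol{\omega}_\mathcal{G}^T\boldsymbol{A}_\mathcal{G}\boldsymbol{\omega}_\mathcal{G}$ comes straight from the $-\boldsymbol{A}_\mathcal{G}\boldsymbol{\omega}_\mathcal{G}$ part of the swing equation.

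It remains to treat the communication term $-(\boldsymbol{u}(\boldsymbol{s})-\boldsymbol{u}(\boldsymbol{s}^\ast))^T\boldsymbol{Z}\boldsymbol{L}_\mathrm{Q}\nabla C(\boldsymbol{u}(\boldsymbol{s}))$. Using~\eqref{eq:equ-DAI-u} in the form $\boldsymbol{Z}\boldsymbol{u}(\boldsymbol{s}^\ast)=\nabla C_\mathrm{o}^{-1}(\gamma)\mathbbold{1}_n$ together with $\mathbbold{1}_n^T\boldsymbol{L}_\mathrm{Q}=\mathbbold{0}_n^T$, one gets $\boldsymbol{u}(\boldsymbol{s}^\ast)^T\boldsymbol{Z}\boldsymbol{L}_\mathrm{Q}\nabla C(\boldsymbol{u}(\boldsymbol{s}))=0$, so this term reduces to $-\boldsymbol{u}(\boldsymbol{s})^T\boldsymbol{Z}\boldsymbol{L}_\mathrm{Q}\nabla C(\boldsymbol{u}(\boldsymbol{s}))$, the second term of~\eqref{eq:Wdot-distribute}. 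Assembling the three surviving pieces gives the claimed identity.

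I do not expect a conceptual obstacle: the proof is careful bookkeeping. The two points needing attention are (i) the regularity remark that $W$ is $C^1$ even though $\boldsymbol{u}$ is merely Lipschitz and that the eliminated vector field is continuous, so that $\dot W$ along a DAE trajectory is a legitimate chain-rule expression; and (ii) the discipline of the $\mathcal{G}/\mathcal{L}$ partition, which is precisely what makes the generator-side $\boldsymbol{\omega}_\mathcal{G}$-terms cancel exactly while the load-side terms reorganize, via the algebraic constraint, into the $\boldsymbol{A}_\mathcal{L}^{-1}$ quadratic form.
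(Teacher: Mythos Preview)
Your proposal is correct and follows essentially the same approach as the paper's proof: both compute $\dot W$ via the chain rule, drop the projector using $\mathbbold{1}_n^T\nabla U(\cdot)=0$, use the equilibrium relation~\eqref{eq:equ-DAI-delta} to cancel the $\boldsymbol{\omega}_\mathcal{G}$-block and rewrite the $\boldsymbol{\omega}_\mathcal{L}$-block as an $\boldsymbol{A}_\mathcal{L}^{-1}$ quadratic form through the algebraic constraint~\eqref{eq:sys-dyn-vec-distribute-omegaL}, and eliminate the $\boldsymbol{u}(\boldsymbol{s}^\ast)$ contribution via $\boldsymbol{Z}\boldsymbol{u}(\boldsymbol{s}^\ast)\in\range{\mathbbold{1}_n}$ and $\mathbbold{1}_n^T\boldsymbol{L}_\mathrm{Q}=\mathbbold{0}_n^T$. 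Your explicit regularity remark (that $W$ is $C^1$ and the eliminated vector field is continuous) is a nice addition the paper leaves implicit.
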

\begin{proof}
See the Appendix \ref{app:lem5-pf}.
\end{proof}

Our next step is to show that $\dot{W}(\boldsymbol{\delta}, \boldsymbol{\omega}_\mathcal{G}, \boldsymbol{s})$ is nonpositive. From \eqref{eq:Wdot-distribute}, we observe that the quadratic terms defined by $\boldsymbol{A}_\mathcal{G}$ and  $\boldsymbol{A}_\mathcal{L}^{-1}$ are clearly negative. Thus, it only remains to determine the sign of the the cross-term $\boldsymbol{u}(\boldsymbol{s})^T \boldsymbol{Z}\boldsymbol{L}_\mathrm{Q}\nabla C(\boldsymbol{u}(\boldsymbol{s}))$ defined by the scaled Laplacian matrix $\boldsymbol{Z}\boldsymbol{L}_\mathrm{Q}$. % Taking this into account, we have to make a detour to investigate this cross term at length. 
The first thing we notice is that this term as a bilinear form with respect to the scaled Laplacian matrix $\boldsymbol{Z}\boldsymbol{L}_\mathrm{Q}$ can be expanded as the expression provided in the next result.

\begin{lem}[Bilinear form on $\real^n$ for scaled $\boldsymbol{L}_\mathrm{Q}$]\label{lem:bilinear-L}
$\forall\boldsymbol{x}:=\left(x_i, i \in \mathcal{N} \right) \in \real^n, \boldsymbol{y}:=\left(y_i, i \in \mathcal{N} \right)\in\real^n$,
\begin{align*}
\boldsymbol{x}^T \boldsymbol{Z}\boldsymbol{L}_\mathrm{Q} \boldsymbol{y}=\sum_{\{i,j\} \in\mathcal{E}_\mathrm{Q}}\!\!\!Q_{ij}\left( y_i  - y_j \right)\left(\zeta_i x_i-\zeta_j x_j\right)\,.
\end{align*}
\end{lem}
\begin{proof}
See the Appendix \ref{app:lem6-pf}.
\end{proof}
%
%\marginJC{This lemma is a good candidate to have in the appendix (not just the proof, but the statement itself).}
%

With the help of Lemma~\ref{lem:bilinear-L}, the problem of determining the sign of the cross term $\boldsymbol{u}(\boldsymbol{s})^T \boldsymbol{Z}\boldsymbol{L}_\mathrm{Q}\nabla C(\boldsymbol{u}(\boldsymbol{s}))$ can be solved using the following corollary.

% \begin{cor}[Sign parity]\label{cor:sign}
% If Assumption~\ref{ass-cost-grad-scale} holds, then
% \begin{align*}
%   \sign{\left( \nabla C_i(u_i)\!-\!\nabla C_j(u_j)\right)}=\sign{\left(\zeta_iu_i-\zeta_j u_j\right)}\,,\ \forall i,j \in\mathcal{N}\,,
% \end{align*}
% where $\sign$ denotes the sign function.
% \end{cor}
% \begin{proof}
% See the Appendix \ref{app:cor1-pf}.
% \end{proof}
%
%\marginJC{I will only have Corollary 2 (so remove Corollary 1), and have the proof of the fact in Corollary 1 as the first part of the proof of Corollary 2. This is to not distract the reader with too many incremental results, and instead focus on more relevant (the sign of the bilinear term).}
%

%Corollary~\ref{cor:sign} allows us to determine the sign of the cross term $\boldsymbol{u}(\boldsymbol{s})^T \boldsymbol{Z}\boldsymbol{L}_\mathrm{Q}\nabla C(\boldsymbol{u}(\boldsymbol{s}))$.

\begin{cor}[Sign of cross term]\label{cor:sign-cross}
If Assumption~\ref{ass-cost-grad-scale} holds, then
\begin{align*}
    \boldsymbol{u}(\boldsymbol{s})^T \boldsymbol{Z}\boldsymbol{L}_\mathrm{Q}\nabla C(\boldsymbol{u}(\boldsymbol{s}))\geq0
\end{align*}
with equality holding if and only if when $\nabla C(\boldsymbol{u}(\boldsymbol{s}))\in\range{\mathbbold{1}_n}$.
\end{cor}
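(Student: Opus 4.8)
The plan is to specialize Lemma~\ref{lem:bilinear-L} to the vectors $\boldsymbol{x} = \boldsymbol{u}(\boldsymbol{s})$ and $\boldsymbol{y} = \nabla C(\boldsymbol{u}(\boldsymbol{s}))$, which expresses the cross term as an edgewise sum over the communication graph,
\begin{align*}
\boldsymbol{u}(\boldsymbol{s})^T \boldsymbol{Z}\boldsymbol{L}_\mathrm{Q}\nabla C(\boldsymbol{u}(\boldsymbol{s})) = \sum_{\{i,j\} \in\mathcal{E}_\mathrm{Q}}\!\!\! Q_{ij}\bigl(\nabla C_i(u_i(s_i)) - \nabla C_j(u_j(s_j))\bigr)\bigl(\zeta_i u_i(s_i) - \zeta_j u_j(s_j)\bigr)\,.
\end{align*}
First I would invoke Assumption~\ref{ass-cost-grad-scale} to rewrite $\nabla C_i(u_i(s_i)) = \nabla C_\mathrm{o}(\zeta_i u_i(s_i))$ for each $i$, so that, abbreviating $a_i := \zeta_i u_i(s_i)$, the generic summand becomes $Q_{ij}\bigl(\nabla C_\mathrm{o}(a_i) - \nabla C_\mathrm{o}(a_j)\bigr)(a_i - a_j)$.

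Second, I would establish termwise nonnegativity. Here $Q_{ij} > 0$ since $\{i,j\}\in\mathcal{E}_\mathrm{Q}$, and $\bigl(\nabla C_\mathrm{o}(a_i) - \nabla C_\mathrm{o}(a_j)\bigr)(a_i - a_j) \geq 0$ because the strict convexity of $C_\mathrm{o}$ makes $\nabla C_\mathrm{o}$ strictly increasing (see Remark~\ref{rem:inverse}); moreover this product equals zero precisely when $a_i = a_j$. Summing the nonnegative edge contributions yields $\boldsymbol{u}(\boldsymbol{s})^T \boldsymbol{Z}\boldsymbol{L}_\mathrm{Q}\nabla C(\boldsymbol{u}(\boldsymbol{s})) \geq 0$.

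For the equality characterization, I would argue that a sum of nonnegative terms vanishes if and only if every term vanishes, i.e.\ if and only if $a_i = a_j$ for all $\{i,j\}\in\mathcal{E}_\mathrm{Q}$; since the communication graph is connected, this is equivalent to $a_i$ being constant over $\mathcal{N}$, that is, $\boldsymbol{Z}\boldsymbol{u}(\boldsymbol{s})\in\range{\mathbbold{1}_n}$. Finally, since $\nabla C_\mathrm{o}$ is strictly increasing and hence injective, $a_i = a_j$ for all $i,j$ holds if and only if $\nabla C_\mathrm{o}(a_i) = \nabla C_\mathrm{o}(a_j)$ for all $i,j$, i.e.\ $\nabla C_i(u_i(s_i)) = \nabla C_j(u_j(s_j))$ for all $i,j$, which is exactly $\nabla C(\boldsymbol{u}(\boldsymbol{s}))\in\range{\mathbbold{1}_n}$; conversely this last condition makes each first factor above vanish, giving a zero sum. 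There is no real obstacle: the only points needing a little care are propagating the edgewise equalities to all pairs via connectivity, and transferring the equality from the scaled injections to the marginal costs through the injectivity of $\nabla C_\mathrm{o}$.
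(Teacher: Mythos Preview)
Your proposal is correct and follows the same overall strategy as the paper: apply Lemma~\ref{lem:bilinear-L} with $\boldsymbol{x}=\boldsymbol{u}(\boldsymbol{s})$ and $\boldsymbol{y}=\nabla C(\boldsymbol{u}(\boldsymbol{s}))$, show each edge summand is nonnegative, and use connectivity of the communication graph for the equality case. The one difference is in the nonnegativity step: the paper establishes the sign parity $\sign(\nabla C_i(u_i)-\nabla C_j(u_j))=\sign(\zeta_i u_i-\zeta_j u_j)$ by invoking Lemma~\ref{lem:inverse-id} and the monotonicity of $\nabla C_i^{-1}$, whereas you substitute $\nabla C_i(u_i)=\nabla C_\mathrm{o}(\zeta_i u_i)$ directly from Assumption~\ref{ass-cost-grad-scale} and let the strict monotonicity of $\nabla C_\mathrm{o}$ do the work on the variables $a_i=\zeta_i u_i$. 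Your route is slightly more direct since it avoids the detour through inverse functions, but the two arguments are equivalent in content.
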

\begin{proof}
See the Appendix \ref{app:cor2-pf}.
\end{proof}

We now have all the elements necessary to establish the stability of the equilibrium as summarized in Theorem~\ref{thm:as-stable} by using $W$ as a Lyapunov function. Therefore, we are ready to finish the proof of Theorem~\ref{thm:as-stable}. Using the expression of $\dot{W}(\boldsymbol{\delta}, \boldsymbol{\omega}_\mathcal{G}, \boldsymbol{s})$ in~\eqref{eq:Wdot-distribute},
it follows directly from the fact that $\boldsymbol{A}_\mathcal{G}\succ0$, $\boldsymbol{A}_\mathcal{L}^{-1}\succ0$, and $\boldsymbol{u}(\boldsymbol{s})^T \boldsymbol{Z}\boldsymbol{L}_\mathrm{Q}\nabla C(\boldsymbol{u}(\boldsymbol{s}))\geq0$ by Corollary~\ref{cor:sign-cross} that $\dot{W}(\boldsymbol{\delta}, \boldsymbol{\omega}_\mathcal{G}, \boldsymbol{s})\leq0$. Observe from \eqref{eq:Wdot-distribute} that $\dot{W}(\boldsymbol{\delta}, \boldsymbol{\omega}_\mathcal{G}, \boldsymbol{s})\equiv0$ directly enforces $\boldsymbol{\omega}_\mathcal{G}\equiv\mathbbold{0}_{|\mathcal{G}|}$, $\boldsymbol{u}(\boldsymbol{s})^T \boldsymbol{Z}\boldsymbol{L}_\mathrm{Q}\nabla C(\boldsymbol{u}(\boldsymbol{s}))\equiv0$, and $\nabla_\mathcal{L} U (\boldsymbol{\delta})-\boldsymbol{u}_\mathcal{L}(\boldsymbol{s})\equiv\nabla_\mathcal{L} U (\boldsymbol{\delta}^\ast)-\boldsymbol{u}_\mathcal{L}(\boldsymbol{s}^\ast)=\boldsymbol{p}_\mathcal{L}$. Clearly, $\boldsymbol{\omega}_\mathcal{G}\equiv\mathbbold{0}_{|\mathcal{G}|}$ implies that $\dot{\boldsymbol{\omega}}_\mathcal{G}\equiv\mathbbold{0}_{|\mathcal{G}|}$. By \eqref{eq:sys-dyn-vec-distribute-omegaL}, $\nabla_\mathcal{L} U (\boldsymbol{\delta})-\boldsymbol{u}_\mathcal{L}(\boldsymbol{s})\equiv\boldsymbol{p}_\mathcal{L}$ ensures that $\boldsymbol{\omega}_\mathcal{L}\equiv\mathbbold{0}_{|\mathcal{L}|}$. It follows from $\boldsymbol{\omega}_\mathcal{G}\equiv\mathbbold{0}_{|\mathcal{G}|}$ and $\boldsymbol{\omega}_\mathcal{L}\equiv\mathbbold{0}_{|\mathcal{L}|}$ that $\boldsymbol{\omega}\equiv\mathbbold{0}_{n}$, which further indicates that $\dot{\boldsymbol{\delta}}\equiv\mathbbold{0}_n$ by \eqref{eq:sys-dyn-vec-distribute-delta}. Finally, by Corollary~\ref{cor:sign-cross}, $\boldsymbol{u}(\boldsymbol{s})^T \boldsymbol{Z}\boldsymbol{L}_\mathrm{Q}\nabla C(\boldsymbol{u}(\boldsymbol{s}))\equiv0$ is equivalent to $\nabla C(\boldsymbol{u}(\boldsymbol{s}))\in\range{\mathbbold{1}_n}$ constantly. Hence, $\boldsymbol{Z}\boldsymbol{L}_\mathrm{Q}\nabla C(\boldsymbol{u}(\boldsymbol{s}))\equiv\mathbbold{0}_{n}$, which together with $\boldsymbol{\omega}\equiv\mathbbold{0}_{n}$ implies that $\dot{\boldsymbol{s}}\equiv\mathbbold{0}_n$ based on \eqref{eq:bus-dyn}. Thus, we have shown that $\dot{W}(\boldsymbol{\delta}, \boldsymbol{\omega}_\mathcal{G}, \boldsymbol{s})\equiv0$ implies $\dot{\boldsymbol{\delta}}\equiv\mathbbold{0}_n$, $\dot{\boldsymbol{\omega}}_{\mathcal{G}}\equiv\mathbbold{0}_{|\mathcal{G}|}$, and $\dot{\boldsymbol{s}}\equiv\mathbbold{0}_n$. This means that the largest invariant set contained in the set of points where $\dot{W}$ vanishes is actually the set of equilibria. Therefore, by the LaSalle invariance principle~\cite[Theorem~4.4]{khalil2002nonlinear}, every trajectory of the closed-loop system \eqref{eq:sys-dyn-vec-distribute} starting within $\mathcal{D}$ converges to the equilibrium set, which by Theorem~\ref{thm:equilibrium} consists of a unique  point satisfying \eqref{eq:equ-DAI}. This concludes the proof of local asymptotic stability.

Theorem~\ref{thm:as-stable} shows that the closed-loop system \eqref{eq:sys-dyn-vec-distribute} under the generalized DAI is locally asymptotically stabilized to the unique equilibrium point characterized by Theorem~\ref{thm:equilibrium}, where the steady-state performance objectives are achieved even for a broader range of cost functions beyond quadratic ones.

\section{Reinforcement Learning for Optimal Transient Frequency Control}\label{sec:RL}
In the previous section, we showed that any controller $\boldsymbol{u}$ that is monotonic and through the origin would drive the system to the optimal steady-state solution. However, different controllers in this class may lead to very different transient behaviors. In this section, we focus on integrating reinforcement learning (RL) into our proposed generalized DAI to design a controller that optimizes the transient performance of the system without jeopardizing its stability. Basically, after parameterizing the control policy $u_i(s_i)$ that maps the integral variable $s_i$ of the generalized DAI to the controllable power injection $u_i$ on each bus as a monotonic neural network, we train those neural networks by a RL algorithm based on a recurrent neural network (RNN)~\cite{cui2022tps}. 
% We now discuss this in more detail by mostly following the approach from~\cite{cui2022tps}. 

\subsection{Monotonic Neural Networks for Stability Guarantee}
% RL is a powerful tool for learning from interactions with an uncertain environment how to map situations to actions such that certain performance is optimized~\cite{Sutton2018RL}, which makes it quite suitable to address the optimal transient frequency control problem \eqref{eq:opt-ts} under unknown power disturbances. Yet, the Achilles' heel of standard RL algorithms is their lack of provable stability guarantees, which might pose a threat to the operation of power systems. Fortunately, we can overcome this stability issue by an ingenious parameterization of the control policy $u_i(s_i)$ that maps the integral variable $s_i$ of the generalized DAI to the controllable power injection $u_i$ on each bus.

Recall from Section~\ref{sec:Generalized-DAI} that the stability of the closed-loop system \eqref{eq:sys-dyn-vec-distribute} under the generalized DAI is ensured by any nonlinear control policy $u_i(s_i)$ that is a Lipschitz continuous and strictly increasing function of $s_i$ with $u_i(0)=0$ provided that the cost of each buses satisfy Assumption~\ref{ass-cost-grad-scale}. Within this class of stabilizing controllers, we want to find one that has the best transient performance. That is, we want to minimize the frequency excursions and control efforts when the frequencies are recovering to their nominal values. 

In principle, optimizing all controllers satisfying Assumption~\ref{ass:nonlinear-u} is an infinite-dimensional problem. To make the problem tractable, we need to parameterize $u_i(s_i)$ in some way. Neural networks emerge as as a natural candidate due to their potential for universal approximation~\cite{zhouNIPS2017express}. The key requirement we impose on any neural network used is that $u_i$ should be an increasing function that goes through the origin. 

Here we use the result in~\cite[Theorem~2]{cui2022tps}, which states any function of our interest can be approximated accurately enough by a single hidden layer fully-connected neural network with rectified linear unit (ReLU) activation functions\footnote{ReLU is a widely used activation function which applies  $\max \left(0,x\right)$ to any element $x$ of its argument.} for properly designed weights and biases of the neural network. Such a neural network is called a \emph{stacked ReLU monotonic neural network} whose architecture is provided in Fig.~\ref{fig:stack-relu-diagram}.

% Thus, it is desirable to parameterize $u_i(s_i)$ in a way such that there is a potential for universal approximation~\cite{zhouNIPS2017express} to any function satisfying above requirements summarized in Assumption~\ref{ass:nonlinear-u}. However, we also would like to avoid an infinite-dimensional optimization problem that is intractable. Clearly, neural networks are viable options for parameterizing $u_i(s_i)$ due to their expressiveness through a limited number of parameters~\cite{zhouNIPS2017express}. According to~\cite[Theorem~2]{cui2022tps}, any function of our interest can be approximated accurately enough by a single hidden layer fully-connected neural network with rectified linear unit (ReLU) activation functions for properly designed weights and biases of the neural network. Such a neural network is called a \emph{stacked ReLU monotonic neural network}~\cite[Lemma~5]{cui2022tps} whose architecture is provided in Fig.~\ref{fig:stack-relu-diagram}. 

As illustrated in Fig.~\ref{fig:stack-relu-diagram}, each stacked ReLU monotonic neural network is essentially a parameterized mapping from the state $s_i$ to the control $u_i$ with parameters $\boldsymbol{k}^{+}_{i}:=\left(k^+_{i,j}, j \in \mathcal{D} \right)$, $\boldsymbol{b}^{+}_{i}:=\left(b^+_{i,j}, j \in \mathcal{D} \right)$, $\boldsymbol{k}^{-}_{i}:=\left(k^-_{i,j}, j \in \mathcal{D} \right)$, and $\boldsymbol{b}^{-}_{i}:=\left(b^-_{i,j}, j \in \mathcal{D} \right)$, where $\mathcal{D}:=\{1,\dots, d\}$ and $2d$ is the number of neurons in the hidden layer. Here, the vectors $\boldsymbol{k}^{+}_{i}$ and $\boldsymbol{k}^{-}_{i}$ are referred to as the ``weight'' vectors and the vectors $\boldsymbol{b}^{+}_{i}$ and $\boldsymbol{b}^{-}_{i}$ are referred to as the ``bias'' vectors. 

\begin{figure}[ht]
\centering
\includegraphics[width=0.6\columnwidth]{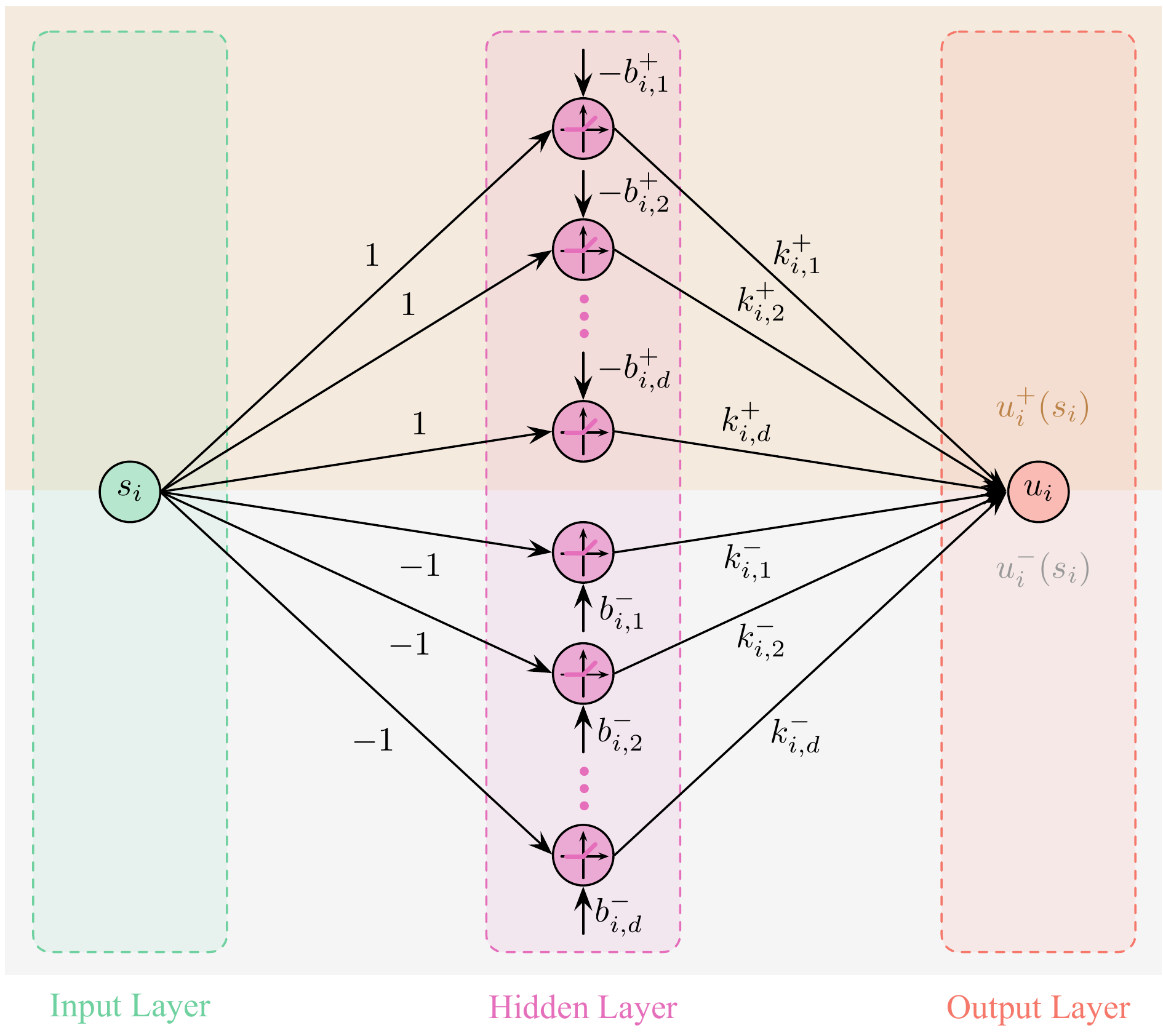}
\caption{Stacked ReLU monotonic neural network.}
\label{fig:stack-relu-diagram}
\end{figure}

These weights and biases restrict the shape of $u_i(s_i)$. The output of the top $d$ neurons in the hidden layer is zero to $s_i \leq 0$. Similarly, the bottom $d$ neurons zeros out any positive inputs. Therefore, we write $u_i(s_i)=u_i^+(s_i)+u_i^-(s_i)$, where $u_i^+(s_i)$ corresponds to the top $d$ neurons and operate if the input is positive, and $u_i^-(s_i)$ corresponds to the bottom $d$ neurons and operate if the input is negative. 
% Because of this symmetry, it suffices to study the function $u_i^+$ and show it is increasing for positive inputs. 

The input layer transforms the input $s_i$ linearly as $\left(s_i-b^+_{i,j}\right)$ and then processed by the ReLU in the $j$th hidden neuron, $\forall j \in \mathcal{D}$.  Denote ReLU operator as $\sigma(\cdot)$ for simplicity. Then, the output of the $j$th hidden neuron in the top part becomes $\sigma(s_i-b^+_{i,j})=\max \left(0,s_i-b^+_{i,j}\right)$, which is linearly weighted by $k^+_{i,j}$ before contributing to the final output. Therefore, the contribution from the $j$th hidden neuron to the final output is $k^+_{i,j}\sigma(s_i-b^+_{i,j})= k^+_{i,j}\max \left(0,s_i-b^+_{i,j}\right)$. Therefore, the $j$th hidden neuron is activated only if the state $s_i$ exceeds the threshold $b^+_{i,j}$, as illustrated in Fig.~\ref{fig:principle-stack}. At the output layer, the contributions from all $d$ hidden neurons in the top part yield  
\begin{align*}
u^{+}_i(s_i):=\sum_{j=1}^dk^+_{i,j}\sigma(s_i-b^+_{i,j})=\left(\boldsymbol{k}^{+}_{i}\right)^T\sigma\left(\mathbbold{1}_d s_i-\boldsymbol{b}^+_{i}\right)\,.    
\end{align*}
The negative part $u_i^-$ can be treated the same way. 

% \begin{align*}
% k^+_{i,j}\sigma(s_i-b^+_{i,j})=&\ k^+_{i,j}\max \left(0,s_i-b^+_{i,j}\right)\\=&
%     \begin{cases}
%     k^+_{i,j}(s_i-b^+_{i,j}) & \text{if}\ s_i>b^+_{i,j}\\
%     0 & \text{if}\ s_i\leq b^+_{i,j}
%     \end{cases}\,,\quad\forall j \in \mathcal{D}\,.\end{align*}
% To put it another way, 

Of course, for $u_i$ to be increasing, the weights and biases need to be constrained. We do this by making the slope always positive, as shown in Fig.~\ref{fig:control-curve}. This gives rise to the following requirements on weights and biases:  
\begin{subequations}\label{eq:cons-bk}
\begin{align}
    b^{-}_{i,d}\leq \dots \leq b^{-}_{i,2}\leq b^{-}_{i,1}\!=0&=b^{+}_{i,1}\leq b^{+}_{i,2}\leq\!\dots\leq b^{+}_{i,d}\,,\label{eq:bias}\\-\infty<\sum_{j=1}^{\iota} k^-_{i,j}<0&<\sum_{j=1}^{\iota} k^+_{i,j}<\infty\,,\ \forall \iota \in\! \mathcal{D}\,.\label{eq:weights}
\end{align}
\end{subequations}
If these are satisfied, the stacked ReLU monotonic neural network is a control policy $u_i(s_i)$ satisfying Assumption~\ref{ass:nonlinear-u}.
% with a shape as sketched in Fig.~\ref{fig:control-curve}.

\begin{figure}[t!]
\centering
\subfigure[Monotonic Control policy]
{\includegraphics[width=1\columnwidth]{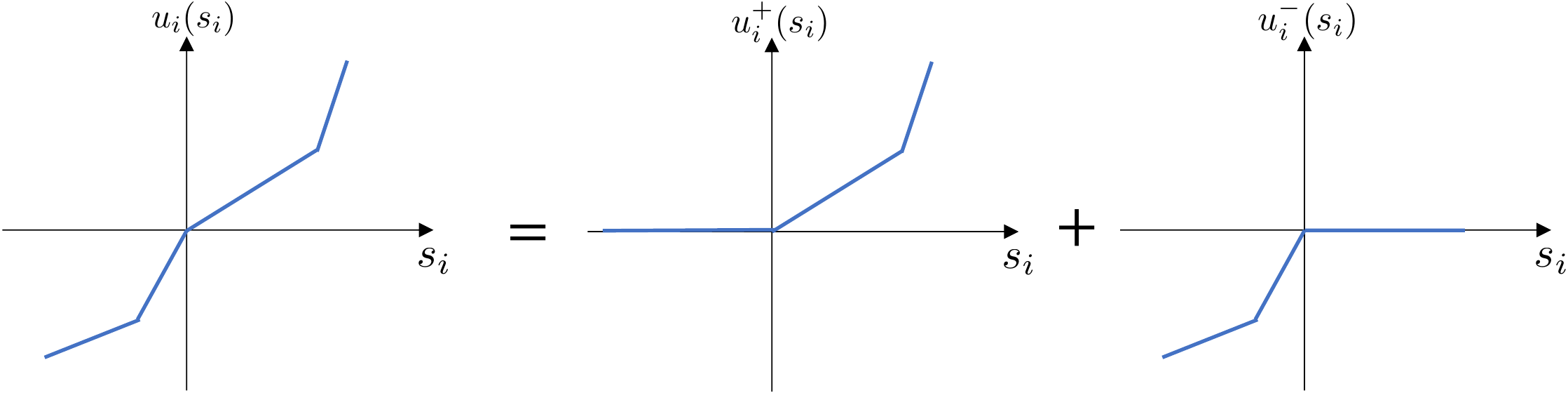}\label{fig:control-curve}}
\hfil
\subfigure[Principle of stacking hidden neurons]
{\includegraphics[width=1\columnwidth]{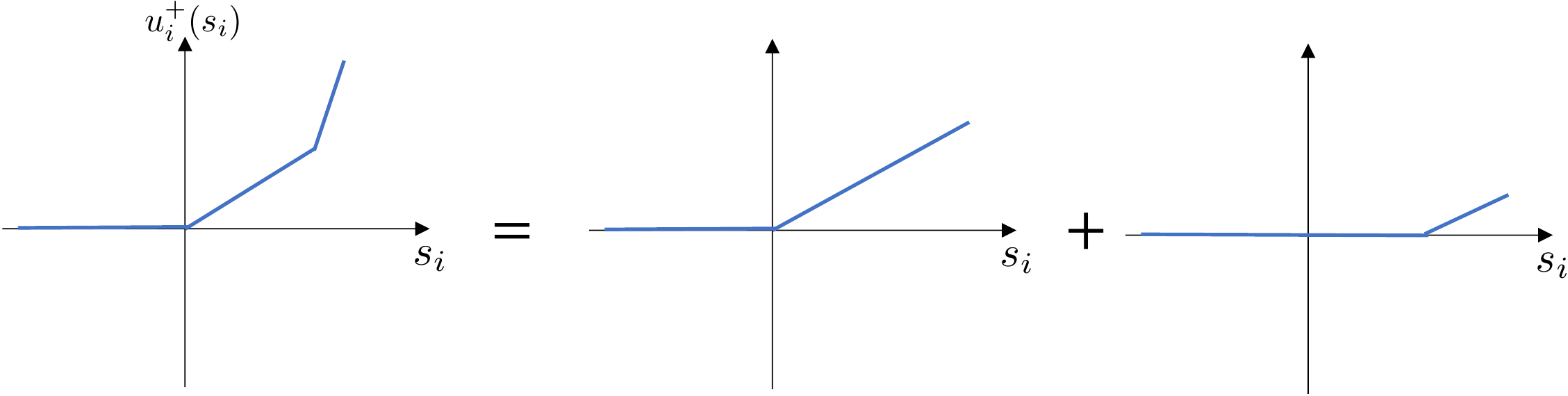}\label{fig:principle-stack}}
\caption{Sketch of how hidden neurons in the $i$th stacked ReLU monotonic neural network help to form a control policy $u_i(s_i)$.}
\label{fig:piece-linear}
\end{figure}

The inequality constraints in \eqref{eq:cons-bk} are not trivial to enforce when training the neural network. It turns out there is a simple trick to address this challenge. We introduce a group of intermediate parameters $\boldsymbol{\mu}^{+}_{i}:=\left(\mu^+_{i,j}, j \in \mathcal{D} \right)\in \real^d$, $\boldsymbol{\chi}^{+}_{i}:=\left(\chi^+_{i,j}, j \in \mathcal{D}\setminus{\{d\}} \right)\in \real^{d-1}$, $\boldsymbol{\mu}^{-}_{i}:=\left(\mu^-_{i,j}, j \in \mathcal{D} \right)\in \real^d$, and $\boldsymbol{\chi}^{-}_{i}:=\left(\chi^-_{i,j}, j \in \mathcal{D}\setminus{\{d\}} \right)\in \real^{d-1}$ that are unconstrained. The original parameters $\boldsymbol{k}^{+}_{i}$, $\boldsymbol{b}^{+}_{i}$, $\boldsymbol{k}^{-}_{i}$, and $\boldsymbol{b}^{-}_{i}$ are given by
\begin{subequations}
\begin{align}
\!\!\!\!&k^+_{i,1}\!=\!\left(\mu^+_{i,1}\right)^2\!\!\!\,,\  k^+_{i,j}\!=\!\left(\mu^+_{i,j}\right)^2\!\!-\!\left(\mu^+_{i,j-1}\right)^2\!\!\,,\ \!\forall j \!\in\! \mathcal{D}\!\setminus{\!\{1\}}\,,\\
\!\!\!\!&k^-_{i,1}\!=-\left(\mu^-_{i,1}\right)^2\,,\  k^-_{i,j}=-\left(\mu^-_{i,j}\right)^2+\left(\mu^-_{i,j-1}\right)^2\,,\\
\!\!\!\!&b^+_{i,1}\!=b^-_{i,1}\!=0\,,\ b^+_{i,j}\!=\!\!\sum_{l=1}^{j-1}\!\left(\chi^+_{i,l}\right)^2\!\!\!\,,\ b^-_{i,j}\!=\!-\!\!\sum_{l=1}^{j-1}\!\left(\chi^-_{i,l}\right)^2\!\!\!\,.
\end{align}
\end{subequations}
In this way, the requirements in \eqref{eq:cons-bk} naturally hold. Thus, the remaining problem is just the search of optimal parameters $\boldsymbol{\mu}^{+}_{i}$, $\boldsymbol{\chi}^{+}_{i}$, $\boldsymbol{\mu}^{-}_{i}$, and $\boldsymbol{\chi}^{-}_{i}$ through the training of neural networks. To highlight the dependence of the control policy $u_i(s_i)$ generated by the $i$th neural network on parameters $\boldsymbol{\mu}^{+}_{i}$, $\boldsymbol{\chi}^{+}_{i}$, $\boldsymbol{\mu}^{-}_{i}$, and $\boldsymbol{\chi}^{-}_{i}$, we will use the notation $u_i(s_i;\Xi_i(d))$ in the rest of this manuscript, where $\Xi_i(d)$ denote a set of parameters $\Xi_i:=\{\boldsymbol{\mu}^{+}_{i},\boldsymbol{\chi}^{+}_{i},\boldsymbol{\mu}^{-}_{i},\boldsymbol{\chi}^{-}_{i}\}$ associated with a stacked ReLU monotonic neural network that has $2d$ hidden neurons.

\subsection{Recurrent Neural Network-Based Reinforcement Learning}\label{ssec:RNN-train}
A recurrent neural network (RNN) is a class of neural networks that have a looping mechanism to allow information learned from the previous step to flow to the next step.
% This feature makes a RNN an expert in learning parameters related to dynamic systems, where the prior states usually affect the following states. 
Thus, to efficiently train the stacked ReLU monotonic neural networks for parameters $\Xi_i(d)$, $\forall i \in\! \mathcal{N}$,  we adopt the RNN-based RL algorithm proposed by~\cite{cui2022tps}, whose scheme is shown in Fig.~\ref{fig:RL}. 

To fit into the RNN architecture, we have to discretize the problem \eqref{eq:opt-ts}. Let the $l$th sampling instant after the initial time $t_0$ be $(t_0 + lh)$, where $h$ is the sampling interval that is small enough. Then, using voltage angles $\boldsymbol{\theta}:=\left(\theta_i, i \in \mathcal{N} \right) \in \real^n$ as an example, we would like to approximate the continuous-time state at the $l$th sampling instant $\boldsymbol{\theta}(t_0 + lh)$ as a discrete-time state $\boldsymbol{\theta}^{\langle l\rangle}$. Provided that $\boldsymbol{\theta}^{\langle 0\rangle} = \boldsymbol{\theta}(t_0)$, the simplest way to obtain $\boldsymbol{\theta}^{\langle l\rangle} \approx \boldsymbol{\theta}(t_0 + lh)$ is through the reccurence relation via Euler method, i.e., $\boldsymbol{\theta}^{\langle l+1\rangle} \!= \boldsymbol{\theta}^{\langle l\rangle} + 2 \pi f_0 h \boldsymbol{\omega}^{\langle l\rangle}$. We solve the discretized problem below instead:   
\begin{eqnarray}
		\label{eq:opt-ts-obj-dt}
		\min_{\Xi_i(d), i \in \mathcal{N}} \!\!\!\!\!\!\!&& \sum_{i \in \mathcal{G}} \|\omega_i^{\langle l\rangle}\|_\infty+\rho\sum_{i=1}^n\tilde{C}_{i,T}  \label{eq:opt-ts-dis} \\
		\label{eq:opt-ts-con-dis}
		\mathrm{s.t.} \!\!\!\!\!\!\!\!&& \boldsymbol{\omega}_\mathcal{G}^{\langle l+1\rangle} \!\!=\! \left(I_{|\mathcal{G}|}-
    h\boldsymbol{M}^{-1}\boldsymbol{A}_\mathcal{G}\right) \boldsymbol{\omega}_\mathcal{G}^{\langle l\rangle}\nonumber\\&&\qquad\qquad\!\!+h\boldsymbol{M}^{-1}\!\left(\! - \nabla_\mathcal{G} U (\boldsymbol{\theta}^{\langle l\rangle})\!+\!\boldsymbol{p}_\mathcal{G}\!+\! \boldsymbol{u}_\mathcal{G}(\boldsymbol{s}^{\langle l\rangle})\right)\!\!\,,\nonumber\\
    && \boldsymbol{\omega}_\mathcal{L}^{\langle l\rangle} = \ \boldsymbol{A}_\mathcal{L}^{-1}\left(- \nabla_\mathcal{L} U (\boldsymbol{\theta}^{\langle l\rangle}) + \boldsymbol{p}_\mathcal{L} + \boldsymbol{u}_\mathcal{L}(\boldsymbol{s}^{\langle l\rangle})\right)\,,\nonumber\\
    &&  
    \boldsymbol{\theta}^{\langle l+1\rangle} \!= \boldsymbol{\theta}^{\langle l\rangle} + 2 \pi f_0 h \boldsymbol{\omega}^{\langle l\rangle}\,,\nonumber\\
    &&\boldsymbol{s}^{\langle l+1\rangle} \!= \boldsymbol{s}^{\langle l\rangle}\! -\!h\!\left(2 \pi f_0 \boldsymbol{\omega}^{\langle l\rangle}\! +\! \boldsymbol{Z}\boldsymbol{L}_\mathrm{Q}\nabla C(\boldsymbol{u}(\boldsymbol{s}^{\langle l\rangle}))\right)\!\!\,,\nonumber\\&&\boldsymbol{u}(\boldsymbol{s}^{\langle l\rangle}):=\left(u_i(s_i^{\langle l\rangle};\Xi_i(d)), i \in \mathcal{N} \right)\,,\nonumber
	\end{eqnarray}
with
\begin{subequations}\label{eq:compute-loss}
\begin{align}
    &\|\omega_i^{\langle l\rangle}\|_\infty:=\max_{l\in\{0,\dots, \lfloor T/h\rfloor-1\}} |\omega_i^{\langle l+1\rangle}|\,,\\ &\tilde{C}_{i,T}:=\dfrac{1}{\lfloor T/h\rfloor} \!\!\!\sum_{l=0}^{\lfloor T/h\rfloor-1}\!\!\!C_i(u_i(s_i^{\langle l\rangle};\Xi_i(d)))\,,
\end{align}
\end{subequations}
where $\lfloor T/h\rfloor$ denotes the floor of $T/h$.

\begin{figure}[t!]
\centering
\subfigure[RL scheme for training stacked ReLU monotonic neural networks]
{\includegraphics[width=0.75\columnwidth]{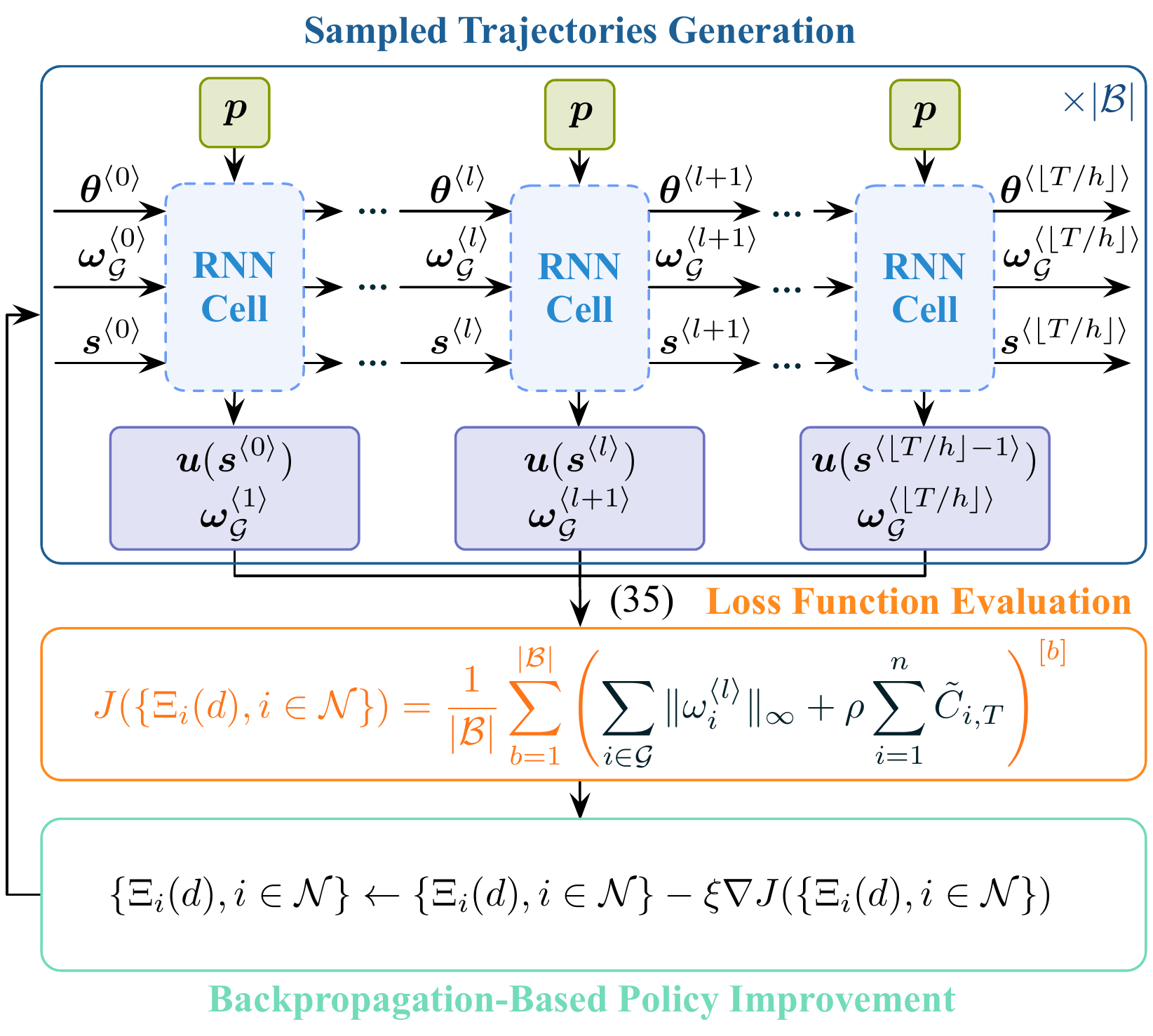}\label{fig:RL}}
\hfil
\subfigure[Structure of the recurrent neural network cell]
{\includegraphics[width=0.9\columnwidth]{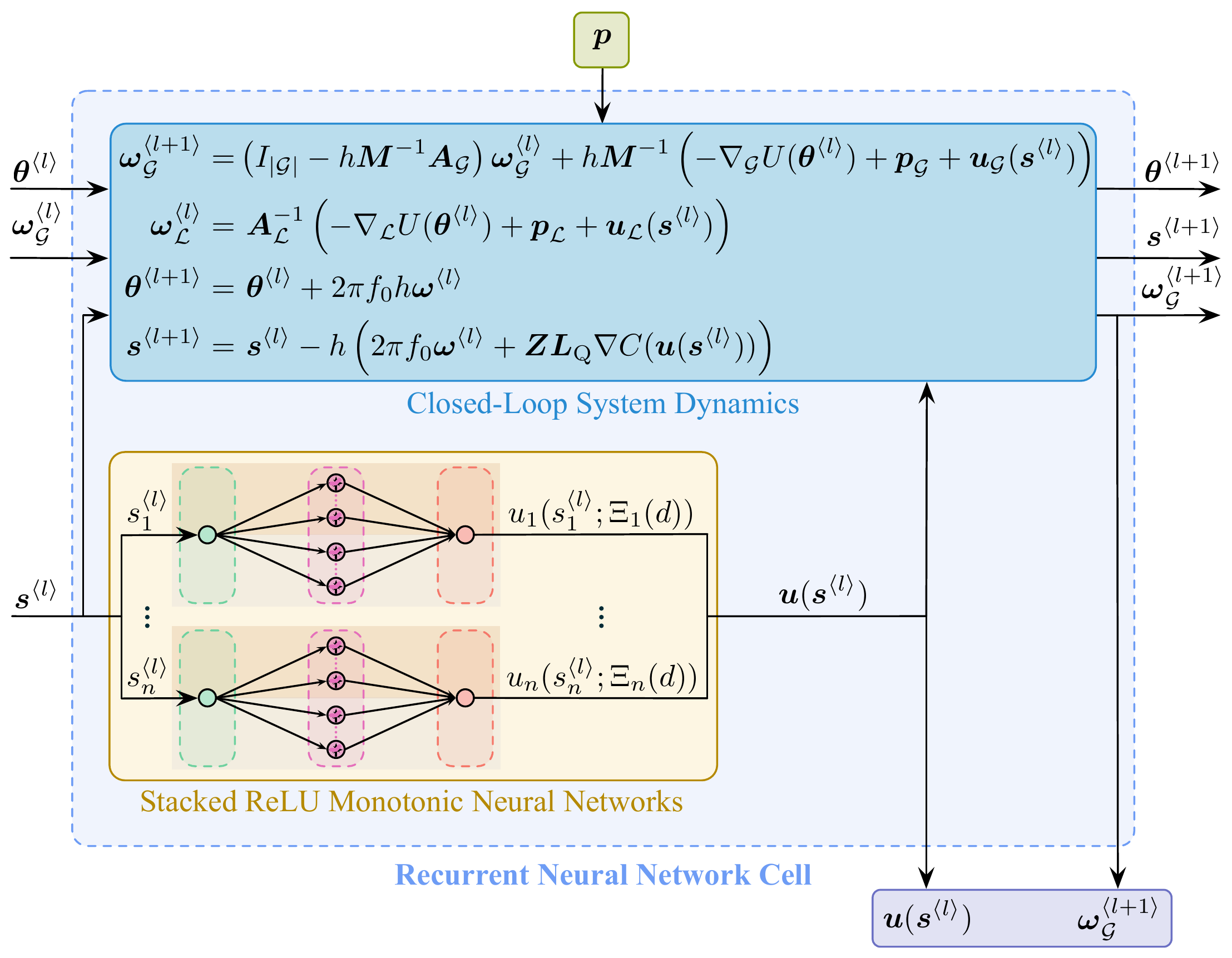}\label{fig:rnn-cell}}
\caption{Illustration of the RNN-based RL algorithm.}
\end{figure}

After a random initialization of $\{\Xi_i(d), i \in\mathcal{N}\}$, the RNN-based RL algorithm solves the problem \eqref{eq:opt-ts-obj-dt} by gradually learning the optimal $\{\Xi_i(d), i \in\mathcal{N}\}$ through an iterative process mainly composed of three parts --- sampled trajectories generation, loss function evaluation, and backpropagation-based policy improvement --- as illustrated in Fig.~\ref{fig:RL}. For a given operating point $(\boldsymbol{\theta}^{\langle 0\rangle}, \boldsymbol{\omega}_\mathcal{G}^{\langle 0\rangle}, \boldsymbol{s}^{\langle 0\rangle})$ of the power system, a sample of system trajectory over a time horizon $T$ following sudden power disturbances can be generated by randomly initializing the constant input $\boldsymbol{p}$ to the RNN cell and then simulating forward $\lfloor T/h\rfloor$ time steps. The detailed structure of the RNN cell is shown in Fig.~\ref{fig:rnn-cell}, where the control signals $\boldsymbol{u}(\boldsymbol{s}^{\langle l\rangle})$ produced by $n$ stacked ReLU monotonic neural networks with parameters $\{\Xi_i(d), i \in\mathcal{N}\}$ are applied to the discretized closed-loop system with states $(\boldsymbol{\theta}^{\langle l\rangle}, \boldsymbol{\omega}_\mathcal{G}^{\langle l\rangle}, \boldsymbol{s}^{\langle l\rangle})$ for the generation of $(\boldsymbol{\theta}^{\langle l+1\rangle}, \boldsymbol{\omega}_\mathcal{G}^{\langle l+1\rangle}, \boldsymbol{s}^{\langle l+1\rangle})$. In this way, the RNN cell evolves forward for $l=0,\dots, \lfloor T/h\rfloor-1$, which yields one trajectory under the parameters $\{\Xi_i(d), i \in\mathcal{N}\}$ that are shared temporally. The loss function related to this trajectory is defined as the objective function of the problem \eqref{eq:opt-ts-obj-dt}, which can be computed by applying \eqref{eq:compute-loss} to the stored historical outputs of the RNN cell. Usually, to improve the efficiency of training, multiple trajectories are generated parallelly, where each of them is excited by a randomly initialized constant input $\boldsymbol{p}$. The collection of these  trajectories is called a batch and the number of such trajectories is called the batch size. Then, the loss function related to a batch of size $|\mathcal{B}|$ is defined as the mean of the losses related to individual trajectories in the batch, i.e.,
\begin{align*}
    J(\{\Xi_i(d), i \in \mathcal{N}\}):=\!\dfrac{1}{|\mathcal{B}|}\!\sum_{b=1}^{|\mathcal{B}|}\!\left(\sum_{i \in \mathcal{G}} \|\omega_i^{\langle l\rangle}\|_\infty+\rho\sum_{i=1}^n\tilde{C}_{i,T}\right)^{[b]}\!\!\!\,,
\end{align*}
where, with abuse of notation, we simply introduce a superscript $[b]$ to denote the loss along the $b$th trajectory in the batch rather than accurately distinguish $\|\omega_i^{\langle l\rangle}\|_\infty$ and $\tilde{C}_{i,T}$ along different trajectories to avoid complicating the notations too much. Once the loss of the batch $J(\{\Xi_i(d), i \in \mathcal{N}\})$ has been calculated, the parameters $\{\Xi_i(d), i \in\mathcal{N}\}$ are adjusted according to the batch gradient descent method with a suitable learning rate $\xi$, i.e.,
\begin{align*}
    \{\Xi_i(d), i \in \mathcal{N}\}\leftarrow\{\Xi_i(d), i \in \mathcal{N}\}-\xi\nabla J(\{\Xi_i(d), i \in \mathcal{N}\})\,,
\end{align*}
where the gradient is approximated efficiently through backward gradient propagation. This ends one epoch of the iterative training process. The procedure is repeated until the number of epochs $E$ is reached, which produces the optimal $\{\Xi_i(d), i \in\mathcal{N}\}$ for the stacked ReLU monotonic neural networks so that the corresponding control policy $u_i(s_i;\Xi_i(d))$ on each bus helps the generalized DAI to best handle the optimal transient frequency control problem \eqref{eq:opt-ts}.
% After the optimal control policy $\boldsymbol{u}(\boldsymbol{s})=\left(u_i(s_i;\Xi_i(d)), i \in \mathcal{N} \right)$ is learned, the complete generalized DAI can be used in real-time power systems.

% backpropagation to minimize the objective function of the optimal transient frequency control problem \eqref{eq:opt-ts} along a batch of sampled trajectories generated by running the control policies .

%With this in mind,

\section{Numerical Illustrations}\label{sec:simu}
In this section, we provide numerical validation for the performance of our RL-DAI on the $39$-bus New England system~\cite{athay1979practical}. First, we will train the optimal control policy $\boldsymbol{u}(\boldsymbol{s})=\left(u_i(s_i;\Xi_i(d)), i \in \mathcal{N} \right)$ for the RL-DAI on the power system model as described in Section~\ref{ssec:RNN-train}. Then, we will validate the performance of the trained RL-DAI in response to sudden step changes in power. 

% \begin{figure}[ht]
% \centering
% \includegraphics[width=0.95\columnwidth]{figures/fre_OS.eps}
% \caption{Frequency deviations in the original system without additional frequency control $\boldsymbol{u}$ when a $\SI{-3}{\pu}$ step change in power injection is introduced to buses $13$, $21$, and $27$.}
% \label{fig:sw-simu}
% \end{figure}

% First, we will train the optimal control policy $\boldsymbol{u}(\boldsymbol{s})=\left(u_i(s_i;\Xi_i(d)), i \in \mathcal{N} \right)$ for the RL-DAI on an ideal power system model as described in Section~\ref{ssec:RNN-train}. Then, we will validate the performance of the well-trained RL-DAI on a more realistic power system test case in Power System Toolbox (PST) [28] for Matlab. 

The dynamic model of the $39$-bus New
England system contains $10$ generator buses and $29$ load buses, whose union is denoted as $\mathcal{N}$. Each of the $10$ generator buses is distinctly indexed by some $i\in\left\{30,\ldots,39\right\}:=\mathcal{G}$ and each of the $29$ load buses is distinctly indexed by some $i\in\left\{1,\ldots,29\right\}:=\mathcal{L}$. The generator inertia constant $m_i$, the voltage magnitude $v_i$, and the  susceptance $B_{ij}$ are directly obtained from the dataset. Given that the values of frequency sensitivity coefficients are not provided by the dataset, we set $\alpha_i = \SI{150}{\pu}$, $\forall i\in\mathcal{G}$, and $\alpha_i= \SI{100}{\pu}$, $\forall i\in\mathcal{L}$.\footnote{All per unit values are on the system base, where the system power base
is $S_0=\SI{100}{\mega\VA}$ and the nominal system frequency is $f_0=\SI{50}{\hertz}$.}

We then add frequency control $u_i$ governed by our RL-DAI to each bus, where there is a connected underlying communication graph $\left(\mathcal{V},\mathcal{E}_\mathrm{Q} \right)$.
% is represented by the blue dashed lines in Fig.~\ref{fig:39bus} with $Q_{ij}=125$, $\forall \{i,j\} \in\mathcal{E}_\mathrm{Q}$.
The operational cost functions are assumed to be $C_i(u_i)= c_i u_i^4/4+b_i, \forall i\in\mathcal{N},$ where the
cost coefficients $c_i$ and $b_i$ are generated randomly from $(0,1)$ and $(0,0.001)$, respectively. Thus, $\forall i\in\mathcal{N}$, $\nabla C_i(u_i)=c_i u_i^3$ by \eqref{eq:cost-power-grad} and $\zeta_i=c_i^{1/3}$ by \eqref{eq:zeta}, which determines the evolution of the integral variable $s_i$ in RL-DAI according to \eqref{eq:DAI-s-general}. Then we train the control policy $\boldsymbol{u}(\boldsymbol{s})=\left(u_i(s_i;\Xi_i(d)), i \in \mathcal{N} \right)$ as described in Section~\ref{sec:RL}.

\subsection{Training of Stacked ReLU Monotonic Neural Networks}
Following the RNN-based RL algorithm illustrated in Section~\ref{ssec:RNN-train}, we can train the stacked ReLU monotonic neural networks that construct control policy $\boldsymbol{u}(\boldsymbol{s})$ satisfying Assumption~\ref{ass-cost-grad-scale} for RL-DAI. Specifically, each sample of system trajectory is generated by disturbing random step power changes $\boldsymbol{p}$ to the system that is initially at a given supply-demand balanced setpoint, where each $p_i$ is drawn uniformly from $(-5,5)\SI{}{\pu}$. %We use a neural network with one hidden layer with 40 neurons (ReLU activation function) and the continuous system is discretized with a sampling interval of 0.5ms. We train with 50 epochs, each with 64 batches with a learning rate of 0.4 (exponentially decayed). 
The values of all the hyperparameters mentioned in Section~\ref{ssec:RNN-train} are summarized as follows: $\rho=0.01$, $d=20$, $h=\SI{0.5}{\milli\second}$, $T=\SI{2.5}{\second}$, $|\mathcal{B}|=64$, $E=50$, and $\xi=0.4$ (exponentially decayed).

\subsection{Time-Domain Responses Following Power imbalances}
For the purpose of comparison, the frequency deviation
of the original system without additional frequency control $\boldsymbol{u}$ when there is a step change
of $\SI{3}{\pu}$ in power consumption at buses $13$, $21$, and $27$ is provided in Fig.~\ref{fig:sw-simu}. Clearly, in the absence of proper control, there exists noticeable steady-state frequency deviation from the nominal frequency. The performance
of the system under RL-DAI in the same scenario is given in Fig.~\ref{fig:perform-RLDAI}. Some observations are in order. First, observe from Fig.~\ref{fig:RLDAI-fre} that RL-DAI can perfectly restore the frequencies to the nominal value as predicted by Theorem~\ref{thm:equilibrium}. Second, Fig.~\ref{fig:mc_RL-DAI} confirms that RL-DAI help generators and loads asymptotically achieve identical marginal costs and thus the optimal steady-state economic dispatch problem is solved. Third, a comparison between Fig.~\ref{fig:sw-simu} and Fig.~\ref{fig:RLDAI-fre} shows that RL-DAI improves the frequency Nadir.
Last but not least, the nuances between the shapes of the trajectories of $\boldsymbol{s}$ and $\boldsymbol{u}$ in Fig.~\ref{fig:s_RL-DAI} and Fig.~\ref{fig:u_RL-DAI} are indicators of the nonlinearity of the learned control policy $\boldsymbol{u}(\boldsymbol{s})$. Clearly, this nonlinearity has not jeopardized the stability of the system.

\begin{figure}[t!]
\centering
\includegraphics[width=0.96\columnwidth]{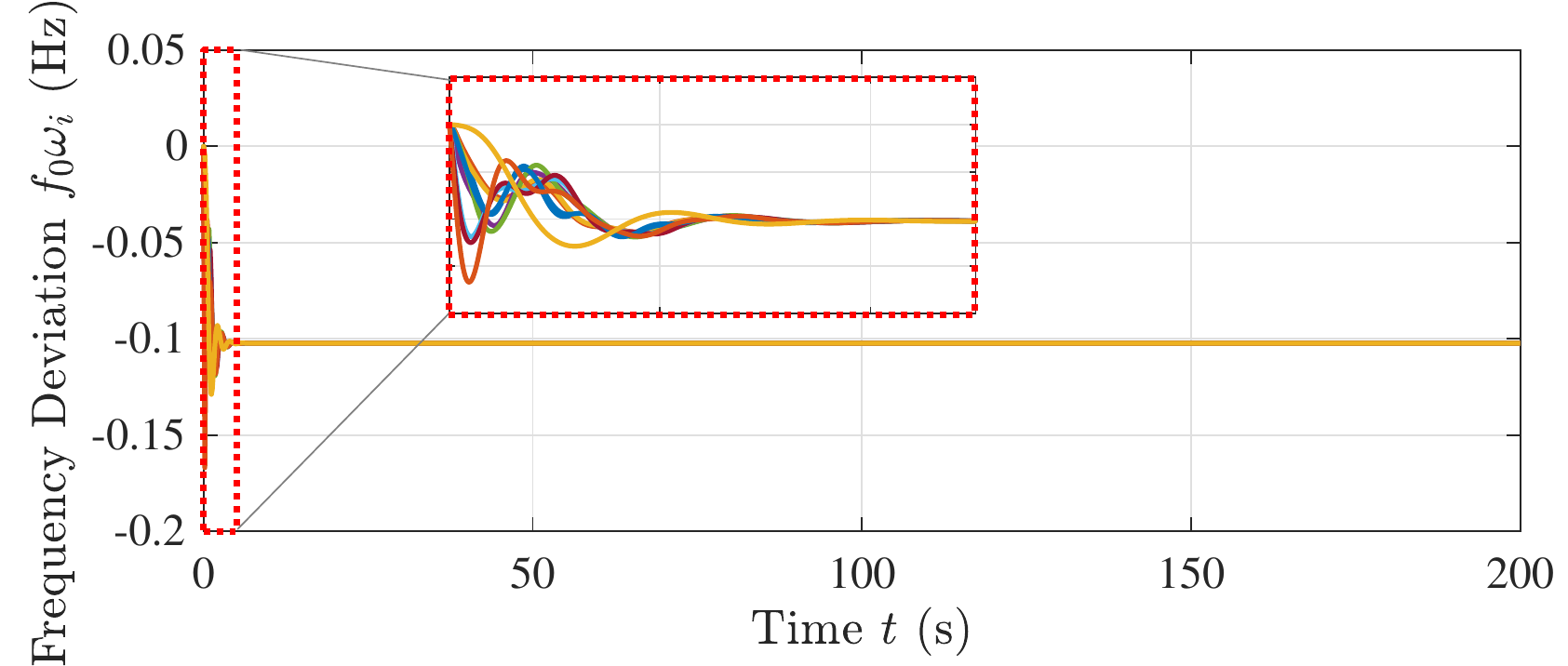}
\caption{Frequency deviations in the original system without additional frequency control $\boldsymbol{u}$ when a $\SI{-3}{\pu}$ step change in power injection is introduced to buses $13$, $21$, and $27$.}
\label{fig:sw-simu}
\end{figure}

\begin{figure}[t!]
\centering
\subfigure[Frequency deviations on generator buses]
{\includegraphics[width=0.96\columnwidth]{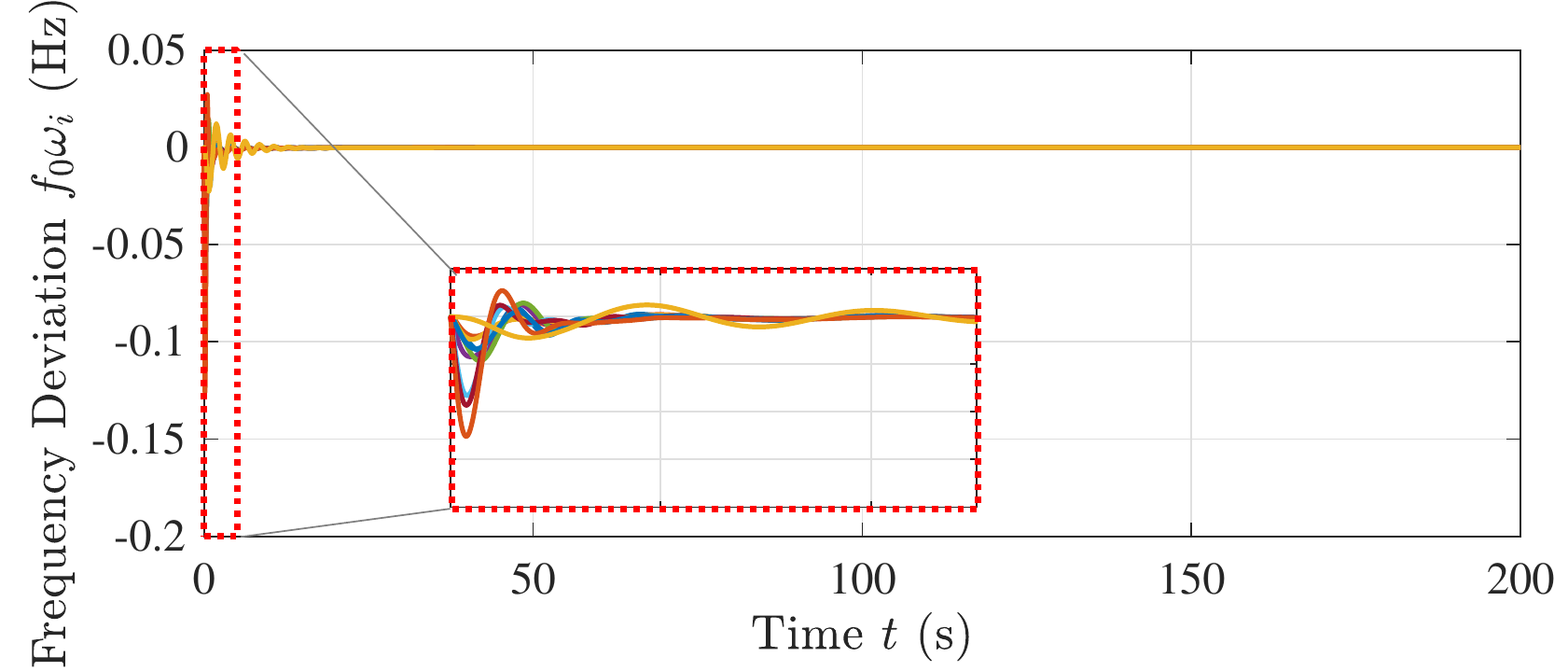}\label{fig:RLDAI-fre}}
%\hfil
\subfigure[Integral variables]
{\includegraphics[width=0.96\columnwidth]{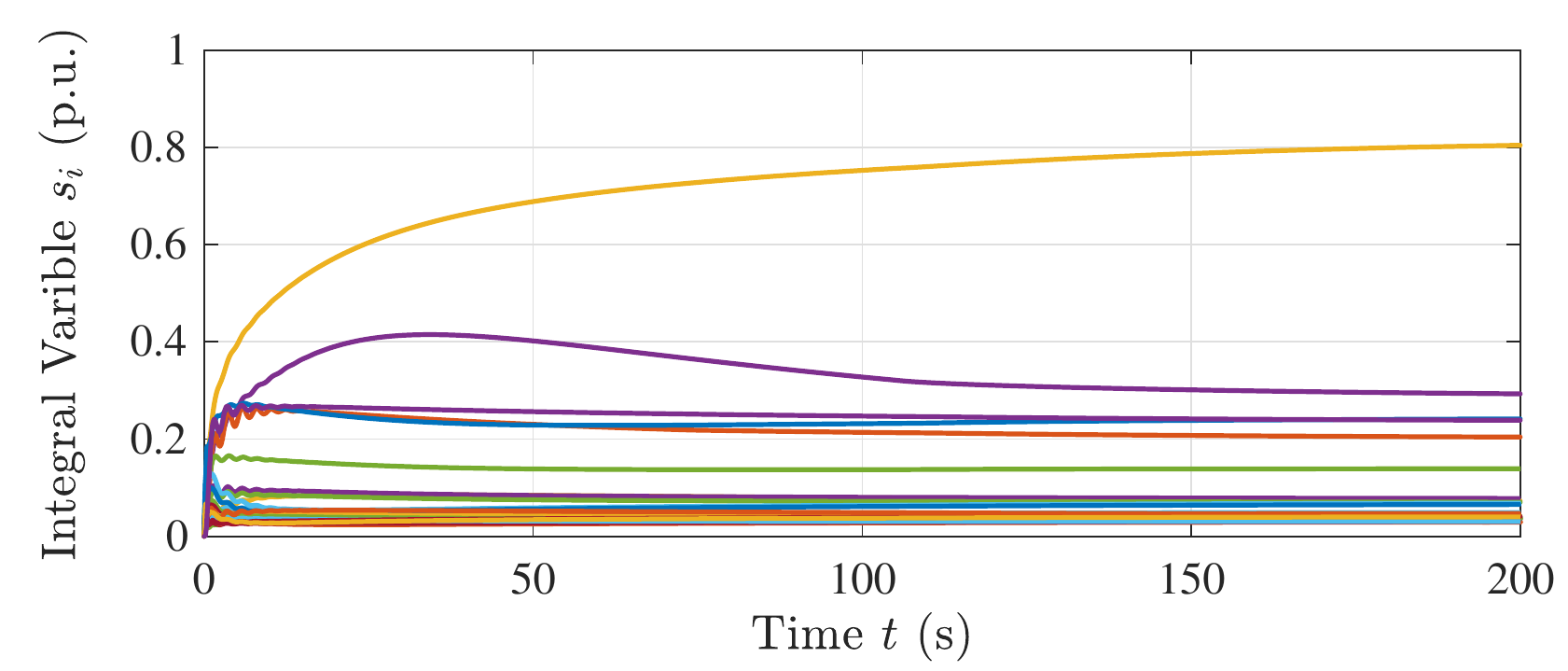}\label{fig:s_RL-DAI}}
% %\hfil
\subfigure[Controllable power injections]
{\includegraphics[width=0.96\columnwidth]{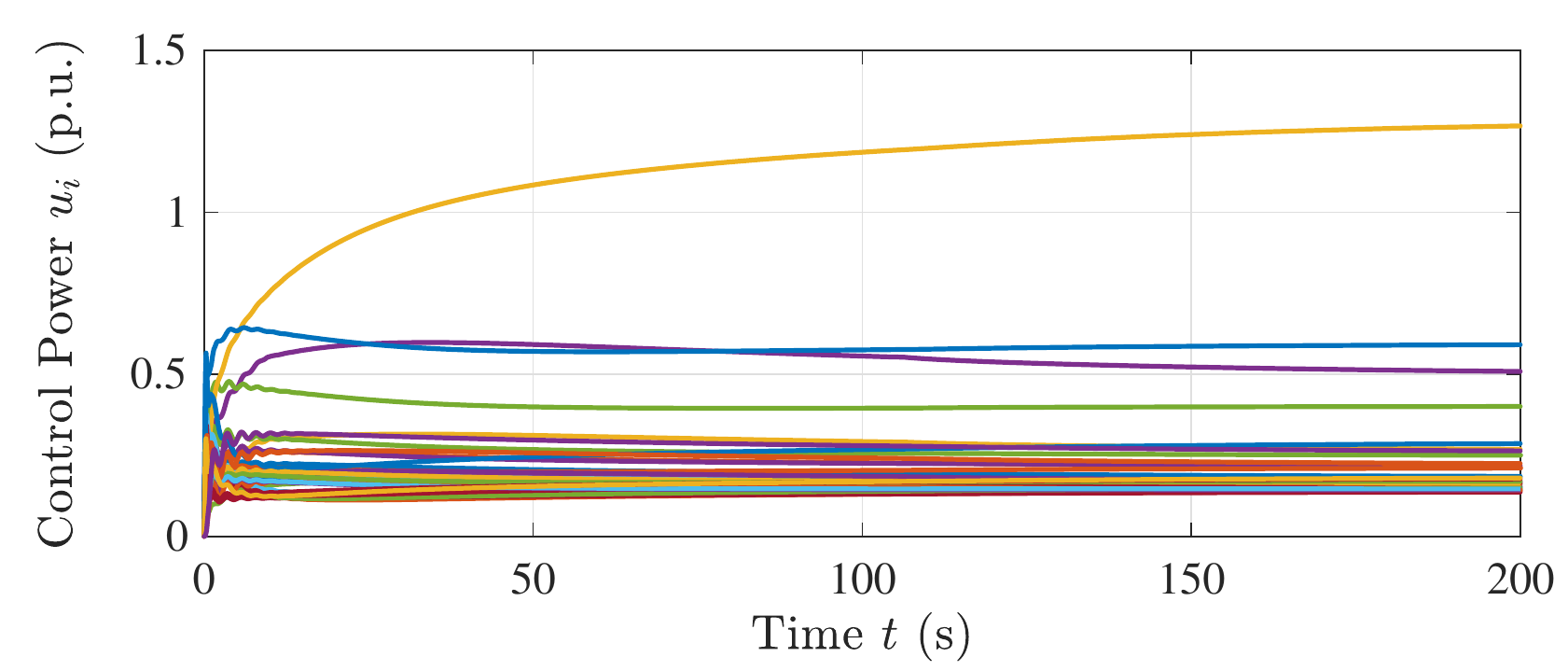}\label{fig:u_RL-DAI}}
%\hfil
\subfigure[Marginal costs]
{\includegraphics[width=0.96\columnwidth]{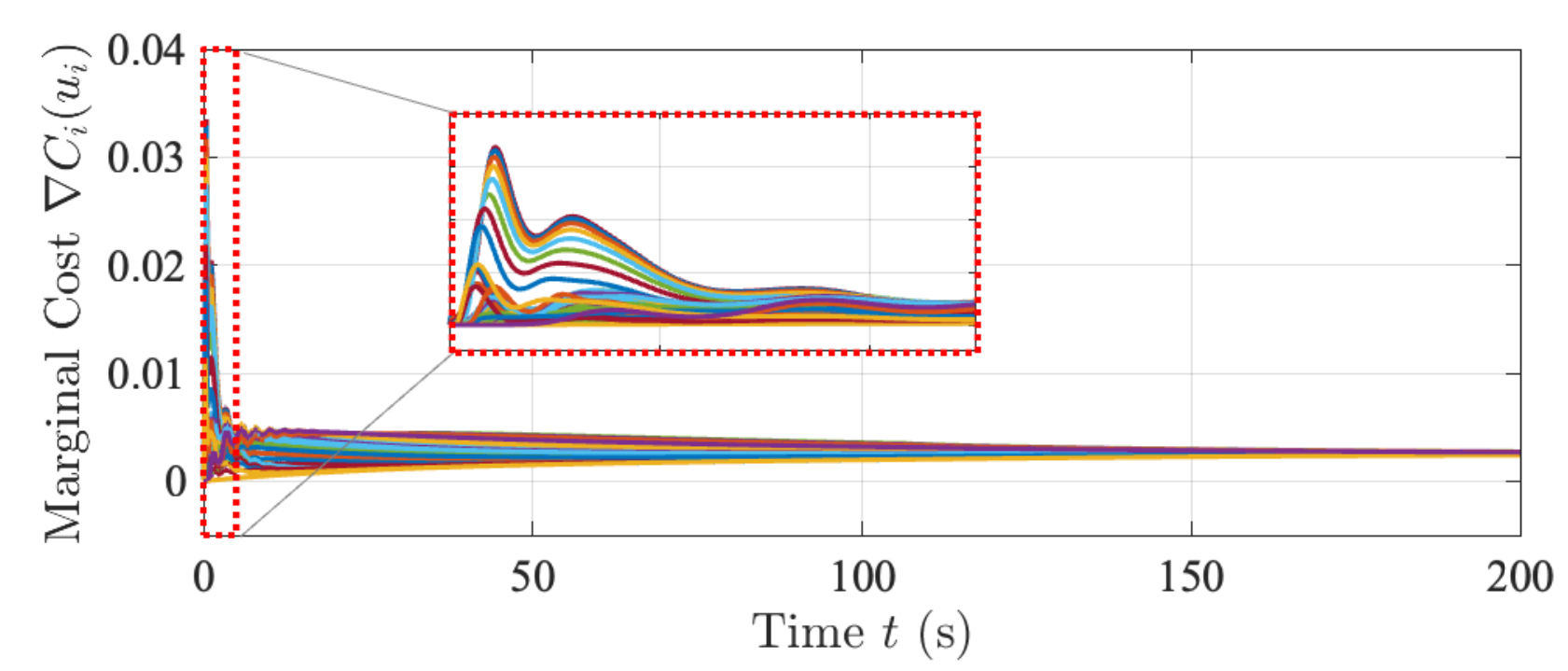}\label{fig:mc_RL-DAI}}
%\hfil
\subfigure[Operational costs]
{\includegraphics[width=0.96\columnwidth]{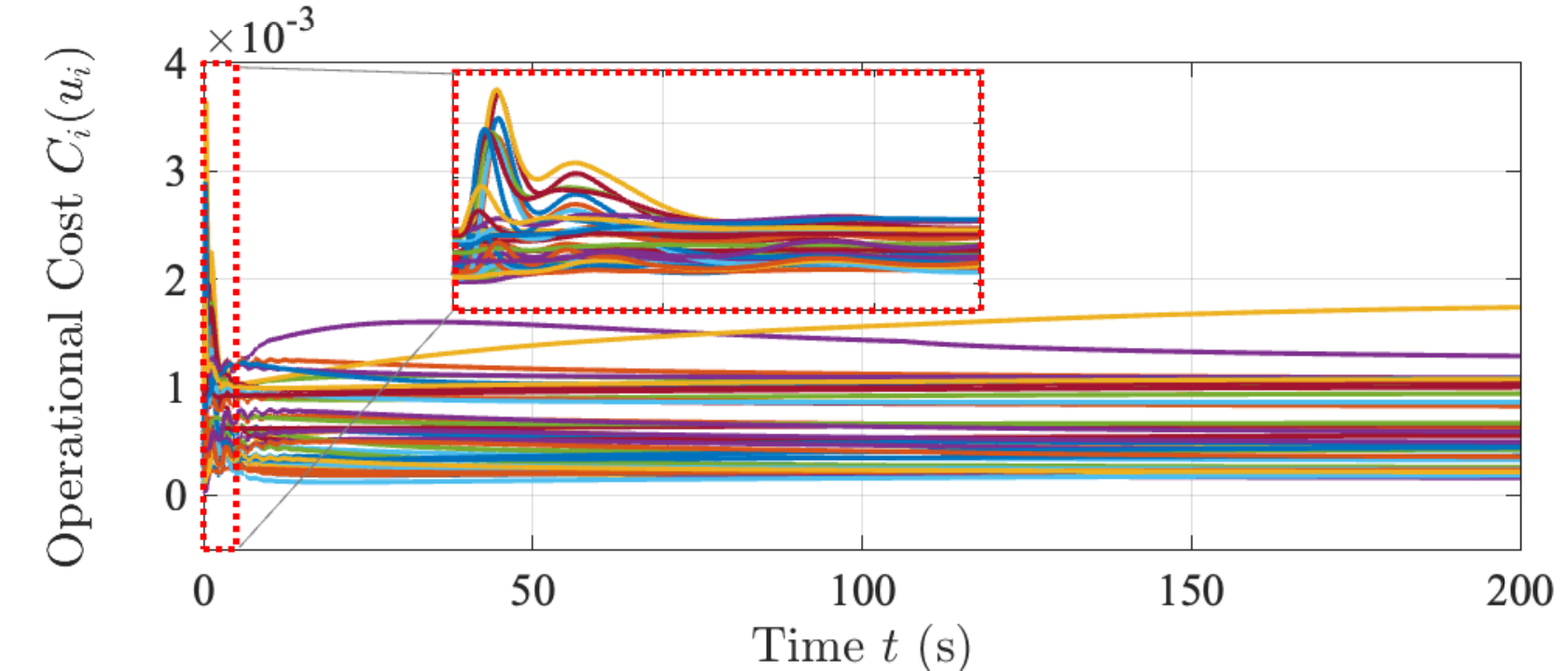}\label{fig:op-cost-RA-DAI}}
\caption{Performance of the system under RL-DAI when a $\SI{-3}{\pu}$ step change in power injection is introduced to buses $13$, $21$, and $27$.}
\label{fig:perform-RLDAI}
\end{figure}

\section{Conclusions}\label{sec:conclusion}
We have developed a framework that bridges the gap between the optimization of steady-state and transient frequency control performance. Our approach combines the study of the controller properties that make it stabilizing with reinforcement learning techniques to find the stable nonlinear controller that optimizes the transient frequency behavior. We have then integrated the learned controller into a distributed averaging-based integral mechanism that optimizes the steady-state cost of restoring frequency to their nominal values. Several interesting research directions are open. We would like to expand the characterization of the properties of the learned controller beyond its stabilizing nature to address questions about robustness to unmodeled dynamics. In addition, the controllers only require local information once they are implemented, but they are trained in a centralized manner. Understanding how they can be trained in a distributed way is also very relevant. Finally, in this paper, we have extended the class of cost functions from quadratic to functions that satisfy a certain scaling property. It would be interesting to see if distributed averaging integral control can be applied to an even larger class of cost functions. 

% \section*{Acknowledgments}
% This should be a simple paragraph before the References to thank those individuals and institutions who have supported your work on this article.

\appendix[]
\subsection{Proof of Lemma~\ref{lem:inverse-id}}
\label{app:lem1-pf}
$\forall y\in\real$, let $\tilde{u}_i=\nabla C_i^{-1}(y)$. Clearly, $y=\nabla C_i(\tilde{u}_i)$ by the definition of the inverse function. Also, under Assumption~\ref{ass-cost-grad-scale}, $\nabla C_i(\tilde{u}_i)=\nabla C_\mathrm{o}(\zeta_i \tilde{u}_i)$, $\forall i\in\mathcal{N}$. Thus, $y=\nabla C_i(\tilde{u}_i)=\nabla C_\mathrm{o}(\zeta_i \tilde{u}_i)$, which implies that $\zeta_i \tilde{u}_i=\nabla C_\mathrm{o}^{-1}(y)$. Therefore, we have $\zeta_i \nabla C_i^{-1}(y)=\zeta_i \tilde{u}_i=\nabla C_\mathrm{o}^{-1}(y)$, $\forall i\in\mathcal{N}$,
which concludes the proof.

\subsection{Proof of Lemma~\ref{lem:bound-state}}
\label{app:lem2-pf}
By a similar argument as in the proof of~\cite[Lemma~4]{weitenberg2018exponential}, we can easily get the first inequality and 
% \begin{align*}
%     \|\boldsymbol{\delta}\!-\!\boldsymbol{\delta}^\ast\|_2^2\!+\!\|\boldsymbol{\omega}_\mathcal{G}\|_2^2\!+\!\|\boldsymbol{s}\!-\!\boldsymbol{s}^\ast\|_2^2\!\leq\!\|\boldsymbol{\delta}\!-\!\boldsymbol{\delta}^\ast\|_2^2\!+\!\|\boldsymbol{\omega}\|_2^2\!+\!\|\boldsymbol{s}\!-\!\boldsymbol{s}^\ast\|_2^2
% \end{align*}
\begin{align}\label{eq:omega-l-bound-u}
    \|\boldsymbol{\omega}_\mathcal{L}\|_2\leq\tilde{\nu}\left(\|\boldsymbol{\delta}-\boldsymbol{\delta}^\ast\|_2+\|\boldsymbol{u}(\boldsymbol{s})-\boldsymbol{u}(\boldsymbol{s}^\ast)\|_2\right)
\end{align}
for some constant $\tilde{\nu}>0$. Under Assumption~\ref{ass:nonlinear-u}, each $u_i(s_i)$ is Lipschitz continuous, which indicates that there exists some constant $\eta>0$ such that, $\forall i \in\mathcal{N}$,  
\begin{align}\label{eq:Lip-u}
    |u_i(s_i)-u_i(s_i^\ast)|\leq\eta|s_i-s_i^\ast|\,,\qquad \forall s_i\in\real\,.
\end{align}
Combining \eqref{eq:omega-l-bound-u} and \eqref{eq:Lip-u} yields
\begin{align}\label{eq:omega-l-bound}
    \|\boldsymbol{\omega}_\mathcal{L}\|_2\leq&\tilde{\nu}\left(\|\boldsymbol{\delta}-\boldsymbol{\delta}^\ast\|_2+\eta\|\boldsymbol{s}-\boldsymbol{s}^\ast\|_2\right)\nonumber\\\leq&\tilde{\nu}\left(\eta+1\right)\left(\|\boldsymbol{\delta}-\boldsymbol{\delta}^\ast\|^2_2\!+\!\|\boldsymbol{\omega}_\mathcal{G}\|^2_2\!+\!\|\boldsymbol{s}\!-\!\boldsymbol{s}^\ast\|^2_2\right)^{\frac{1}{2}}\!\,.
\end{align}
Using \eqref{eq:omega-l-bound}, we can get
\begin{align*}
    &\ \|\boldsymbol{\delta}-\boldsymbol{\delta}^\ast\|_2^2+\|\boldsymbol{\omega}\|_2^2+\|\boldsymbol{s}-\boldsymbol{s}^\ast\|_2^2\\=&\ \|\boldsymbol{\delta}-\boldsymbol{\delta}^\ast\|_2^2+\|\boldsymbol{\omega}_\mathcal{G}\|_2^2+\|\boldsymbol{s}-\boldsymbol{s}^\ast\|_2^2+ \|\boldsymbol{\omega}_\mathcal{L}\|_2^2
    \\\leq&\left[1+\tilde{\nu}^2\left(\eta+1\right)^2\right]\left(\|\boldsymbol{\delta}-\boldsymbol{\delta}^\ast\|^2_2+\|\boldsymbol{\omega}_\mathcal{G}\|^2_2+\|\boldsymbol{s}-\boldsymbol{s}^\ast\|^2_2\right)\,,
\end{align*}
which concludes the proof with $\nu:=1+\tilde{\nu}^2\left(\eta+1\right)^2$.

\subsection{Proof of Lemma~\ref{lem:strict-conv-L}}
\label{app:lem3-pf}
Let $\boldsymbol{x}:=\left(x_i, i \in \mathcal{N} \right) \in \real^n$ and $\boldsymbol{y}:=\left(y_i, i \in \mathcal{N} \right)\in\real^n$. We note that, $\forall\boldsymbol{x}\neq\boldsymbol{y}$,
\begin{align}
    \left(\nabla L (\boldsymbol{x})-\nabla L (\boldsymbol{y})\right)^T\!\!\left(\boldsymbol{x}-\boldsymbol{y}\right)=&\left(\boldsymbol{u}(\boldsymbol{x})-\boldsymbol{u}(\boldsymbol{y})\right)^T\left(\boldsymbol{x}-\boldsymbol{y}\right)\nonumber\\=&\sum_{i=1}^n\!\left(u_i(x_i)-u_i(y_i)\right)\!\left(x_i-y_i\right)\nonumber\\>&\ 0\,,\label{eq:mono-gradL}
\end{align}
where the inequality results from the monotonicity of each $u_i(s_i)$ under Assumption~\ref{ass:nonlinear-u}. Thus, by~\cite[Theorem~2.14]{rockafellar1998}, \eqref{eq:mono-gradL} indicates the strict convexity of $L(\boldsymbol{s})$.

\subsection{Proof of Lemma~\ref{lem:W}}\label{app:lem4-pf}
We can check that $W(\boldsymbol{\delta}, \boldsymbol{\omega}_\mathcal{G}, \boldsymbol{s})$ only vanishes at the equilibrium term by term. Clearly, the kinetic energy term $\pi f_0 \boldsymbol{\omega}_\mathcal{G}^T\boldsymbol{M}\boldsymbol{\omega}_\mathcal{G}>0$, $\forall \boldsymbol{\omega}_\mathcal{G}\neq \mathbbold{0}_{|\mathcal{G}|}$. By~\cite[Lemma~4]{weitenberg2018exponential}, the potential energy term $U(\boldsymbol{\delta})-U(\boldsymbol{\delta}^\ast)-\nabla U (\boldsymbol{\delta}^\ast)^T\left(\boldsymbol{\delta}-\boldsymbol{\delta}^\ast\right)\geq \beta\left\|\boldsymbol{\delta}-\boldsymbol{\delta}^{\ast}\right\|_2^2>0$, $\forall \boldsymbol{\delta}\neq \boldsymbol{\delta}^\ast$ satisfying $|\delta_i-\delta_j|\in [0,\pi/2)$, $\forall \{i,j\}\in \mathcal{E}$, for some constants $\beta>0$. By~\cite[Theorem~2.14]{rockafellar1998}, the strict convexity of $L(\boldsymbol{s})$ shown in Lemma~\ref{lem:strict-conv-L} ensures that $L(\boldsymbol{s})-L(\boldsymbol{s^\ast})-\nabla L (\boldsymbol{s^\ast})^T\left(\boldsymbol{s}-\boldsymbol{s^\ast}\right)>0$, $\forall \boldsymbol{s}\neq \boldsymbol{s}^\ast$. It follows directly that $W(\boldsymbol{\delta}^\ast, \mathbbold{0}_{|\mathcal{G}|}, \boldsymbol{s}^\ast)=0$ and $W(\boldsymbol{\delta}, \boldsymbol{\omega}_\mathcal{G}, \boldsymbol{s})>0, \forall(\boldsymbol{\delta}, \boldsymbol{\omega}_\mathcal{G}, \boldsymbol{s})\in \mathcal{D}\setminus{(\boldsymbol{\delta}^\ast, \mathbbold{0}_{|\mathcal{G}|}, \boldsymbol{s}^\ast)}$.

\subsection{Proof of Lemma~\ref{lem:Wdot}}\label{app:lem5-pf}
We calculate the derivative of $W(\boldsymbol{\delta}, \boldsymbol{\omega}_\mathcal{G}, \boldsymbol{s})$ along the trajectories of the closed-loop system \eqref{eq:sys-dyn-vec-distribute} as
\begin{align*}
&\dot{W}(\boldsymbol{\delta}, \boldsymbol{\omega}_\mathcal{G}, \boldsymbol{s})\\=&\begin{bmatrix}2\pi f_0 \boldsymbol{M}\boldsymbol{\omega}_\mathcal{G}\\\nabla U (\boldsymbol{\delta})-\nabla U (\boldsymbol{\delta}^\ast)\\\nabla L (\boldsymbol{s})-\nabla L  (\boldsymbol{s}^\ast)
\end{bmatrix}^T\begin{bmatrix}
\dot{\boldsymbol{\omega}}_\mathcal{G}\\\dot{\boldsymbol{\delta}}\\\dot{\boldsymbol{s}}
\end{bmatrix}\nonumber\\
\stackrel{\circled{1}}{=}&\ 2\pi f_0 \boldsymbol{\omega}_\mathcal{G}^T\left(-
    \boldsymbol{A}_\mathcal{G} \boldsymbol{\omega}_\mathcal{G}- \nabla_\mathcal{G} U (\boldsymbol{\delta}) + \boldsymbol{p}_\mathcal{G} + \boldsymbol{u}_\mathcal{G}(\boldsymbol{s})\right) \nonumber\\&+2\pi f_0\left(\nabla U (\boldsymbol{\delta})-\nabla U (\boldsymbol{\delta}^\ast)\right)^T\left(I_n-\frac{1}{n}\mathbbold{1}_n\mathbbold{1}_n^T\right)\boldsymbol{\omega}\nonumber\\&+ \left(\boldsymbol{u}(\boldsymbol{s})-\boldsymbol{u}(\boldsymbol{s}^\ast)\right)^T\left(-2 \pi f_0 \boldsymbol{\omega} - \boldsymbol{Z}\boldsymbol{L}_\mathrm{Q}\nabla C(\boldsymbol{u}(\boldsymbol{s}))\right)\nonumber\\
&-2\pi f_0 \boldsymbol{\omega}_\mathcal{G}^T\left(- \nabla_\mathcal{G} U (\boldsymbol{\delta}^\ast) + \boldsymbol{p}_\mathcal{G} + \boldsymbol{u}_\mathcal{G}(\boldsymbol{s}^\ast)\right)\nonumber\\
\stackrel{\circled{2}}{=}&-2\pi f_0 \boldsymbol{\omega}_\mathcal{G}^T\boldsymbol{A}_\mathcal{G} \boldsymbol{\omega}_\mathcal{G}- \left(\boldsymbol{u}(\boldsymbol{s})-\boldsymbol{u}(\boldsymbol{s}^\ast)\right)^T \boldsymbol{Z}\boldsymbol{L}_\mathrm{Q}\nabla C(\boldsymbol{u}(\boldsymbol{s}))\nonumber\\
    &-2\pi f_0 \left(\nabla_\mathcal{L} U (\boldsymbol{\delta}^\ast)-\nabla_\mathcal{L} U (\boldsymbol{\delta})+\boldsymbol{u}_\mathcal{L}(\boldsymbol{s})-\boldsymbol{u}_\mathcal{L}(\boldsymbol{s}^\ast)\right)^T\boldsymbol{\omega}_\mathcal{L}\nonumber\\
\stackrel{\circled{3}}{=}&-2\pi f_0 \boldsymbol{\omega}_\mathcal{G}^T\boldsymbol{A}_\mathcal{G} \boldsymbol{\omega}_\mathcal{G}- \boldsymbol{u}(\boldsymbol{s})^T \boldsymbol{Z}\boldsymbol{L}_\mathrm{Q}\nabla C(\boldsymbol{u}(\boldsymbol{s}))\nonumber\\
    &-2\pi f_0 \left(\nabla_\mathcal{L} U (\boldsymbol{\delta}^\ast)-\nabla_\mathcal{L} U (\boldsymbol{\delta})+\boldsymbol{u}_\mathcal{L}(\boldsymbol{s})-\boldsymbol{u}_\mathcal{L}(\boldsymbol{s}^\ast)\right)^T\boldsymbol{A}_\mathcal{L}^{-1}\nonumber\\&\quad\cdot\!\left(- \nabla_\mathcal{L} U (\boldsymbol{\delta}) + \boldsymbol{p}_\mathcal{L} + \boldsymbol{u}_\mathcal{L}(\boldsymbol{s})\right)\nonumber\\
\stackrel{\circled{4}}{=}&-2\pi f_0 \boldsymbol{\omega}_\mathcal{G}^T\boldsymbol{A}_\mathcal{G} \boldsymbol{\omega}_\mathcal{G}- \boldsymbol{u}(\boldsymbol{s})^T \boldsymbol{Z}\boldsymbol{L}_\mathrm{Q}\nabla C(\boldsymbol{u}(\boldsymbol{s}))\nonumber\\
    &-2\pi f_0 \left(\nabla_\mathcal{L} U (\boldsymbol{\delta}^\ast)-\nabla_\mathcal{L} U (\boldsymbol{\delta})+\boldsymbol{u}_\mathcal{L}(\boldsymbol{s})-\boldsymbol{u}_\mathcal{L}(\boldsymbol{s}^\ast)\right)^T\boldsymbol{A}_\mathcal{L}^{-1}\nonumber\\&\quad\cdot\!\left[- \!\nabla_\mathcal{L} U (\boldsymbol{\delta}) \!+\! \boldsymbol{p}_\mathcal{L} \!+\! \boldsymbol{u}_\mathcal{L}(\boldsymbol{s})\!-\!\left(- \!\nabla_\mathcal{L} U (\boldsymbol{\delta}^\ast) \!+\! \boldsymbol{p}_\mathcal{L} \!+\! \boldsymbol{u}_\mathcal{L}(\boldsymbol{s}^\ast)\right)\right]\nonumber\\
    =&-2\pi f_0 \boldsymbol{\omega}_\mathcal{G}^T\boldsymbol{A}_\mathcal{G} \boldsymbol{\omega}_\mathcal{G}- \boldsymbol{u}(\boldsymbol{s})^T \boldsymbol{Z}\boldsymbol{L}_\mathrm{Q}\nabla C(\boldsymbol{u}(\boldsymbol{s}))\nonumber\\
    &-2\pi f_0 \left(\nabla_\mathcal{L} U (\boldsymbol{\delta}^\ast)-\nabla_\mathcal{L} U (\boldsymbol{\delta})+\boldsymbol{u}_\mathcal{L}(\boldsymbol{s})-\boldsymbol{u}_\mathcal{L}(\boldsymbol{s}^\ast)\right)^T\boldsymbol{A}_\mathcal{L}^{-1}\nonumber\\&\quad\cdot\!\left(\nabla_\mathcal{L} U (\boldsymbol{\delta}^\ast)- \nabla_\mathcal{L} U (\boldsymbol{\delta})  + \boldsymbol{u}_\mathcal{L}(\boldsymbol{s})\!-\boldsymbol{u}_\mathcal{L}(\boldsymbol{s}^\ast)\right)\nonumber\,,
\end{align*}
which is exactly \eqref{eq:Wdot-distribute}.
Here, some tricks are used to add or remove terms for constructing a quadratic format without affecting the original value of  $\dot{W}(\boldsymbol{\delta}, \boldsymbol{\omega}_\mathcal{G}, \boldsymbol{s})$. In \circled{1}, the property that $- \nabla_\mathcal{G} U (\boldsymbol{\delta}^\ast) + \boldsymbol{p}_\mathcal{G} + \boldsymbol{u}_\mathcal{G}(\boldsymbol{s}^\ast)=\mathbbold{0}_{|\mathcal{G}|}$ by \eqref{eq:equ-DAI-delta} of Theorem~\ref{thm:equilibrium} is used. In \circled{2}, the fact that 
 $\nabla U (\boldsymbol{\delta})^T\mathbbold{1}_n=0$ is used. In \circled{3}, the fact that $\boldsymbol{\omega}_\mathcal{L}$ is determined by $\boldsymbol{\delta}$ and $\boldsymbol{s}$ through the algebraic equation \eqref{eq:sys-dyn-vec-distribute-omegaL} and the property that $\boldsymbol{u}(\boldsymbol{s}^\ast)^T \boldsymbol{Z}\boldsymbol{L}_\mathrm{Q}=\nabla C_\mathrm{o}^{-1}(\gamma)\mathbbold{1}_n^T\boldsymbol{Z}^{-1}\boldsymbol{Z}\boldsymbol{L}_\mathrm{Q}=\nabla C_\mathrm{o}^{-1}(\gamma)\mathbbold{1}_n^T\boldsymbol{L}_\mathrm{Q}= \mathbbold{0}_n^T$ obtained through \eqref{eq:equ-DAI-u} of Theorem~\ref{thm:equilibrium} are used. In \circled{4}, the property that $- \nabla_\mathcal{L} U (\boldsymbol{\delta}^\ast) + \boldsymbol{p}_\mathcal{L} + \boldsymbol{u}_\mathcal{L}(\boldsymbol{s}^\ast)=\mathbbold{0}_{|\mathcal{L}|}$ by \eqref{eq:equ-DAI-delta} of Theorem~\ref{thm:equilibrium} is used.

\subsection{Proof of Lemma~\ref{lem:bilinear-L}}\label{app:lem6-pf} 
This is a direct extension of the standard Laplacian potential function~\cite{FB-LNS}:
\begin{align*}
&\boldsymbol{x}^T \boldsymbol{Z}\boldsymbol{L}_\mathrm{Q} \boldsymbol{y}\\=& \sum_{i=1}^n \zeta_i x_i\left(\sum_{j=1}^n L_{\mathrm{Q},ij} y_j\right)\\=&\sum_{i=1}^n \zeta_i x_i\left(L_{\mathrm{Q},ii} y_i+\!\sum_{j=1,j\neq i}^n \!\!\!L_{\mathrm{Q},ij} y_j\right)\nonumber
\end{align*}
Expanding the sum, we get
\begin{align*}
&\sum_{i=1}^n \zeta_i x_i\left(\sum_{j=1,j\neq i}^n\!\!\! Q_{ij} y_i-\sum_{j=1,j\neq i}^n\!\!\! Q_{ij} y_j\right)\nonumber\\=&\sum_{i=1}^n \sum_{j=1,j\neq i}^n Q_{ij}\zeta_i x_i\left( y_i  - y_j \right)
%=\sum_{i=1}^n \sum_{j=1}^n Q_{ij}\zeta_i x_i\left( y_i  - y_j \right)
\nonumber\\=&\sum_{i=1}^n \sum_{j=1}^n \dfrac{Q_{ij}}{2}\zeta_i x_i\left( y_i  - y_j \right)+\sum_{i=1}^n \sum_{j=1}^n \dfrac{Q_{ij}}{2}\zeta_i x_i\left( y_i  - y_j \right)\nonumber\\=&\sum_{i=1}^n \sum_{j=1}^n \dfrac{Q_{ij}}{2}\zeta_i x_i\left( y_i  - y_j \right)+\sum_{i=1}^n \sum_{j=1}^n \dfrac{Q_{ji}}{2}\zeta_j x_j\left( y_j  - y_i \right)\nonumber\\\stackrel{\circled{1}}{=}&\sum_{i=1}^n \sum_{j=1}^n \dfrac{Q_{ij}}{2}\zeta_i x_i\left( y_i  - y_j \right)+\sum_{i=1}^n \sum_{j=1}^n \dfrac{Q_{ij}}{2}\zeta_j x_j\left( y_j  - y_i \right)\nonumber\\=&\sum_{i=1}^n \sum_{j=1}^n \dfrac{Q_{ij}}{2}\left( y_i  - y_j \right)\left(\zeta_i x_i-\zeta_j x_j\right)\\=&\!\sum_{\{i,j\} \in\mathcal{E}_\mathrm{Q}}\!\!\!Q_{ij}\left( y_i  - y_j \right)\left(\zeta_i x_i-\zeta_j x_j\right)\,,
\end{align*}
where \circled{1} uses the symmetry of the Laplacian matrix $\boldsymbol{L}_\mathrm{Q}$.

\subsection{Proof of Corollary~\ref{cor:sign-cross}}\label{app:cor2-pf}
It follows directly from Lemma~\ref{lem:bilinear-L} by setting $\boldsymbol{x}=\boldsymbol{u}(\boldsymbol{s})$ and $\boldsymbol{y}=\nabla C(\boldsymbol{u}(\boldsymbol{s}))$ that
\begin{align}
    &\boldsymbol{u}(\boldsymbol{s})^T \boldsymbol{Z}\boldsymbol{L}_\mathrm{Q}\nabla C(\boldsymbol{u}(\boldsymbol{s}))\label{eq:cross-expan}\\=&\!\!\!\sum_{\{i,j\} \in\mathcal{E}_\mathrm{Q}}\!\!\!\!\!\!Q_{ij}\!\left[ \nabla C_i(u_i(s_i)) \! -\! \nabla C_j(u_j(s_j)) \right]\left[\zeta_i u_i(s_i)\!-\!\zeta_j u_j(s_j)\right]\!\,.\nonumber
\end{align}
We claim that, $\forall i,j \in\mathcal{N}$, there is parity between the signs of $\left(\nabla C_i(u_i(s_i)) \! -\! \nabla C_j(u_j(s_j))\right)$ and $\left(\zeta_i u_i(s_i)\!-\!\zeta_j u_j(s_j)\right)$, i.e.,
\begin{align}\label{eq:sign-same}
   \sign{\left( \nabla C_i(u_i)\!-\!\nabla C_j(u_j)\right)}=\sign{\left(\zeta_iu_i-\zeta_j u_j\right)}\,,
\end{align}
where $\sign$ denotes the sign function.
To see this, without loss of generality, $\forall i,j \in\mathcal{N}$, we only consider the case where $\nabla C_i(u_i)-\nabla C_j(u_j)>0$. Since $\nabla C_i(u_i)>\nabla C_j(u_j)$, we have 
\begin{align*}
\zeta_i u_i&=\zeta_i\nabla C_i^{-1}(\nabla C_i(u_i))>\zeta_i\nabla C_i^{-1}(\nabla C_j(u_j)) \\ &\stackrel{\circled{1}}{=}\nabla C_\mathrm{o}^{-1}(\nabla C_j(u_j)) \stackrel{\circled{2}}{=}\zeta_j\nabla C_j^{-1}(\nabla C_j(u_j))=\zeta_j u_j\,.
\end{align*}
Here, the inequality results from the fact that $\zeta_i>0$ and $\nabla C_i^{-1}(\cdot)$ is strictly increasing as mentioned in Remark~\ref{rem:inverse}. By Lemma~\ref{lem:inverse-id}, \circled{1} and \circled{2} hold. The cases where $\nabla C_i(u_i)-\nabla C_j(u_j)\leq0$ follow from an analogous argument. Therefore, our claim that \eqref{eq:sign-same} holds is true, which together with the fact that $Q_{ij}>0$, $\forall\{i,j\} \in\mathcal{E}_\mathrm{Q}$, implies that \eqref{eq:cross-expan} is nonnegative.
% It follows directly from Lemma~\ref{lem:bilinear-L} by setting $\boldsymbol{x}=\boldsymbol{u}(\boldsymbol{s})$ and $\boldsymbol{y}=\nabla C(\boldsymbol{u}(\boldsymbol{s}))$ that
% \begin{align*}
%     &\boldsymbol{u}(\boldsymbol{s})^T \boldsymbol{Z}\boldsymbol{L}_\mathrm{Q}\nabla C(\boldsymbol{u}(\boldsymbol{s}))\\=&\!\!\sum_{\{i,j\} \in\mathcal{E}_\mathrm{Q}}\!\!\!\!\!Q_{ij}\left[ \nabla C_i(u_i(s_i)) \! -\! \nabla C_j(u_j(s_j)) \right]\left[\zeta_i u_i(s_i)\!-\!\zeta_j u_j(s_j)\right]\\\geq&\ 0\,,
% \end{align*}
% where the inequality is due to Corollary~\ref{cor:sign} and $Q_{ij}>0$, $\forall\{i,j\} \in\mathcal{E}_\mathrm{Q}$. 
Notably, since each term of the sum in \eqref{eq:cross-expan}
is nonnegative, as the sum vanishes, each term must be $0$. Thus, $\boldsymbol{u}(\boldsymbol{s})^T \boldsymbol{Z}\boldsymbol{L}_\mathrm{Q}\nabla C(\boldsymbol{u}(\boldsymbol{s}))=0$ if and only if, $\forall\{i,j\} \in\mathcal{E}_\mathrm{Q}$,
$$\left[ \nabla C_i(u_i(s_i)) - \nabla C_j(u_j(s_j)) \right]\left[\zeta_i u_i(s_i)-\zeta_j u_j(s_j)\right]=0\,,$$
which is equivalent to $\nabla C_i(u_i(s_i))=\nabla C_j(u_j(s_j))$, $\forall\{i,j\} \in\mathcal{E}_\mathrm{Q}$ by \eqref{eq:sign-same}.
Recall that the communication graph $\left(\mathcal{V},\mathcal{E}_\mathrm{Q} \right)$ is assumed to be connected, which implies that $\nabla C_1(u_1(s_1))=\ldots=\nabla C_n(u_n(s_n))$, i.e., $\nabla C(\boldsymbol{u}(\boldsymbol{s}))\in\range{\mathbbold{1}_n}$.
 
%{\appendices
%\section*{Proof of the First Zonklar Equation}
%Appendix one text goes here.
% You can choose not to have a title for an appendix if you want by leaving the argument blank
%\section*{Proof of the Second Zonklar Equation}
%Appendix two text goes here.}

%
%\marginJC{Check out bibliographical info: E.g., I'm sure [18] has more than 1 page}
%

\bibliographystyle{IEEEtran}
\bibliography{main}

% \newpage

%\section{Biography Section}
% If you have an EPS/PDF photo (graphicx package needed), extra braces are needed around the contents of the optional argument to biography to prevent the LaTeX parser from getting confused when it sees the complicated $\backslash${\tt{includegraphics}} command within an optional argument. (You can create your own custom macro containing the $\backslash${\tt{includegraphics}} command to make things  simpler here.)
 
%\vspace{11pt}

%\bf{If you include a photo:}\vspace{-33pt}
%\begin{IEEEbiography}[{\includegraphics[width=1in,height=%1.25in,clip,keepaspectratio]{}}]{Michael Shell} Use $\backslash${\tt{begin\{IEEEbiography\}}} and then for the 1st argument use $\backslash${\tt{includegraphics}} to declare and link the author photo. Use the author name as the 3rd argument followed by the biography text.
%\end{IEEEbiography}

%\vspace{11pt}

%\bf{If you will not include a photo:}\vspace{-33pt}
%\begin{IEEEbiographynophoto}{John Doe}
% Use $\backslash${\tt{begin\{IEEEbiographynophoto\}}} and the author name as the argument followed by the biography text.
%\end{IEEEbiographynophoto}

% \vfill

\end{document}